\documentclass[11pt]{article}
\usepackage[utf8]{inputenc}

\title{Ranking with Partitioning}
\author{Samuel Boardman \\ 
        University of Michigan \\ 
        \texttt{stboard@umich.edu}}

\usepackage[letterpaper, margin=1.0in]{geometry}
\usepackage{hyperref}
\usepackage{amsmath}
\usepackage{amssymb}
\usepackage{amsthm}
\newtheorem{theorem}{Theorem}[section]
\newtheorem{corollary}[theorem]{Corollary}
\newtheorem{lemma}[theorem]{Lemma}
\newtheorem{definition}[theorem]{Definition}

\usepackage{algorithm}
\usepackage{algpseudocode}
\usepackage{graphicx}
\usepackage{tikz}
\usetikzlibrary{positioning,chains,fit,shapes,calc}
\usetikzlibrary{decorations.pathreplacing}
\usepackage{adjustbox}
\usepackage{pdflscape}
\usepackage[edges]{forest}
\usepackage{dirtree}
\usepackage{array}
\usepackage[T1]{fontenc}
\usepackage{caption}
\usepackage{subcaption}
\usepackage{float}
\usepackage[style=numeric,sorting=none]{biblatex}
\newlength{\treeindent}
\newcommand{\treeitem}[2]{{\leftskip=#1\treeindent \noindent #2\par}}

\makeatletter
\newcommand*\widebar[1]{%
  \hbox{%
    \vbox{%
      \hrule height 0.5pt 
      \kern0.5ex
      \hbox{%
        \kern-0.1em
        \ensuremath{#1}%
        \kern-0.1em
      }%
    }%
  }%
}
\makeatother

\begin{document}

\maketitle

\begin{abstract} \noindent
Given an undirected graph representing similarities between a set of items and an additive measure evaluating the items, we treat the position of a special subset of items in an ordinal ranking through a collection of combinatorial optimization problems in which items may be combined if they are similar. The objective for these problems is to either maximize or minimize the absolute or relative rank of the special subset, with a meta-goal of assessing the robustness of the rank, even in the presence of a well-defined criterion. We classify the computational complexity of all four problems, mostly finding worst-case hardness, then find exact and approximate solutions to special cases and variants of the problems. These structured cases are inspired by several real-world examples and may be used to assess commonly cited facts across disparate domains, as we demonstrate for sources of greenhouse gas emissions that contribute to climate change. 
\end{abstract}

\section{Introduction}

Rankings play a central role in how information is summarized and communicated in the Information Age. Online platforms routinely rank cities by livability, universities by academic quality, and pieces of media by popularity or quality. Outside of explicitly digital settings, rankings induce metrics that are frequently cited as evidence in policy debates, journalism, and scientific communication. \\

Despite generally being presented as plain facts in the latter contexts, many such rankings are constructed from entities that admit multiple reasonable levels of aggregation. Cities may be grouped into metropolitan areas or countries, food ingredients into the byproducts formed when following a recipe, and television episodes into seasons or series. These aggregations are typically justified by some notion of similarity, yet allow for many degrees of freedom in general, even under domain-specific constraints. As a consequence, the position of a particular item or collection of items in a ranking may depend not only on the underlying measure being applied, but also on how similar items are combined or kept separate. \\

This observation raises a basic robustness question: to what extent is the rank of a given item intrinsic to the measure being used, and to what extent can it vary under different groupings by similar items? \\

In the present paper, we formalize this question using a framework in which items are represented as vertices of a graph encoding similarity, admissible aggregations correspond to connected subsets of vertices, and rankings arise from additive measures evaluated on these subsets. Within this framework, we study optimization problems that minimize or maximize the rank, or a normalized version called rank percentile, of a designated subset, thereby quantifying the range of rankings consistent with a fixed underlying metric. Even when historical and legal considerations preclude any meaningful possibility of regrouping, say, land into cities, such considerations are often the result of arbitrary but solidified decisions. Studying all possible partitions enables us to assess counterfactual developments, providing one angle to decide which sociological facts are largely a matter of the definitions and boundaries chosen. \\

This paper is organized as follows. Section \ref{general_case} presents the general formulation of the optimization problems, formalizing our motivating examples involving cities and episodes. As three of the four problems stated are {\sc NP}-hard, Section \ref{special_cases} introduces simplifications to our model that arise naturally when modeling several partitioning contexts, including the aforementioned episode example, to make these problems more tractable, yielding polynomial-time optimal solutions and approximation algorithms. Section \ref{variants} adjusts the set-up of our problems as needed to describe related aggregation problems that may not be realized in the framework of Section \ref{general_case}, particularly in the presence of hierarchical categories as in ingredient lists. Our results in Section \ref{variants} also facilitate the application of our methods to study different sources of greenhouse gas emissions in the United States in Section \ref{numerical_experiments}, revealing that the relative contribution of many sources depends heavily on our choice of aggregation, even though these ordinal ranking are frequently cited without elaboration. Section \ref{future_work} anticipates future directions that would make our theoretical results and real-world applications more comprehensive. \\ 

While this paper focuses on the new combinatorial optimization problems motivated and introduced above, they enjoy several special cases and variants that are equivalent to well-studied set packing and graph problems, enabling us to better understand the computational complexity and (in)approximability of these alterations. Maximizing the number of vertices selected from intersection and overlap graphs, particularly rectangle intersection graphs, has been shown to be computationally challenging; early hardness results were established by Rim and Nakajima \cite{mdshardness}, and more recent work has extended these limitations to closely related graph classes such as CPG and EPG graphs \cite{integralrectanglehardness}. Despite these barriers, approximation algorithms have been developed for related problems, including the $(2+\epsilon)$-approximation for maximum independent set of rectangles due to Gálvez et al. \cite{mdsapprox}. The $k$-set packing problem, which arises as an extremal case of partitioning in the present paper, is also hard, but has been approached via large-neighborhood local search techniques by Sviridenko and Ward \cite{localsearchapprox}. Structural graph theory further illuminates settings where structure simplifies our problems, as in the characterization of Hamiltonian properties in abelian group graphs by Chen and Quimpo \cite{hampathcirculant}.

\section{The General Case of the Problem} \label{general_case}

Consider $n$ indivisible elements $V = \{v_1,\dots,v_n\}$ that have some interpretation as atomic entities in some category, such as a city or an episode of a show. There is a binary relation $E \subset V \times V$ representing whether two entities are similar and may be combined, as in the cities forming a land mass (e.g., a country) or a series of consecutive episodes of a show (e.g., a season). The relation $E$ is reflexive and symmetric. As suggested by this naming, we may identify such a tuple $(V,E)$ with an undirected (simple) graph $G = (V,E)$, using context to distinguish between the interpretations. \\ 

Now, we may partition $V$ into subsets $S_1,\dots,S_c$ taken from a collection of subsets $\mathcal{S} \subset 2^V$. The collection $\mathcal{S}$ is an inductively defined set: each $S \in \mathcal{S}$ is either a singleton or the union of two subsets in $\mathcal{S}$ forming a pair of similar elements (and $\mathcal{S}$ comprises all such subsets). In other words, $\mathcal{S}$ consists of the subsets $S$ that can be built up by starting with an initial element and successively adding elements similar to some preexisting element (note that the initial element may be taken arbitrarily without loss of generality, since $E$ is symmetric and we can backtrack as needed). To see the equivalence, clearly any such subset satisfies the inductive definition, and when taking the union of two sets $S_1 \cup S_2 \in \mathcal{S}$ as above, if they both satisfy this property (tautological for singletons), then we can start with any element in the union, successively add similar elements from that subset, and, after encountering the similar element from the other subset, add the similar elements from that subset. Thus by structural induction, the property holds for $S_1 \cup S_2$. In graph theoretic terms, $\mathcal{S}$ consists of the subsets $S$ that constitute a walk in $G$ (because two elements being similar means there exists an edge between them in $G$, and we may use backtracking to toggle between the preexisting elements to only include the ones we want). \\ 

Equivalently, $S \in \mathcal{S}$ if and only if for every partition $S = A \sqcup B$, $A,B \neq \emptyset$, there exists $a \in A$ and $b \in B$ such that $(a,b) \in E$ (we say such a subset is \textit{connected} in analogy to the topological definition). If $S = \{u_1,\dots,u_k\}$ is connected, we can successively build $S \in \mathcal{S}$ starting from $u_1$ and at each step adding an element of $S$ that has not yet been added (apply the definition of connectivity to the preexisting set and its complement within $S$). On the other hand, if $S \in \mathcal{S}$ and $S$ is not connected, this means two subsets $A$ and $B$ forming $S$ are in different connected components of $G$, which contradicts that there be a walk in $G$ consisting of the vertices of $S$. \\

Lastly, let $\mu: 2^V \to [0,\infty)$ be a finite measure given by $\mu(S) = \sum_{s \in S} \mu(\{s\})$, where $\mu(S) = 0 \implies S = \emptyset$. Let $S^* \in \mathcal{S} \cup \{\emptyset\}$ be a special subset whose value $\mu(\mathcal{S})$ we wish to compare to the values $\mu(S)$ for subsets $S \in \mathcal{S}$. Consequently, we write $G \backslash S^*$ to denote the graph with vertex set $V \backslash S^*$ and edge set $E$ minus any edges incident to an element of $S^*$. \\

The {\sc Rank Minimization} problem is the following: given a tuple $(V,E,\mu,S^*)$ as above, select a partition $(S_1,\dots,S_c,S^*)$ of $V$ satisfying $S_1,\dots,S_c \in \mathcal{S}$ (call such a partition \textit{valid}) that minimizes the number of $S_i$ for which $\mu(S_i) > \mu(S^*)$ (we say one plus this number is the \textit{rank} of $S^*$ under a valid partition). \\ 

The {\sc Rank Maximization} problem is similar: given a tuple $(V,E,\mu,S^*)$ as above, select a partition $(S_1,\dots,S_c,S^*)$ of $V$ satisfying $S_1,\dots,S_c \in \mathcal{S}$ that maximizes the number of $S_i$ for which $\mu(S_i) \geq \mu(S^*)$ (we say one plus this number is the \textit{rank} of $S^*$ under a valid partition). \\

The definition of rank differs slightly depending on the problem, and any claims in the present paper about the minimum or maximum rank in some context are satisfied for both definitions simultaneously. To avoid manipulation of rank by such tie-breaking precedent and the number or average size of subsets, we also consider percentile versions of these optimization problems. \\ 

The {\sc Rank Percentile Minimization} problem is the following: given a tuple $(V,E,\mu,S^*)$ as above, select a partition $(S_1,\dots,S_c,S^*)$ of $V$ satisfying $S_1,\dots,S_c \in \mathcal{S}$ that minimizes the fraction $$\frac{|\{i \in [c]: \mu(S_i) > \mu(S^*)\}| + .5|\{i \in [c]: \mu(S_i) = \mu(S^*)\}| + .5}{c+1}.$$ The {\sc Rank Percentile Maximization} problem is similar: given a tuple $(V,E,\mu,S^*)$ as above, select a partition $(S_1,\dots,S_c,S^*)$ of $V$ satisfying $S_1,\dots,S_c \in \mathcal{S}$ that maximizes the fraction $$\frac{|\{i \in [c]: \mu(S_i) > \mu(S^*)\}| + .5|\{i \in [c]: \mu(S_i) = \mu(S^*)\}| + .5}{c+1}.$$ In either case, we call this fraction the \textit{percentile} of $S^*$ under a valid partition. \\  

The following definitions will also be convenient for our analysis of the problems. 

\begin{definition}
We say that a subset $S \subset V$ is \textit{large} if $\mu(S) > \mu(S^*)$ and $S \in \mathcal{S} \backslash \{S^*\}$, \textit{medium} if $\mu(S) = \mu(S^*)$ and $S \in \mathcal{S} \backslash \{S^*\}$, and \textit{small} if $\mu(S) < \mu(S^*)$ and $S \in \mathcal{S} \backslash \{S^*\}$. By identifying $v \in V$ with $\{v\}$, we also say $v$ is large, medium, or small if $\{v\}$ is large, medium, or small, respectively.   
\end{definition}


  


As an example, we may consider the northeast region of the United States commonly referred to as New England, which is somewhat arbitrarily partitioned into states, at the more granular level of municipalities. Letting $V$ be equal to the set of municipalities in New England, $E$ the set of adjacent pairs of municipalities, and $\mu$ the population of an input subset of municipalities, the optimization problems allow us to fix just one pre-established contiguous region of interest to people, such as Boston or Massachusetts itself, and study the extent to which its relative population in the New England region could change under an alternate set of (still contiguous) cities or states. 

\begin{center}

    \includegraphics[width = .30\textwidth]{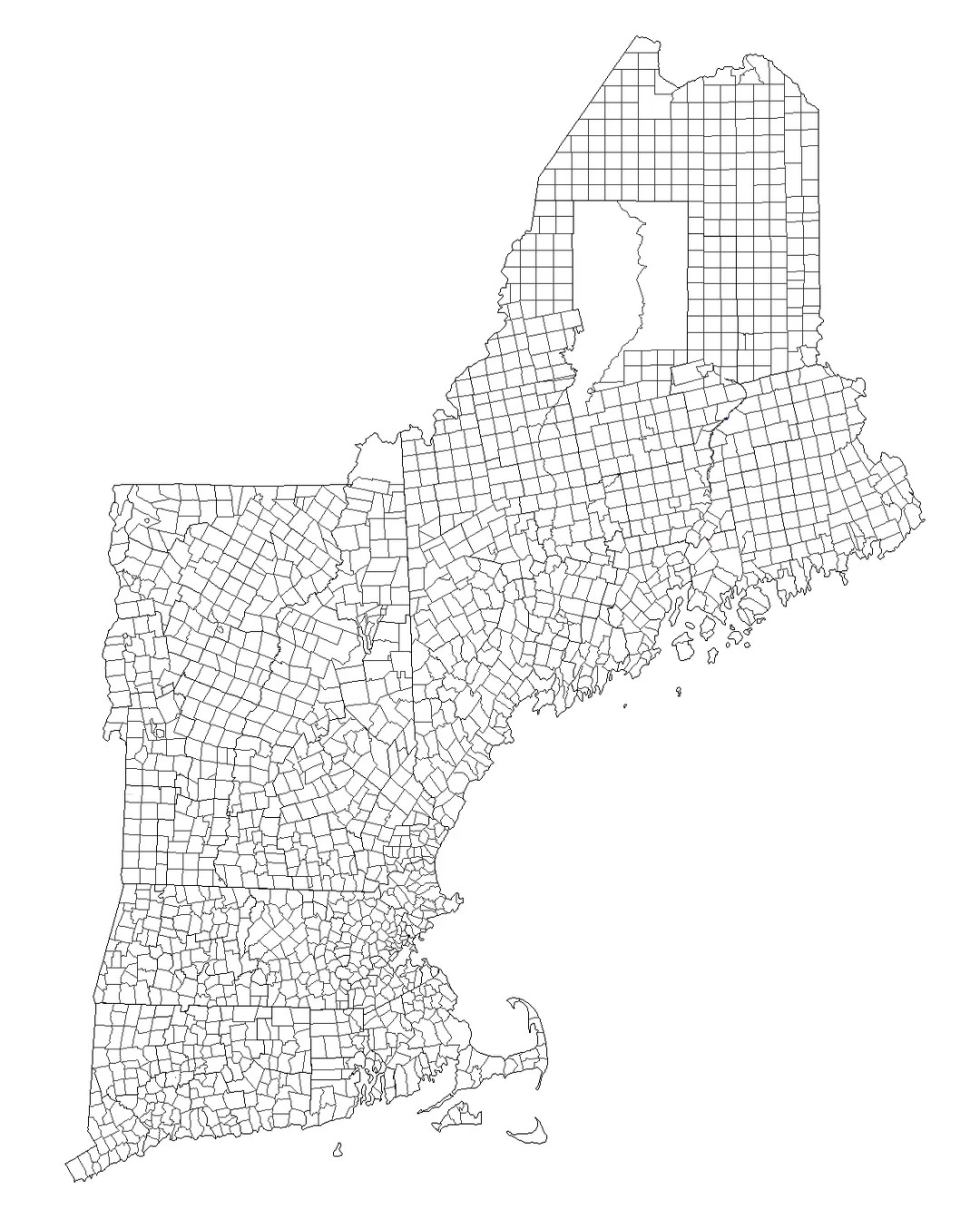}
    \captionof{figure}{A map depicting each municipality in New England by black boundary lines \cite{bryant2019newengland, NewEnglandmunicipalities}. There are many ways one could have partitioned these municipalities into contiguous states, each way carrying implications for how people may think about the relative standing of its constituent parts under some metric.}
    
\end{center}

To analyze the computational problems stated, we first recall the following standard arithmetic fact.

\begin{lemma} \label{frac_avg}
    Let $a,b,c,d \in \mathbb{R}$, where $b,d > 0$. Then $\min\left(\frac{a}{b},\frac{c}{d}\right) \leq \frac{a+c}{b+d} \leq \max\left(\frac{a}{b},\frac{c}{d}\right)$.
\end{lemma}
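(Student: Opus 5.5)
We will prove this mediant inequality by writing $\frac{a+c}{b+d}$ as a convex combination of $\frac{a}{b}$ and $\frac{c}{d}$. A convex combination of two reals always lies between their minimum and maximum, so both inequalities follow at once.

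Set $\lambda = \frac{b}{b+d}$. Since $b,d>0$, we have $b+d>0$ and hence $\lambda \in (0,1)$, with $1-\lambda = \frac{d}{b+d}$. A direct computation gives
\begin{equation*}
\lambda \cdot \frac{a}{b} + (1-\lambda)\cdot \frac{c}{d} = \frac{a}{b+d} + \frac{c}{b+d} = \frac{a+c}{b+d}.
\end{equation*}
Now write $m = \min\left(\frac{a}{b},\frac{c}{d}\right)$ and $M = \max\left(\frac{a}{b},\frac{c}{d}\right)$. Because $\lambda$ and $1-\lambda$ are nonnegative and sum to one, we get
\begin{equation*}
m = \lambda m + (1-\lambda) m \leq \lambda \frac{a}{b} + (1-\lambda)\frac{c}{d} \leq \lambda M + (1-\lambda)M = M.
\end{equation*}
Combined with the previous display, this yields the claim.

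There is no real obstacle here. The only point needing care is the hypothesis $b,d>0$. It guarantees $b+d \neq 0$, and it guarantees that the weights $\lambda$ and $1-\lambda$ are nonnegative, which is what justifies bounding each term by $m$ and $M$ while preserving the direction of the inequalities. If one prefers to avoid the weights, there is an equivalent route: assume without loss of generality that $\frac{a}{b} \leq \frac{c}{d}$, so $ad \leq bc$. Then cross-multiplying against the positive denominators $b(b+d)$ and $d(b+d)$ reduces the two inequalities to $ad \leq bc$ itself.
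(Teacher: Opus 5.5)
Your proof is correct and takes the same approach as the paper, which also writes $\frac{a+c}{b+d} = \frac{b}{b+d}\cdot\frac{a}{b} + \frac{d}{b+d}\cdot\frac{c}{d}$ and concludes because this is a convex combination. Your explicit bounding by the minimum and maximum, and the cross-multiplication alternative, just spell out the step the paper leaves implicit.
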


\begin{proof}
    Writing $\frac{a+c}{b+d} = \left(\frac{b}{b+d}\right)\frac{a}{b} + \left(\frac{d}{b+d}\right)\frac{c}{d}$, we see that $\frac{a+c}{b+d}$ is a convex combination of $\frac{a}{b}$ and $\frac{c}{d}$. 
\end{proof}

Our main application of Lemma \ref{frac_avg} is in understanding changes to the percentile of a partition upon combining subsets. 

\begin{lemma} \label{part_comb}
    Let $P_0$ be a valid partition of $V$. There exists a coarser valid partition achieving percentile less than (respectively greater than) that under $P_0$ if and only if there exist $l$ large subsets, $m$ medium subsets, and $s$ small subsets in $P_0$ whose union is in $\mathcal{S}$, where $l+m+s \geq 2$, such that $$\frac{l+.5m-1}{l+m+s-1}$$ is greater than (respectively less than) the percentile under $P_0$. 
\end{lemma}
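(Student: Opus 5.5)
Write the percentile under $P_0$ as $p_0 = N_0/D_0$, where $N_0 = L + .5M + .5$ and $D_0 = c+1$; here $L$, $M$, $S$ count the large, medium and small parts of $P_0$ other than $S^*$, so $c = L+M+S$. The key observation is that merging a group of parts changes numerator and denominator additively. Suppose we replace $l$ large, $m$ medium and $s$ small parts, with $l+m+s \ge 2$, by their union $U$, and $U \in \mathcal{S}$. Since $\mu$ is additive and every $\mu(\{v\}) > 0$, the union $U$ is large whenever at least one constituent is large or medium. A union of several parts from $P_0$ therefore has measure exceeding $\mu(S^*)$ unless all its parts are small. I will first handle the generic case where $U$ is large. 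The new numerator is $N_0 - l - .5m + 1$ and the new denominator is $D_0 - (l+m+s) + 1$. So the new percentile is $(N_0 - a)/(D_0 - b)$, with $a = l + .5m - 1$ and $b = l+m+s-1 \ge 1$. The all-small case (where $U$ may be small, medium or large) needs separate bookkeeping, and I would check that the quotient in the statement still governs it, or note where the statement must be read with $U$'s own class.

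For the ``if'' direction, take the coarser partition obtained by this single merge. Lemma \ref{frac_avg} gives $N_0/D_0 = \frac{(N_0-a) + a}{(D_0-b) + b}$, a convex combination of $(N_0-a)/(D_0-b)$ and $a/b$ with strictly positive weights. We need $D_0 - b \ge 1$, which holds because $S^*$ remains a part. Hence $a/b > p_0$ forces $(N_0-a)/(D_0-b) < p_0$, and symmetrically for ``less than''. This gives the ``if'' direction directly, and it applies whenever $D_0 - b > 0$; strictness follows because both weights are positive.

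For the ``only if'' direction, suppose a coarser valid partition $P_1$ has percentile $p_1 < p_0$. Each part of $P_1$ other than $S^*$ is a union of parts of $P_0$; call the ones containing at least two parts the merged blocks $U_1,\dots,U_k$. Each block contributes a pair $(a_j,b_j)$ as above, and
\[
p_1 = \frac{N_0 - \sum_j a_j}{D_0 - \sum_j b_j}.
\]
Applying Lemma \ref{frac_avg} again, $p_0$ is a convex combination of $p_1$ and $\sum_j a_j/\sum_j b_j$. From $p_1 < p_0$ we get $\sum_j a_j / \sum_j b_j > p_0$. A further application of Lemma \ref{frac_avg}, extended inductively to $k$ terms, says this mediant lies between the minimum and maximum of the $a_j/b_j$. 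So some block $U_j$ has $a_j/b_j > p_0$, and that block supplies the required $l,m,s$ with union in $\mathcal{S}$, since it is a part of the valid partition $P_1$.

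The main obstacle is getting the numerator increments right in every class configuration. The issue is the all-small blocks, whose union can land in any class, so $a_j$ is not simply $l+.5m-1$ there. I will try to show that for such a block the true decrement, $-1$, $-.5$ or $0$ according to the union's class, still compares correctly against $p_0$. Otherwise I will restrict the argument to the form written, noting that an all-small block becoming medium or large only raises the percentile.
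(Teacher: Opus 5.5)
Your argument is sound, and for the converse it takes a different and cleaner route than the paper. The paper proves the ``if'' direction the same way you do: one merge, then Lemma \ref{frac_avg} applied to the additive change in numerator and denominator. For ``only if'', however, the paper realizes the coarser partition as a sequence of single merges. It then asserts by induction that if no merge lowers the percentile, the condition for a later decrease ``only becomes more stringent.'' That argument leaves implicit how the $l,m,s$ of a later merge, which combines already-merged blocks and so is counted in an intermediate partition, relate to counts in $P_0$. Your simultaneous decomposition $p_1 = (N_0-\sum_j a_j)/(D_0-\sum_j b_j)$, followed by the $k$-term mediant bound, avoids that bookkeeping. It also produces a witness that is literally a part of $P_1$, hence automatically in $\mathcal{S}$.

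The one loose end is the all-small block you left open, and it closes immediately. Take a block of $s\ge 2$ small parts.
\begin{itemize}
\item The true numerator change is $0$, $+.5$ or $+1$ according to the class of the union. So the true ratio $a^{\mathrm{true}}/b$ is $0$, $-.5/(s-1)$ or $-1/(s-1)$, all $\le 0$.
\item The quotient in the statement is $-1/(s-1)<0$.
\item $p_0\ge .5/(c+1)>0$.
\end{itemize}
Under either reading, then, an all-small block lies strictly below $p_0$. Three consequences follow.
\begin{itemize}
\item When $p_1<p_0$, the block your mediant argument produces (ratio $>p_0$) cannot be all-small. It therefore has $l+m\ge 1$, and its true decrement is exactly $l+.5m-1$.
\item When $p_1>p_0$ and the block found is all-small, that same block is still a valid witness, since its stated quotient is negative.
\item In the ``if'' direction, merging an all-small witness strictly raises the percentile, by your convex-combination identity with the true decrement.
\end{itemize}
This holds even when the union stays small: the numerator is unchanged and the denominator drops by $s-1$. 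So your closing remark that such a block raises the percentile when ``becoming medium or large'' undersells it. This case is exactly the paper's separate first case.
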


\begin{proof}
    For the case $l = 0, m = 0, s \geq 2$, $\frac{l+.5m-1}{l+m+s-1} < 0$, which is necessarily less than the percentile of $P_0$; modifying $P_0$ so that the $s$ small subsets are combined into one subset, the denominator of the rank decreases whereas the numerator can only increase, so the percentile is greater than that of $P_0$. \\

    Otherwise, we know that $l \geq 1$ or $m \geq 1$. Therefore, upon combining $l$ large subsets, $m$ medium subsets, and $s$ small subsets into a single subset, the percentile becomes $\frac{l_0 - l + .5(m_0-m) + 1 + .5}{l_0 - l + m_0 - m + s_0 - s + 1 + 1}$. Now by Lemma \ref{frac_avg}, the percentile of $P_0$ (the initial percentile) is a convex combination of this new percentile and the fraction $\frac{l+.5m-1}{l+m+s-1}$, so the new percentile is less than the initial percentile if and only if $\frac{l+.5m-1}{l+m+s-1}$ is greater than the initial percentile, and the new percentile is greater than the initial percentile if and only if $\frac{l+.5m-1}{l+m+s-1}$ is less than the initial percentile. Since any coarser partition of $P_0$ may be realized by recursively combining subsets of this form, the claim follows by induction: if the new percentile has not decreased after a certain number of steps, then the condition for it to decrease (in terms of available subsets to combine and the required value of $\frac{l+.5m-1}{l+m+s-1}$) only becomes more stringent. The argument for the percentile increasing is symmetric.         
\end{proof}

The minimization problems are quite different from each other in that minimizing the rank tends to create relatively few subsets that are larger, which results in a high percentile.




\begin{theorem} \label{min_in_P}
    {\sc Rank Minimization} is polynomial-time solvable.
\end{theorem}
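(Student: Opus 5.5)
The plan is to show that the optimal value has a simple closed form that can be computed with a graph search, rather than to run any kind of search over partitions. The key observation is that {\sc Rank Minimization} counts only the parts $S_i$ with $\mu(S_i) > \mu(S^*)$, i.e.\ the large parts. Medium and small parts are free, and parts may be chosen as coarse or as fine as we like, subject only to connectivity.

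First I will observe that every part $S_i \in \mathcal{S}$ of a valid partition is connected and disjoint from $S^*$. Hence each part lies entirely inside a single connected component of $G \backslash S^*$. Let $C_1,\dots,C_r$ be these components, and call a component \emph{heavy} if it contains a large vertex $v$, meaning $\mu(\{v\}) > \mu(S^*)$. I claim the minimum number of large parts equals the number $h$ of heavy components.

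\textbf{Lower bound.} In any valid partition, a large vertex $v$ lies in some part $S_i$. Since $\mu$ is additive and nonnegative, $\mu(S_i) \ge \mu(\{v\}) > \mu(S^*)$, so $S_i$ is large. That part is contained in the component of $v$, so parts arising from distinct heavy components are distinct. Therefore every valid partition has at least $h$ large parts.

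\textbf{Upper bound.} I will construct a valid partition with exactly $h$ large parts as follows.
\begin{itemize}
\item Each heavy component $C_j$ is taken as a single part. It is connected, hence in $\mathcal{S}$, and it contributes exactly one large part.
\item Every vertex of a non-heavy component is taken as a singleton. A singleton is in $\mathcal{S}$ by definition, and it is small or medium since the component contains no large vertex.
\end{itemize}
Together with $S^*$ this is a valid partition of $V$ with exactly $h$ large parts, so the rank is $h+1$. The same partition is optimal under either tie-breaking convention, since medium parts are never counted here.

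\textbf{Algorithm.} Compute $\mu(S^*)$, delete $S^*$, find the components of $G \backslash S^*$ by BFS or DFS, and mark the heavy ones. This takes $O(|V|+|E|)$ time plus the cost of the arithmetic on $\mu$. I expect no step to be a real obstacle. The only point needing care is the lower-bound claim that parts cannot cross between components of $G \backslash S^*$. This follows from the equivalence, established earlier, between membership in $\mathcal{S}$ and connectivity, together with the fact that a connected set avoiding $S^*$ uses only edges of $G \backslash S^*$.
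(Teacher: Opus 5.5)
Your proof is correct and is essentially the paper's argument. The paper runs BFS in $G \backslash S^*$ from each not-yet-explored large vertex, which produces exactly your heavy components as parts and leaves every other vertex as a singleton; it then argues optimality because no part can span two components of $G \backslash S^*$, and your observation that the part containing a large vertex is itself large makes that lower bound explicit.

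One aside should be dropped or fixed. Your closing claim that the same partition is optimal under ``either tie-breaking convention'' fails if that means counting parts with $\mu(S_i) \geq \mu(S^*)$. Under that count, medium singletons in non-heavy components would be counted, so you would instead merge every component of $G \backslash S^*$ that contains a medium or large vertex. This does not affect the theorem as stated.
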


\begin{proof}
    Consider the following greedy algorithm. Set $i = 1$ and mark all nodes in $G \backslash S^*$ as unexplored. While there exists an unexplored $v_+ \in V \backslash S^*$ that is large (i.e., $\mu(\{v_+\}) > \mu(S^*)$), initialize $S_i = \emptyset$, perform breadth-first search in $G \backslash S^*$ starting at $v_+$. Add each node that becomes explored (traversed by the BFS subroutine) to $S_i$ and increment $i$. Denote by $V_-$ the set of nodes in $G \backslash S^*$ that remain unexplored upon the algorithm's termination. We claim that the subsets $S_1,\dots,S_c,\{v_-\}: v_- \in V_-$ generated by this procedure minimize the rank of $S^*$. By the above characterization in terms of walks in $G$ (which includes all walks in $G \backslash S^*$), they are each in $\mathcal{S}$, and by construction they partition $V \backslash S^*$. Observe that the rank of $S^*$ is one plus the number of connected components of $G \backslash S^*$ containing some $v_+ \in V_+$. But subsets spanning multiple connected components are not in $\mathcal{S}$ (nor intersect with $S^*$), so this is optimal.           
\end{proof}

This gives us our first lower bound on the rank of $S^*$. 

\begin{corollary} \label{rank_lower_bound}
    The minimum rank of $S^*$ under a valid partition is $1 + C_l$, where $C_l$ is the number of connected components with a singleton that is large.  
\end{corollary}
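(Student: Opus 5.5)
The plan is to prove both inequalities, with $C_l$ understood as the number of connected components of $G \backslash S^*$ that contain at least one large vertex. The minimum rank equals $1 + C_l$ exactly when (i) every valid partition gives rank at least $1+C_l$, and (ii) some valid partition achieves $1+C_l$. Both directions use the characterization of $\mathcal{S}$ as connected subsets. Any part $S_i \neq S^*$ of a valid partition lies in $V \backslash S^*$ and is connected in $G$ using only its own vertices, so it lies inside a single connected component of $G \backslash S^*$.

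For the lower bound (i), fix a valid partition $(S_1,\dots,S_c,S^*)$ and a component $K$ of $G \backslash S^*$ containing a large vertex $v_+$. The part $S_i$ containing $v_+$ satisfies $\mu(S_i) \geq \mu(\{v_+\}) > \mu(S^*)$, because $\mu$ is additive and nonnegative. So $S_i$ is large, and hence counted both under the strict comparison of {\sc Rank Minimization} and under the $\geq$ comparison of {\sc Rank Maximization}. The map sending each such component $K$ to the part containing one chosen large vertex of $K$ is injective, since parts do not cross components. Therefore at least $C_l$ parts are counted and the rank is at least $1+C_l$.

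For achievability (ii), I would use the construction from the proof of Theorem \ref{min_in_P}. Each component of $G \backslash S^*$ that contains a large vertex becomes a single part; it is connected, hence in $\mathcal{S}$, and it is large. Every remaining vertex becomes a singleton.

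The main obstacle is the claim that the result holds for both definitions of rank. Under the $\geq$ convention, singletons that are medium would be counted and would push the rank above $1+C_l$. Leftover small singletons are harmless, since they satisfy $\mu < \mu(S^*)$ and are counted under neither convention. To remove medium singletons, I would merge each component with no large vertex into a single connected part and check that this part can still be made small. This is where I expect the argument to be delicate. If a component with no large vertex has total measure $\geq \mu(S^*)$, it seems every partition of it contains a part of measure $\geq \mu(S^*)$ only when that is forced. Otherwise, splitting that component into parts each of measure $<\mu(S^*)$ is possible provided no single vertex in it is medium. So for the $\geq$ convention, the count must really include components containing a medium vertex, or else the corollary should be read for the strict definition from Theorem \ref{min_in_P}. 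I would first state and prove the result for the {\sc Rank Minimization} definition, where (i) and (ii) are complete. I would then note that the same argument yields $1 + C_l$ for the other convention whenever no medium singleton exists, flagging this as the point needing care.
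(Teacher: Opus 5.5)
Your proof is correct and matches the paper's. Achievability is the BFS construction from the proof of Theorem \ref{min_in_P}: one part per component of $G \backslash S^*$ containing a large vertex, singletons elsewhere. The lower bound is the observation that parts cannot straddle components of $G \backslash S^*$, which your injective component-to-part map makes explicit.

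Your caveat is also correct, and it exposes an overstatement in the paper's remark that rank claims hold under both definitions. Under the $\geq$ convention, any part containing a medium vertex is counted however you partition; an isolated $u$ with $\mu(\{u\}) = \mu(S^*)$ already gives rank $2 \neq 1 + C_l$. There the answer is one plus the number of components of $G \backslash S^*$ containing a large or medium vertex, each taken whole. No merging can make a part containing a medium vertex small, and components with only small vertices are simply left as singletons.
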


\begin{theorem} \label{inapprox}
    {\sc Rank Percentile Minimization} is {\sc NP}-hard.
\end{theorem}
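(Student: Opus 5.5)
The plan is a polynomial-time reduction from the (unweighted, vertex-counting) \textsc{Steiner Tree} problem. An instance consists of a connected graph $H=(U,F)$, a terminal set $T\subset U$ with $|T|=t\ge 2$, and an integer $k$. The question is whether there is a connected subset $W\supseteq T$ with $|W|\le k$. This problem is \textsc{NP}-hard.

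\textbf{Construction.} Build $G$ from $H$ as follows.
\begin{itemize}
\item Add a new isolated vertex $s^*$ and set $S^*=\{s^*\}$ with $\mu(\{s^*\})=2$.
\item Give every terminal weight $3$, so terminals are large.
\item Give every non-terminal of $H$ weight $1$, so they are small.
\item Add $N$ further isolated vertices of weight $1$, where $N$ is a polynomial such as $N=10|U|$.
\end{itemize}

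\textbf{Structure of valid partitions.} Write $L$ for the number of large parts and $c$ for the total number of parts other than $S^*$; no part is medium. The percentile is then $(L+.5)/(c+1)$. Every part containing a terminal is large. Splitting any part made only of non-terminals into singletons keeps every part connected and increases $c$ without changing $L$, so it only lowers the percentile. We may therefore assume that all non-terminal vertices outside large parts are singletons. The $N$ padding vertices are always singleton small parts.

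\textbf{Case analysis.} The aim is to show that $L=1$ is optimal and that, given $L=1$, the percentile is decided by the size of the large part.
\begin{itemize}
\item If $L=1$, the unique large part $W$ is a connected set containing $T$. The percentile is
$$\frac{1.5}{N+2+|U|-|W|},$$
which is strictly decreasing in $|W|$. This case is achievable because $H$ is connected (take $W=U$).
\item If $L\ge 2$, then $c\le N+|U|$, so the percentile is at least $2.5/(N+|U|+1)$.
\end{itemize}
Since $1.5/(N+2)<2.5/(N+|U|+1)$ once $N\ge 2|U|$, every $L=1$ partition beats every $L\ge2$ partition.

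\textbf{Conclusion.} The optimal percentile is exactly $1.5/(N+2+|U|-w)$, where $w$ is the minimum Steiner tree size. Therefore the Steiner instance is a yes-instance if and only if the optimal percentile is at most $1.5/(N+2+|U|-k)$, which gives \textsc{NP}-hardness.

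\textbf{Main obstacle.} The delicate step is the comparison between the $L=1$ and $L\ge2$ regimes. A fragmented partition has more parts, which could in principle outweigh the penalty from an extra large part. The $N$ padding vertices resolve this: they make the denominator nearly constant, so the numerator $L+.5$ dominates. I would check the inequality in the $L\ge 2$ bullet carefully with explicit constants, and also confirm two side conditions. First, $S^*$ must lie in $\mathcal{S}$, which holds because it is a singleton. Second, $\mu$ must be positive on nonempty sets, which holds because all weights are positive.
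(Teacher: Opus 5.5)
Your reduction is correct, and it follows a genuinely different route from the paper's.

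The paper reduces from \textsc{X3C}. Subset nodes form a clique of tiny small vertices, and element nodes are large and adjacent to the subsets that contain them. Isolated padding is then tuned so that the all-singletons partition $P_0$ has a percentile $p_0$ lying strictly between two merge ratios $\frac{l+.5m-1}{l+m+s-1}$: the one achieved by an exact cover, and the one achieved by an exact cover of $n-3$ elements with $n/3-1$ sets. Lemma~\ref{part_comb} turns ``some partition beats $P_0$'' into ``some single connected merge has ratio above $p_0$,'' and a counting argument shows that only an exact cover qualifies.

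You instead reduce from unweighted \textsc{Steiner Tree}. Padding makes any second large part prohibitively expensive, which lets you compute the optimum in closed form as $1.5/(N+2+|U|-w)$. This makes your proof self-contained, with no appeal to Lemma~\ref{part_comb}. It also avoids having to realize an exact rational baseline $p_0$ with padding, and the final check is a single inequality. The paper's X3C gadget is essentially the textbook X3C-to-Steiner-tree graph, so you are working at the natural level of abstraction.

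What the paper's route buys is a formally stronger statement: it is already \textsc{NP}-hard just to decide whether the trivial singleton partition can be improved at all. In your instances $P_0$ has $L=t\ge 2$ and is never optimal, so that question is trivial there.

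A few small slips to fix:
\begin{itemize}
\item Before your normalization step, ``no part is medium'' is false. Two adjacent non-terminals form a part of weight $2=\mu(S^*)$, and terminal-free parts with at least three vertices are large. So splitting them does change the number of large and medium parts, not just $c$. It only lowers the numerator, so the conclusion stands, but the claim ``without changing $L$'' should be corrected.
\item The expression $1.5/(N+2+|U|-|W|)$ is strictly \emph{increasing} in $|W|$, not decreasing. Increasing is what your conclusion actually uses.
\item It is worth stating explicitly that $L=0$ cannot occur, since every terminal lies in a large part.
\end{itemize}
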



\begin{proof}   

    We show that solving {\sc Rank Percentile Minimization}, or just determining whether any partition achieves a percentile less than the partition $P_0$ comprised of $S^*$ and singletons, solves exact cover with 3-sets, {\sc X3C}. We describe a reduction from an instance of {\sc X3C} with $n$ nodes and $m$ subsets to an instance of {\sc Rank Percentile Minimization}.   
    \begin{itemize}
        \item Create a node for each subset and element; the only edges are from subsets to the elements contained therein and between each pair of subset nodes
        \item The subset nodes are small and the element nodes are large (as singleton sets)
        \item Create isolated large and small nodes (interpreted as singleton sets) as needed so that the percentile of $S^*$ under $P_0$ is $p_0 := \frac{n-1}{n+n/3-1} - \frac{1}{2}(\frac{n-1}{n+n/3-1} - \frac{n-3-1}{n-3+n/3-1-1})$
        \item All small nodes are so small that the sum of their values is less than $\mu(S^*)$ 
    \end{itemize}
    We show that there is an exact set cover if and only if there exists a subset such that $\frac{l+.5m-1}{l+m+s-1} > p_0$. By Lemma \ref{part_comb}, this condition is equivalent to the statement that there exists a partition with percentile less than the percentile of $P_0$, so this completes the proof. \\
    
    $\implies$: Take the subset to consist of the subset nodes comprising an exact cover (and all of the element nodes), since the subset nodes are all connected and each element node is connected to the subset node covering it in the exact cover. Then $\frac{l+.5m-1}{l+m+s-1} = \frac{n-1}{n+n/3-1} > p_0$. \\
    
    $\impliedby$: We analyze such a subset and show its subset nodes constitute an exact cover. Isolated nodes cannot be part of the (non-singleton) subset. Similarly, any element node part of the subset must have a corresponding subset node covering it in the subset. For $m$ subset nodes in subset, want to add maximum $3m$ element nodes to the subset, but this is less than $p_0$ by construction and that each subset-3-element tuple added increases the fraction by Lemma \ref{frac_avg}. If instead $m = k$, at least one of the subset nodes only bring 2 new element nodes, but the fraction of $\frac{2}{3}$ decreases the fraction from the baseline given by a 1:3 ratio (true at minimum of the ``first'' subset node in the subset in some enumeration thereof).   
\end{proof}

On the other hand, the maximization problems are similar in that maximizing the rank is generally consistent with maximizing the rank percentile: make as many large subsets as possible.

\begin{theorem} \label{max_hardness}
    {\sc Rank Maximization} and {\sc Rank Percentile Maximization} are {\sc NP}-hard.
\end{theorem}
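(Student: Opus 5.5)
Both problems are NP-hard via a reduction from a packing problem: maximizing the number of large subsets is essentially a disjoint packing of connected "heavy" sets. The natural source is again {\sc X3C} (or 3-dimensional matching / $k$-set packing, mentioned in the introduction). Given an instance with ground set $U$ of size $n$ (divisible by 3) and a family $\mathcal{F}$ of $m$ 3-subsets, I build a graph with one node per element and one node per set. Each set node is adjacent to its three elements. The edges between set nodes are omitted this time, so that distinct sets cannot merge cheaply.

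I choose $S^*$ to be an isolated node of weight $\mu(S^*)=3$. Every element node gets weight $1$ and every set node gets a tiny weight $\epsilon$ with $m\epsilon<1$, so all singletons are small. A connected subset is then large or medium iff it contains at least three element nodes. The large connected subsets that are minimal in element count are exactly a set node together with its three elements, or a "path" through several set nodes collecting three or more elements. Any connected subset containing $k\ge 3$ element nodes must contain set nodes, and those set nodes cover at least $k$ elements. Hence the number of subsets with $\mu\ge\mu(S^*)$ in any valid partition is at most $\lfloor n/3\rfloor$. Equality forces every such subset to contain exactly three elements and the partition to use them all. In particular each of these subsets contains one set node whose three elements it absorbs. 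If it instead contained two or more set nodes, the connectivity through elements would give a walk, but the element count still forces exactly 3. I then need to argue that we can always extract a single set node covering those three elements, or else design the weights to forbid chains.

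This chain issue is the main obstacle. A subset such as $\{a,F_1,b,F_2,c\}$ with $F_1\ni a,b$ and $F_2\ni b,c$ is connected, contains three elements, and yet corresponds to no single set. To kill it, I make set nodes carry weight $-$ in effect: give each set node weight $\delta$ and set $\mu(S^*)=3+\delta$ exactly. Elements still weigh $1$, and $\delta$ is small relative to 1. Then a subset with three elements and one set node is medium (counted by $\ge$), while one with three elements and two set nodes is large. Either way it counts, so this attempt fails. I would instead try weights $\mu(S^*)=3$, elements weight $1$, set nodes weight $0^+$, and remove the problem by a direct combinatorial argument. A perfect partition into $n/3$ connected three-element blocks, together with leftover set nodes as singletons, exists iff $U$ can be partitioned into triples each connected through set nodes. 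To make that equivalent to an exact cover, I replace each element by a gadget that forbids sharing, or reduce instead from {\sc X3C} restricted so that any two sets intersect in at most one element. I would also allow a direct reduction from {\sc Perfect Triangle Packing}/partition into connected triples, which is NP-hard and whose connected 3-vertex sets are paths or triangles. Concretely, I would reduce from partitioning a graph into paths of length 2 ($P_3$-partition, NP-complete): take $G$ itself with unit weights, add isolated $S^*$ of weight $3$. The rank equals $1+n/3$ iff a partition into connected triples exists. Connected triples are exactly $P_3$'s or triangles, and the problem "partition into connected 3-vertex subgraphs" is NP-complete by the Kirkpatrick–Hell theorem.

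For the percentile version, I use the same instance. Every non-$S^*$ part is either medium/large (at least 3 weight) or small. Small leftover parts can be merged with adjacent counted parts without reducing the count, so the optimum has $c=$ number of counted parts plus at most a few forced small parts. I would show via Lemma~\ref{part_comb} that the percentile is maximized exactly when there are $n/3$ parts of weight exactly $3$ (all medium, percentile $\frac{.5\cdot n/3+.5}{n/3+1}$). Making $\mu(S^*)=3-\tfrac12$ turns them into large parts, giving percentile $\frac{n/3+.5}{n/3+1}$. This value is attained iff no small parts exist and the number of parts is maximal, i.e.\ iff the connected-triple partition exists. That completes the hardness of both problems.
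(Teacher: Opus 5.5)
Your final reduction is correct, but it takes a different route from the paper's. The paper reduces from {\sc Multiway Number Partitioning} on a complete graph, with $V=B\cup\{v^*\}$, $E=V\times V$ and $\mu(\{v^*\})=\frac{T}{k}$. Rank $k+1$ is then attainable iff $B$ splits into $k$ parts each summing to exactly $\frac{T}{k}$. For the percentile version it lowers $\mu(S^*)$ to $\frac{T}{k}-\frac12$ so that medium sets cannot occur, and percentile $\frac{k+.5}{k+1}$ is attained iff such a split exists. There the graph carries no information and all of the hardness sits in the weights.

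You do the reverse: unit weights, an isolated $S^*$, and hardness coming purely from connectivity via $P_3$-partition (Kirkpatrick--Hell). This is essentially the $Vk${\sc (connected)} observation the paper makes only later, in the uniform value section. Your version shows the problems stay hard with constant weights, so the difficulty is combinatorial rather than arithmetic. The paper's version shows they stay hard even when $G$ is complete. Specialized to $k=2$, it also produces the constant gaps behind Theorems~\ref{max_inapprox} and~\ref{max_percentile_inapprox}: $3$ versus at most $2$ for rank, and $\frac{2.5}{3}$ versus at most $\frac{1.5}{2}$ for percentile. Your reduction does not give that directly, since its yes/no gap is only $n/3$ versus $n/3-1$ counted parts.

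When you write it up, make two cuts. First, remove the {\sc X3C} attempt entirely: the chain subsets you found genuinely break it, and neither weight tweak repairs it. Second, drop the claim that with $\mu(S^*)=3$ the percentile is maximized by $n/3$ medium parts. That value is exactly $\frac12$, which a single large part (percentile $\frac34$) already beats whenever the $P_3$-partition input graph is connected.

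With $\mu(S^*)=\frac52$ there are no medium parts, so a partition with $l$ large and $s$ small parts has percentile $\frac{l+.5}{l+s+1}$. This expression is increasing in $l$ and decreasing in $s$, and $l\le n/3$. Hence it is at most $\frac{n/3+.5}{n/3+1}$, with equality iff $l=n/3$ and $s=0$, i.e.\ iff every part is a connected triple. That one line is all the percentile case needs; the merging heuristic and the appeal to Lemma~\ref{part_comb} are unnecessary.
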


\begin{proof}
    We reduce from {\sc Multiway Number Partitioning}. We are given an instance $\langle B,k \rangle$, where $B$ is a multiset of positive integers summing to $T$ and $k$ represents the number of multisets in our partition of $B$ (so $k|T$). We now reduce to an instance of {\sc Rank Maximization}. Let $V$ be equal to $B$ plus a special element $v^*$ (create multiple vertices for duplicate numbers), $E = V \times V$, and $S^* = \{v^*\}$. Define $\mu$ by $\mu(\{v^*\}) = \frac{T}{k}$ and otherwise $\mu(\{v\})$ is equal to the integer in $B$ corresponding to the element $v$. Now if the instance of {\sc Multiway Number Partitioning} has a solution, this establishes that the rank of $S^*$ may be at least $k+1$, and if such a rank is achieved, by positivity it must be the case that each of the $k$ subsets with greater value than $S^*$ under $\mu$ have value exactly equal to $\frac{T}{k}$. The reduction to {\sc Rank Percentile Maximization} is similar, except we must perturb $\mu$ slightly so that $S^*$ has value strictly smaller than the other subsets (this also establishes hardness of rank maximization when the inequality in the definition of rank is taken to be strict). For this, now let $\mu(S^*) = \frac{T}{k} - \frac{1}{2}$. If the instance of {\sc Multiway Number Partitioning} has a solution, then a percentile of $\frac{k+.5}{k+1}$ is possible. Now we extend to the other direction. By integrality of $B$, for $k$ subsets to be greater than $S^*$ under $\mu$, we need each to still have value $\frac{T}{k}$, and this is the maximum number of subsets possible, hence uniquely maximizing the percentile of $S^*$ (since by inspection, the percentile is a convex combination of 1, .5, and 0, with the former having the greatest coefficient when the largest number of subsets have value greater than $S^*$ and all subsets of $V \backslash S^*$ are of this form, making the coefficient of the latter 0). Thus the percentile of $\frac{k+.5}{k+1}$ is only achieved by a solution to the instance of {\sc Multiway Number Partitioning}.   
\end{proof}

Moreover, in a somewhat restricted set of examples, the maximization problems are hard to solve even approximately. 

\begin{theorem} \label{max_inapprox}
    If {\sc P} $\neq$ {\sc NP}, then for any $\epsilon > 0$ there is no polynomial-time $(\frac{3}{2}-\epsilon)$-approximation algorithm for {\sc Rank Maximization}.   
\end{theorem}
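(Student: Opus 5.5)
We would use a gap-introducing reduction from a strongly NP-hard bin-packing-type problem in which it is hard to tell whether the optimum is $k$ or at most about $2k/3$. The natural choice is {\sc 3-Partition}. We are given $3k$ positive integers $a_1,\dots,a_{3k}$ with $\sum a_i = kT$ and $T/4 < a_i < T/2$. The question is whether they can be split into $k$ triples, each summing to exactly $T$. Since {\sc 3-Partition} is strongly NP-hard, we may assume the numbers are polynomially bounded. As in the proof of Theorem \ref{max_hardness}, we build the complete graph $E = V\times V$ on one vertex per $a_i$ plus a special vertex $v^*$, with $S^* = \{v^*\}$ and $\mu(\{v_i\}) = a_i$. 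Completeness makes every subset of $V\backslash S^*$ lie in $\mathcal{S}$, so {\sc Rank Maximization} becomes: split the numbers into groups and maximize the number of groups of sum $\ge \mu(S^*)$. Each leftover vertex can simply be merged into some group without decreasing its count.

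Set $\mu(S^*) = T$. If the {\sc 3-Partition} instance is a yes-instance, the $k$ triples give rank $k+1$; no partition can do better, since the total mass is $kT$. For the gap, we would show that a no-instance forces the number of groups of sum $\ge T$ to drop substantially. The constraint $T/4 < a_i < T/2$ means every group reaching $T$ needs at least $3$ elements. Any group with $\ge 4$ elements wastes an element, so with $3k$ elements the count of good groups is at most $k$, with equality iff all groups are triples summing to exactly $T$. This alone only gives a gap of $k$ versus $k-1$, which is an additive gap, not a multiplicative one.

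The main obstacle is therefore amplifying this to a ratio near $3/2$. The $3/2$ constant suggests comparing groups of size $2$ with groups of size $3$. The idea is to design instances, in the spirit of the $3/2$ inapproximability of bin packing via {\sc Partition}, where a yes-instance admits $k$ good groups while a no-instance admits at most about $2k/3$. I would first try the following: take a {\sc Partition} instance, scaled and replicated with padding elements, so that good groups must be pairs in the yes case and triples otherwise. With $N$ elements this gives $N/2$ good groups versus $N/3$, a ratio of $3/2$. Concretely, I would take an NP-hard matching/packing variant in which elements of sizes just under $T/2$ must pair exactly to reach $T$; failing exact pairings forces triples. The obstacle is making sure that even a single failed pairing cascades into globally using triples. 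If replication does not achieve this, I would instead reduce from a problem with a known constant gap, such as $3$-dimensional matching or $k$-set packing, which the paper already cites. Its gap-hard versions can be encoded via numerical weights with complete $E$, the rank being the number of disjoint packed sets.

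Finally, we would verify that a $(3/2-\epsilon)$-approximation would distinguish the two cases for large $k$, since $(k+1)/(2k/3+1)\to 3/2$. This contradicts {\sc P} $\neq$ {\sc NP}, and the polynomial boundedness of the weights keeps the reduction polynomial.
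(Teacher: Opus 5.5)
There is a genuine gap: the multiplicative gap is never actually produced. Your 3-Partition reduction only separates rank $k+1$ from rank at most $k$. The amplification step is left as a hope: replication and padding so that one failed pairing forces triples everywhere, or encoding gap versions of 3-dimensional matching or $k$-set packing with complete $E$.

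In the form you propose, the amplification cannot work. With $E = V\times V$, \textsc{Rank Maximization} is exactly bin covering: maximize the number of groups whose value reaches $\mu(S^*)$, with leftovers merged as you note. Bin covering has an asymptotic approximation scheme (Csirik, Johnson and Kenyon; an asymptotic FPTAS by Jansen and Solis-Oba). Hence complete-graph instances whose optimum grows with $k$ cannot exhibit a ratio gap bounded away from $1$ unless P $=$ NP. So the limit $(k+1)/(\frac{2k}{3}+1)\to\frac{3}{2}$ is unattainable along that route. The fallback via set packing with complete $E$ hits the same obstruction, and the known hardness factors for 3-dimensional matching are barely above $1$ anyway.

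The missing idea is that the $\frac{3}{2}$ comes from a \emph{constant} optimum, exactly as in the bin-packing bound you allude to. Your own ratio $(k+1)/k$ is already $\frac{3}{2}$ at $k=2$. What you need is a source problem that stays NP-hard with two groups but unboundedly many items. That problem is \textsc{Partition} ($k=2$ in \textsc{Multiway Number Partitioning}), not 3-Partition.

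Use the complete graph on $B\cup\{v^*\}$ with $\mu(S^*) = T/2$.
\begin{itemize}
\item A yes-instance yields two groups of value $T/2$, so rank $3$.
\item In a no-instance, two disjoint groups of value at least $T/2$ would both have value exactly $T/2$, which is impossible. So at most one group qualifies (and $B$ itself does), giving rank $2$.
\end{itemize}
A $(\frac{3}{2}-\epsilon)$-approximation must return rank at least $3/(\frac{3}{2}-\epsilon) > 2$, i.e.\ rank $3$, on yes-instances. On no-instances it returns at most $2$, so it decides \textsc{Partition}. This is the paper's argument; neither strong NP-hardness nor any amplification is needed.
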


\begin{proof}
    Taking $k = 2$ in {\sc Multiway Number Partitioning}, we obtain the {\sc Partition} problem, which is still {\sc NP}-hard. Perform the same reduction to {\sc Rank Maximization} as in the proof for Theorem \ref{max_hardness}. Now if the {\sc Partition} instance has a solution, the optimal value for the {\sc Rank Maximization} instance is 3. Note that for any $\delta > 0$, $(\frac{2}{3} + \delta)\cdot3 > 2$, meaning that such an approximation algorithm must solve {\sc Partition} exactly (if the {\sc Partition} instance does not have a solution, the optimal value for the {\sc Rank Maximization} instance is less than 3, so an approximation algorithm would also report this). This would imply that {\sc P} $=$ {\sc NP}.  
\end{proof}

\begin{theorem} \label{max_percentile_inapprox}
    If {\sc P} $\neq$ {\sc NP}, then for any $\epsilon > 0$ there is no polynomial-time $(\frac{10}{9}-\epsilon)$-approximation algorithm for {\sc Rank Percentile Maximization}.   
\end{theorem}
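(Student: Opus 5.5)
We reduce from {\sc Partition} ($k=2$ in {\sc Multiway Number Partitioning}) using the perturbed reduction from the proof of Theorem \ref{max_hardness}. Let $V = B \cup \{v^*\}$, $E = V \times V$, $S^* = \{v^*\}$ and $\mu(S^*) = \frac{T}{2} - \frac{1}{2}$. The approach is to compute the optimal percentile in both the yes case and the no case, and show that their ratio is at least $\frac{10}{9}$. A $(\frac{10}{9}-\epsilon)$-approximation would then separate the two cases, so it would decide {\sc Partition}.

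In the yes case, splitting $B$ into its two halves gives two large subsets. Each has value $\frac{T}{2} > \mu(S^*)$, so the percentile is $\frac{2+.5}{3} = \frac{5}{6}$.

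For the no case, we bound the percentile of an arbitrary valid partition with $l$ large, $m$ medium and $s$ small subsets. Every subset has integer value, while $\mu(S^*)$ is a half-integer, so $m = 0$. Two disjoint large subsets would each have value at least $\frac{T}{2}$. Since their total is at most $T$, they would cover $B$ with value exactly $\frac{T}{2}$ each, which is a partition of $B$ and contradicts the no case. Hence $l \le 1$.
\begin{itemize}
\item If $l = 0$, the percentile is $\frac{.5}{c+1} \le \frac{1}{2}$.
\item If $l = 1$, the percentile is $\frac{1.5}{2+s}$. This is maximized at $s = 0$, where the single large subset is all of $B$ and the percentile is $\frac{3}{4}$.
\end{itemize}
So in the no case the optimum is at most $\frac{3}{4}$, and the ratio is $\frac{5/6}{3/4} = \frac{10}{9}$.

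Finally, suppose an algorithm always returns a partition with percentile at least $\frac{\mathrm{OPT}}{10/9-\epsilon}$. In the yes case its output has percentile strictly greater than $\frac{5}{6}\cdot\frac{9}{10} = \frac{3}{4}$. In the no case every partition, and in particular the output, has percentile at most $\frac{3}{4}$. Comparing the output's percentile with $\frac{3}{4}$ therefore decides {\sc Partition} in polynomial time, which implies {\sc P} $=$ {\sc NP}.

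The main obstacle is the no-case bound. We must carefully rule out medium subsets using the half-integrality of $\mu(S^*)$. We must also confirm that $s = 0$ is allowed when $l = 1$, which holds because $E$ is complete, so all of $B$ forms a single connected subset. A further pitfall is whether ``approximation ratio'' is meant multiplicatively in the direction $\mathrm{OPT}/\mathrm{ALG}$. We adopt that convention, consistent with the proof of Theorem \ref{max_inapprox}.
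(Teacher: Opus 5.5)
Your proposal is correct and follows the paper's own proof. It uses the same perturbed reduction from {\sc Partition}, the same optimum of $\frac{5}{6}$ in the yes case versus at most $\frac{3}{4}$ in the no case (no medium subsets since $\mu(S^*)$ is non-integral, at most one large subset, small subsets only lower the percentile), and hence the same ratio $\frac{10}{9}$. You are more explicit than the paper about why $l \le 1$ and the $l = 0$ case; the half-integrality of $\mu(S^*)$ relies on the standing assumption that $T$ is even, which is harmless because odd $T$ is a trivial no-instance.
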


\begin{proof}
    The proof is similar to that for \ref{max_inapprox}, except we apply the reduction to {\sc Rank Percentile Maximization} from Theorem \ref{max_inapprox} and need to compute the ratio of the optimal value in both cases. If the {\sc Partition} instance has a solution, the maximum percentile is $\frac{2.5}{3}$, whereas otherwise it is at most $\frac{1.5}{2}$ (this is the case where one large subset is possible, and we make that the only other subset in the partition; it is always optimal to eliminate a small subset compared to leaving it in the partition, and medium subsets are not possible since $\mu(S^*)$ is not integral). Lastly, we observe that $\frac{1.5/2}{2.5/3} = \frac{9}{10}$.      
\end{proof}

A subtlety arises in our analogue between the above maximization problems when medium subsets are present (which act as a hybrid between large subsets and small subsets in the percentile optimization problems), but we may still convert {\sc Rank Percentile Maximization} to {\sc Rank Maximization} without too much loss of optimality.

\begin{lemma} \label{max_approx}
    There is a 2-approximation algorithm for {\sc Rank Percentile Maximization} by reducing the problem to multiple instances of {\sc Rank Maximization} in polynomial-time. Moreover, the algorithm is optimal in the case that the maximum percentile is at least .5.
\end{lemma}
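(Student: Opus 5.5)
Let $P$ be any valid partition with $l$ large, $m$ medium and $s$ small subsets, so its percentile is $\frac{l+.5m+.5}{l+m+s+1}$. The plan is to compare this with partitions that contain only large subsets and possibly one additional subset, since those are the partitions that a {\sc Rank Maximization} oracle can produce.

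First I would use Lemma \ref{part_comb} to remove small subsets. Merging any small subset into an adjacent subset in the same connected component of $G\backslash S^*$ never lowers the percentile as long as the percentile is at least about $.5$. One exception: a small subset may have no neighbour outside $S^*$, because its component of $G\backslash S^*$ is entirely small. Such a component can be collapsed to a single subset, and by the $l=m=0$ case of Lemma \ref{part_comb} this only raises the percentile.

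Next I would handle medium subsets through an oracle call. I would call the {\sc Rank Maximization} oracle, whose rank counts subsets with $\mu\ge\mu(S^*)$, and then post-process its output. A medium subset contributes $.5$ to the numerator and $1$ to the denominator, which behaves like a copy of $S^*$. A percentile at least $.5$ is therefore never hurt by a medium subset, but replacing it by a large subset would be strictly better. To keep this under control, the algorithm would run the oracle on perturbed instances. In one instance $\mu(S^*)$ is replaced by $\mu(S^*)+\delta$, with $\delta$ smaller than every nonzero gap among subset values; this is the strict version, so the rank counts large subsets only. In the other instance the unperturbed threshold is used. The algorithm then returns whichever resulting partition, after merging small components, has the larger percentile.

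For optimality when $\mathrm{OPT}\ge .5$, the argument goes as follows. An optimal partition can be assumed to have no small subsets beyond one per all-small component; those components are forced, so I would call their number $c_0$. For fixed $c_0$ the percentile is increasing in $l$ and weakly increasing in $m$ when it is at least $.5$. The goal is then to show that the strict-threshold oracle's maximizer $l^*$, combined with absorbing the leftover vertices into existing large subsets, matches $\mathrm{OPT}$. An exchange argument shows that converting mediums into leftovers loses nothing compared to having more large subsets. This exchange step is the main obstacle, because absorbing a medium subset into a large one can reduce the count.

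For the 2-approximation when $\mathrm{OPT}<.5$, the partition that places each component of $G\backslash S^*$ into a single subset already achieves percentile at least $\frac{.5}{c+1}$ by the count. More simply, any output with at least one large or medium subset, or with $S^*$ alone, gives at least $.25$ relative to the bound $\mathrm{OPT}<.5$. When no large singleton component exists at all, the maximum is instead attained by the trivial merged partition, so that case is handled separately.
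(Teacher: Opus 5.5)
Your algorithm is the paper's: one oracle call with the threshold nudged up so that only large subsets count, one call with the original threshold, post-processing, and return the better output. Neither half of the analysis is established, however.

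\textbf{The case $p^*\ge .5$.} You have the effect of medium subsets backwards. By Lemma \ref{frac_avg}, adding a medium subset moves the percentile toward $.5$, so when $p\ge .5$ it can only lower it. For example, one large subset alone gives $\frac{1.5}{2}$, and adding a medium gives $\frac{2}{3}$. With the number of small subsets fixed, the percentile is therefore weakly \emph{decreasing} in $m$ in this regime, not increasing. With the correct sign, the exchange step you leave open as the ``main obstacle'' is a one-line application of Lemma \ref{part_comb}:
\begin{itemize}
\item merging a medium subset into an adjacent large one has $(l,m,s)=(1,1,0)$, with fraction $.5\le p^*$;
\item merging two adjacent mediums has fraction $0$.
\end{itemize}
Hence some optimal partition has only large subsets inside every component of $G\backslash S^*$ whose value exceeds $\mu(S^*)$. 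Its percentile is then increasing in the number of large subsets, and the strict-threshold call maximizes that number. The leftover vertices can be absorbed because maximality forces each such component to contain a large subset of the oracle's output. This is the paper's argument, and your proposal does not supply it.

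\textbf{The case $p^*<.5$.} Here the argument is wrong, not merely incomplete. The claim that any output containing a large or medium subset has percentile at least $.25$ fails. If $G\backslash S^*$ is one large vertex plus $k$ isolated small vertices, every valid partition has percentile $\frac{1.5}{k+2}$, which is below $.25$ for $k\ge 5$. The bound $\frac{.5}{c+1}$ says nothing relative to $p^*$ at all. The factor $2$ must be measured against $p^*$.

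The missing idea is that the number of small subsets is forced. Merging a small subset into any neighbour raises the percentile \emph{unconditionally}: the fraction in Lemma \ref{part_comb} is at most $0$, while every percentile is positive. So you should drop your ``at least about $.5$'' caveat. Both an optimal partition and each post-processed output then have exactly $c_0$ small subsets, one per component of value below $\mu(S^*)$.

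Write $p^*=\frac{l^*}{D^*}+.5\cdot\frac{m^*+1}{D^*}$ with $D^*=l^*+m^*+c_0+1$. The two calls give:
\begin{itemize}
\item the strict call, after absorbing its non-large subsets, has $l_1\ge l^*$ and $m_1\le m^*$, hence percentile at least $\frac{l^*}{D^*}$;
\item the unperturbed call has $l_2+m_2\ge l^*+m^*$, hence percentile at least $.5\cdot\frac{l^*+m^*+1}{D^*}\ge .5\cdot\frac{m^*+1}{D^*}$.
\end{itemize}
The better output is therefore at least the larger term of $p^*$, hence at least $p^*/2$; this is the paper's case split. In fact the second bound alone already exceeds $p^*/2$, since $l^*+m^*+1\ge l^*+.5m^*+.5$.
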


\begin{proof}
    The high-level idea is to take the better of the two percentiles obtained by (1) maximizing the number of large subsets in the partition; and (2) maximizing the number of large or medium subsets in the partition.
    Let $l^*$, $m^*$, and $s^*$ denote the number of large, medium, and small subsets in a valid partition achieving maximum percentile $p^*$, respectively. Since Lemma \ref{part_comb} implies that combining a small subset with any other subset increases the percentile under any partition, $s^*$ must be equal to the number of connected components of $G$ that have value (as a subset of vertices) less than $\mu(S^*)$ under $\mu$. Now applying Lemma \ref{frac_avg}, we can express $p^*$ as $1\cdot\frac{l^*}{l^*+m^*+s^*+1} + .5\cdot\frac{m^*+1}{l^*+m^*+s^*+1}$. \\ 
    
    First suppose that the first term is larger. Now create an instance of {\sc Rank Maximization} with input identical to the instance of {\sc Rank Percentile Maximization} except that the new measure $\tilde{\mu}$ is equal to $\mu$ on sets having empty intersection with $S^*$ and at $S^*$ is $\frac{a'}{b_1b_2\cdots b_n}$, where $\mu(\{v_1\}) = b_1, \mu(\{v_2\}) = b_2,\dots,\mu(\{v_n\}) = b_n$ and $a'$ is the smallest integer so that the expression is greater than $\mu(S^*)$ (which can be found by solving $\frac{x}{b_1b_2\cdots b_n} = \mu(S^*)$ and rounding up). Observe that maximizing rank here is equivalent to maximizing the number of large subsets in our original instance. Consider the partition returned by solving the {\sc Rank Maximization} instance, and further minimize the number of small and medium subsets (with respect to $\mu$) by absorbing them into large subsets within the given connected component of $G$ or, if there are no such subsets in the connected component, leaving it as one subset. Let $l_1$, $m_1$, and $s_1$ denote the number of large, medium, and small subsets in the resultant partition, respectively. By the preceding sentence, we have $l_1 \geq l^*$, $m_1 \leq m^*$, and $s_1 = s^*$, so the resultant partition has percentile $1\cdot\frac{l_1}{l_1+m_1+s_1+1} + .5\cdot\frac{m_1+1}{l_1+m_1+s_1+1} \geq 1\cdot\frac{l^*}{l^*+m^*+s^*+1} \geq \frac{1}{2}(1\cdot\frac{l^*}{l^*+m^*+s^*+1} + .5\cdot\frac{m^*+1}{l^*+m^*+s^*+1})$. \\

    If instead the second term is larger, we create an instance of {\sc Rank Maximization} whose input is completely identical to the instance of {\sc Rank Percentile Maximization}. Consider the partition returned by solving the {\sc Rank Maximization} instance, and further minimize the number of small subsets as above. Let $l_2$, $m_2$, and $s_2$ denote the number of large, medium, and small subsets in the resultant partition, respectively. Observe that $l_2 + m_2 \geq l^* + m^*$ and $s_2 = s^*$, so the resultant partition has percentile 
    \begin{align*}
    1\cdot\frac{l_2}{l_2+m_2+s_2+1} + .5\cdot\frac{m_2+1}{l_2+m_2+s_2+1} 
    &= .5\cdot\frac{l_2}{l_2+m_2+s_2+1} + .5\cdot\frac{l_2+m_2+1}{l_2+m_2+s_2+1} \\
    &\geq .5\cdot\frac{l^*+m^*+1}{l^*+m^*+s^*+1} \\
    &\geq .5\cdot\frac{m^*+1}{l^*+m^*+s^*+1} \\
    &\geq \frac{1}{2}\left(1\cdot\frac{l^*}{l^*+m^*+s^*+1} + .5\cdot\frac{m^*+1}{l^*+m^*+s^*+1}\right).
    \end{align*}
    

    Therefore, by returning the partition with the larger percentile of the two partitions constructed above, we guarantee that it has at least $\frac{1}{2}$ of the maximum percentile $p^*$. \\ 
    
    Lastly, if $p^* \geq .5$, then since $l \leq 1, m \geq 0, s \geq 0 \implies \frac{l+.5m-1}{l+m+s-1} \leq .5$, there exists a partition attaining percentile $p^*$ that does not have any medium or small subsets in a connected component that has value greater than $\mu(S^*)$ under $\mu$ (since for any partition with percentile at least .5, Lemma \ref{part_comb} implies combining $l$ large subsets, $m$ medium subsets, and $s$ small subsets in the above ranges may only increase the percentile). By inspection, the first partition constructed above is such a partition, so it will be returned by our algorithm (or another partition of the same percentile). Conversely, if the algorithm returns a partition with percentile at least .5, then it must be the case that $p^* \geq .5$, so our algorithm returns the maximum percentile by our preceding analysis.       
\end{proof}



\section{Algorithms for Special Cases} \label{special_cases}

\subsection{Complete Graph Case--Bounding Rank and Percentile}

The first special case of the problem we consider is where every pair of objects is similar, i.e. $G$ is a complete graph. The \textit{complete graph} case enables us to find (sharp) bounds on the rank and percentile and, correspondingly, may be used to model actors who optimize rank or percentile based on ex-post descriptions of similarity/categorization (instead of actually being constrained by it a priori). \\

We start with the minimization problems, both of which admit closed-form solutions in this setting. 

\begin{corollary}
    If $G$ is a complete graph, the minimum rank of $S^*$ under a valid partition is 1 if no vertex is large (as a singleton) and 2 otherwise. 
\end{corollary}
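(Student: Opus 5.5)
This is a direct specialization of Corollary \ref{rank_lower_bound}, which says the minimum rank is $1 + C_l$, where $C_l$ counts the connected components of $G \backslash S^*$ containing a large singleton. The plan is to compute $C_l$ when $G$ is complete. The only real care needed is the degenerate case $S^* = V$ (or $V \backslash S^* = \emptyset$).

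First, we determine the component structure of $G \backslash S^*$. If $G$ is complete, then $G \backslash S^*$ is the complete graph on $V \backslash S^*$, since deleting vertices from a complete graph leaves it complete. So $G \backslash S^*$ has exactly one connected component if $V \backslash S^* \neq \emptyset$, and none otherwise.

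Next, we count $C_l$ in the two cases of the statement.
\begin{itemize}
\item If no vertex of $V \backslash S^*$ is large, then no component contains a large singleton, so $C_l = 0$ and the minimum rank is $1$. This includes the case $V \backslash S^* = \emptyset$, which fits the statement vacuously.
\item If some vertex is large, then the unique component contains it, so $C_l = 1$ and the minimum rank is $2$.
\end{itemize}
Here "vertex" in the statement should be read as a vertex outside $S^*$, matching the definition of large, which excludes $S^*$.

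For a self-contained check, I would also give the witnesses directly. In the first case, take the partition of $V \backslash S^*$ into singletons; every part is small or medium, so none counts toward the rank under either definition. In the strict-inequality definition medium parts are not counted, and in the Rank Maximization version they would be, but here we are minimizing. So to be safe, in the first case I would instead merge all of $V \backslash S^*$ into one set when that avoids exceeding $\mu(S^*)$. I would rely on the corollary's already-established statement for {\sc Rank Minimization} rather than re-deriving the medium-case subtlety.

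In the second case, take the single set $V \backslash S^*$, which is connected because the graph is complete; it is large because it contains a large vertex and $\mu$ is additive and nonnegative. For the lower bound, any valid partition has the part containing a large vertex $v$ satisfying $\mu \geq \mu(\{v\}) > \mu(S^*)$, so the rank is at least $2$.

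There is no real obstacle. The only point to state explicitly is that completeness is preserved after removing $S^*$, which collapses $C_l$ to an indicator of whether a large vertex exists.
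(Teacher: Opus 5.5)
Your proposal is correct and is essentially the paper's proof, which is the one-line specialization of Corollary~\ref{rank_lower_bound}: $G \backslash S^*$ is complete, so it has at most one component, and $C_l$ is then the indicator that a large vertex exists. One cleanup is needed in the first case: the all-singletons partition already witnesses rank $1$ under the {\sc Rank Minimization} definition, so drop the ``merge to be safe'' step, because merging can only create a large part and does not help under the $\geq$ convention either (there any medium vertex forces rank at least $2$ under every partition).
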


\begin{proof}
    The statement is a direct application of Corollary \ref{rank_lower_bound}. 
\end{proof}

The next statement tells us that, in the salient regime of this special case, the minimum percentile is on the order of $\frac{1}{s_0}$, but may become close to .5 if there is not both a subset that is larger and a subset that is smaller than $S^*$.    

\begin{theorem}
    Let $l_0$, $m_0$, and $s_0$ be the number of vertex singletons that are large, medium, and small, respectively. If $G$ is a complete graph, the minimum rank of $S^*$ under a valid partition is $\frac{\mathbf{1}_{\{l_0 > 0\}} + .5}{\mathbf{1}_{\{l_0 > 0\}} + \mathbf{1}_{\{l_0 = 0, m_0 > 0\}} + s_0 + 1} \leq .5$ if ($l_0 \geq 1$ and $s_0 \geq 1$) or ($m_0 \geq \frac{2s_0+1}{s_0-1}$, and $s_0 \geq 2$), $\frac{.5m_0 + .5}{m_0 + s_0 + 1} \leq .5$ if ($l_0 = 0$ or $s_0 \geq 1$) and ($m_0 < \frac{2s_0+1}{s_0-1}$ or $s_0 \leq 1$), and $\frac{.5m_0 + 1.5}{m_0 + 2} > .5$ if $l_0 \geq 1$ and $s_0 = 0$. 
\end{theorem}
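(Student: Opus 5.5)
The plan is to use the fact that when $G$ is complete, every subset of $V\backslash S^*$ lies in $\mathcal{S}$. A valid partition is then just an arbitrary partition of $V\backslash S^*$ into groups. Its percentile is $\frac{L+.5M+.5}{L+M+S+1}$, where $L$, $M$ and $S$ count the large, medium and small groups. (I read ``minimum rank'' in the statement as ``minimum percentile''.) I would first reduce to a small family of canonical partitions, and then minimize the resulting closed-form expressions by comparing them directly.

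\textbf{Step 1: normalize an optimal partition.} Take an optimal partition and apply exchange moves, using Lemma \ref{part_comb} and Lemma \ref{frac_avg} to check that none of them increases the percentile.
\begin{itemize}
\item \emph{Merge all large groups into one.} Combining $l\ge 2$ large groups gives fraction $\frac{l-1}{l-1}=1$. This is at least any percentile, so by Lemma \ref{part_comb} the merge does not increase the percentile. Hence $L\le 1$.
\item \emph{Shrink the large group.} A large group must exist when $l_0\ge 1$. It can absorb every medium vertex and every non-needed vertex without changing type. Conversely, a small vertex sitting inside a group can be split off as a singleton.
\item \emph{Keep small vertices as singletons.} Splitting a small vertex off the large group keeps that group large as long as it still contains a large vertex. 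This adds $1$ to the denominator only, which lowers the percentile. Small vertices pooled inside a medium group can likewise be split off whenever that keeps the numerator fixed or lowers it.
\end{itemize}
The outcome should be that an optimal partition has one of the following shapes.
\begin{itemize}
\item[(a)] If $l_0\ge1$: a single large group containing all large and medium vertices, plus $s_0$ small singletons.
\item[(b)] If $l_0=0$: all medium vertices kept as singletons, plus the small singletons.
\item[(c)] If $l_0=0$ and $m_0>0$: a configuration with exactly one group, built by pooling vertices, that is at least as large as $S^*$, with the remaining small vertices as singletons.
\end{itemize}

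\textbf{Step 2: evaluate the candidates.}
\begin{itemize}
\item The percentile of (a) is $\frac{1.5}{s_0+2}$.
\item The percentile of (b) is $\frac{.5m_0+.5}{m_0+s_0+1}$.
\item The degenerate case $s_0=0$, $l_0\ge1$ leaves no small vertex to dilute the percentile, which yields the stated $\frac{.5m_0+1.5}{m_0+2}>.5$.
\end{itemize}
Next I compare (a) or (c) against (b) by cross-multiplying. Inequalities of the form $\frac{1.5}{s_0+2}\lessgtr\frac{.5m_0+.5}{m_0+s_0+1}$ rearrange to a linear condition in $m_0$ with coefficient $(s_0-1)$. This should produce the threshold $m_0\ge\frac{2s_0+1}{s_0-1}$ and the side conditions $s_0\ge2$ or $s_0\le1$. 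The bounds $\le .5$ and $>.5$ follow by comparing numerator to half the denominator.

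\textbf{Main obstacle.} I expect the hard step to be Step 1 in the $l_0=0$ regime. There, merging two medium vertices produces a large group, and pooling small vertices can create a medium or large group. The exchange arguments are therefore not monotone, and I must rule out mixed configurations, for example a few small vertices summing exactly to $\mu(S^*)$. I would first try a direct inequality. Fix the number $g$ of non-small groups; since $\ell\le1$ can be assumed, the numerator is at least $.5+.5g$ or $1.5$. The denominator is at most $g+(\text{remaining small singletons})+1$. This bounds the percentile below by a case expression in $g$, which I would then minimize over $g\in\{0,1,\dots,m_0\}$ and check to be attained at an endpoint. This step also checks the exact form of the indicator expression in the first case of the statement against the canonical candidates. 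If the candidates and the formula disagree, for example in the placement of $\mathbf{1}_{\{l_0=0,m_0>0\}}$, I would adjust the case description so that it matches the true minimum.
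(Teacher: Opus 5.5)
\textbf{Gap 1: shape (a) is not a valid normal form.} Your shape (a) is false when $s_0=0$ and $m_0\ge1$, and that case exposes the criterion your Step 1 is missing. Your bullets show only that the large group \emph{can} absorb medium vertices, never that it should. A medium singleton contributes exactly $(.5,1)$ to (numerator, denominator). So by Lemma \ref{frac_avg}, absorbing it lowers the percentile if and only if the rest of the partition has percentile below $.5$. With $l_0\ge1$ the rest is $\frac{1.5}{s_0+2}$, which is at most $.5$ exactly when $s_0\ge1$. For $s_0=0$, absorption strictly hurts. For example, with $l_0=1$, $m_0=2$, $s_0=0$, shape (a) gives $\frac{1.5}{2}$, while the large singleton plus two medium singletons gives $\frac{2.5}{4}$. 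So your claim that this degenerate case ``yields the stated'' $\frac{.5m_0+1.5}{m_0+2}$ is not supported by your own normalization, which evaluates to $\frac{1.5}{2}$ there. The paper's case split via Lemma \ref{part_comb} rests on exactly this comparison of the current percentile with $.5$.

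\textbf{Gap 2: the $l_0=0$ regime.} The obstacle you anticipate disappears if you apply your splitting move uniformly: strip all small vertices out of every non-small group. What remains is either empty (the group had at least two small vertices) or consists only of medium/large vertices and is therefore non-small. Either way the numerator does not increase and the denominator strictly increases. Hence in every regime all small vertices may be taken as singletons, which disposes of mixed configurations such as small vertices summing exactly to $\mu(S^*)$; no non-monotone exchanges are needed.

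Your $g$-bound, however, fails as written. Taking numerator $\ge .5+.5g$ and denominator $\le g+s_0+1$ over $g\in\{0,\dots,m_0\}$, the minimum is at $g=0$ with value $\frac{.5}{s_0+1}$. That is strictly below the true minimum whenever $m_0,s_0\ge1$. What you need is the coupling the normalization supplies. Non-small groups then contain only medium vertices, so $g\le m_0$. If $g=m_0$, every medium vertex is a singleton. If $g<m_0$, pigeonhole forces a large group; after merging large groups the numerator is $1+.5g$, minimized at $g=1$ when $s_0\ge1$. Comparing $\frac{1.5}{s_0+2}$ with $\frac{.5m_0+.5}{m_0+s_0+1}$ then gives $m_0(s_0-1)\ge 2s_0+1$, as you predicted.

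\textbf{The statement needs two corrections.} Your suspicion about the statement is justified, and it goes beyond the indicator placement.
\begin{itemize}
\item For $l_0=0$, the pooled medium group is large, so the first formula's numerator must also carry $\mathbf{1}_{\{l_0=0,m_0>0\}}$, giving $\frac{1.5}{s_0+2}$. The paper's own proof uses $\frac{.5m_0-1}{m_0-1}$, the Lemma \ref{part_comb} fraction for merging into a large group.
\item The second case must be restricted to $l_0=0$. For instance, $l_0=1$, $m_0=1$, $s_0=2$ satisfies both hypotheses, yet the true minimum is $\frac{1.5}{4}$, not $\frac14$.
\end{itemize}
In particular, do not compare (a) against (b) when $l_0\ge1$: (b) is not a partition of $V\backslash S^*$ in that case.
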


\begin{proof}
    In each case, we exhibit a partition achieving the given percentile and argue that it is the minimum. The partition is formed as follows, starting from all singletons. Combine all large singletons into one subset. Then, if ($l_0 \geq 1$ and $s_0 \geq 1$) or ($l_0 = 0$, $m_0 \geq \frac{2s_0+1}{s_0-1}$, and $s_0 \geq 2$), add all medium singletons into the (possibly empty) subset comprised of all large singletons. \\

    By Lemma \ref{part_comb}, there must not be more than one large subset in the final partition (since combining two of them necessarily decreases the percentile, and we may always do this in a complete graph), so all large vertices must be in the same subset in the partition. Similarly, all small singletons must be subsets in the final partition. Therefore, the only degrees of freedom we have are combining medium subsets or combining medium subset(s) and the large subset (if it exists). Consider the updated partition we obtain after the first step (aggregating all large singletons). If $l_0 \geq 1$ and $s_0 = 0$, then the updated partition achieves percentile greater than .5, and applying Lemma \ref{part_comb} with $l \leq 1$ and $s = 0$ reveals that no coarser partition has smaller percentile. If ($l_0 = 0$ or $s_0 \geq 1$) and ($m_0 < \frac{2s_0+1}{s_0-1}$ or $s_0 \leq 1$), then by the first condition the updated percentile is at most .5, and the second is equivalent to $\frac{.5m_0-1}{m_0-1} < \frac{.5m_0+.5}{m_0+s_0+1}$, meaning that combining the medium subsets would increase the percentile per Lemma \ref{part_comb} (or any subset thereof, since the left-hand side of the inequality is increasing in $m_0$, and note that there are no large subsets to combine with in this case). Lastly, if ($l_0 \geq 1$ and $s_0 \geq 1$) or ($m_0 \geq \frac{2s_0+1}{s_0-1}$, and $s_0 \geq 2$), then by the first condition the updated percentile is at most .5, and the second is equivalent to $\frac{.5m_0-1}{m_0-1} \geq \frac{.5m_0+.5}{m_0+s_0+1}$, so by Lemma \ref{part_comb}, combining the medium singletons is optimal even if there are no large subsets, and if there is a large subset, it is optimal to add the medium singletons to that since doing so eliminates the maximum number of medium subsets, which is desirable since the updated percentile is at most .5, while not adding any new large subsets, which would increase the percentile. Thus in all cases, the algorithm computes an optimal partition, and by inspection this yields the claimed percentile. 
\end{proof} 

    


On the other hand, the maximization problems still have inherent intractability (the inapproximability results, Theorem \ref{max_inapprox} and Theorem \ref{max_percentile_inapprox}, were found in this special case), but may now be approximated within a constant factor. \\

The following definitions will be useful for our approach.

\begin{definition} \label{intermediate_tiny_def}
    An element $v \in V$ \textit{intermediate} if $\frac{1}{2}\mu(S^*) \leq \mu(\{v\}) < \mu(S^*)$ and \textit{tiny} if $\mu(\{v\}) < \frac{1}{2}\mu(S^*)$. Tiny elements have two special cases of interest. Given a constant $\frac{1}{2}\mu(S^*) < c < \mu(S^*)$, we say an element $v \in V$ is \textit{subintermediate} if $\mu(S^*)-c \leq \mu(\{v\}) < \frac{1}{2}\mu(S^*)$ and \textit{subsubintermediate} if $\frac{1}{2}(\mu(S^*)-c) \leq \mu(\{v\}) < \mu(S^*)-c$.
\end{definition}



\begin{theorem} \label{complete_max_approx}
    The complete graph case of {\sc Rank Maximization} has a polynomial-time approximation algorithm with approximation guarantee $\mathrm{OPT} \leq \frac{3}{2}\mathrm{ALG} + 1 - \frac{1}{2}(l_0+m_0) - \frac{1}{2}i^*$, where $l_0$ and $m_0$ are the number of vertex singletons that are large and medium, respectively, and $i^*$ is the number of subsets comprised of intermediate elements in an arbitrary optimal solution.  
\end{theorem}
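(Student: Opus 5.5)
The plan is to analyse a simple greedy algorithm by a mass (volume) argument on the small elements. In the complete graph every subset is in $\mathcal{S}$, so the only constraint is the measure.

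\textbf{Reduction to small elements.} First I would show that some optimal partition puts every large or medium singleton in its own subset. Take any subset $S_i$ with $\mu(S_i) \geq \mu(S^*)$ that contains a large or medium vertex $v$. Replace $S_i$ by $\{v\}$ and merge $S_i \setminus \{v\}$ into some other subset. This never lowers the number of subsets with value at least $\mu(S^*)$. Hence $\mathrm{OPT} = l_0 + m_0 + k$, where $k$ is the optimal number of ``bins'' of value at least $\mu(S^*)$ that can be formed from small elements only. The same holds for the algorithm, with $\mathrm{ALG} = l_0 + m_0 + k_{\mathrm{alg}}$. Substituting, the claimed guarantee is equivalent to
$$k \leq \tfrac{3}{2}k_{\mathrm{alg}} + 1 - \tfrac{1}{2}i^*,$$
and this is the inequality I would prove.

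\textbf{Algorithm.} The algorithm has three steps.
\begin{enumerate}
\item Keep every large and medium singleton alone.
\item Pair up the intermediate elements (Definition \ref{intermediate_tiny_def}) arbitrarily. Any two of them sum to at least $\mu(S^*)$, so each pair is a bin. Let $t$ be the number of intermediates; this gives $\lfloor t/2 \rfloor$ bins.
\item Start a bin with the leftover intermediate if there is one. Then add tiny elements one at a time (next-fit), closing the current bin as soon as its value reaches $\mu(S^*)$. Merge the final unfinished remainder into any existing subset.
\end{enumerate}
Every step is clearly polynomial.

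\textbf{Analysis.} I would bound the total small mass $W$ from both sides.
\begin{itemize}
\item Upper bound on $W$ from the algorithm. Each intermediate pair has mass $< 2\mu(S^*)$. Each bin closed in step 3 has mass $< \frac{3}{2}\mu(S^*)$, because before its last tiny element it was below $\mu(S^*)$ and that element is below $\frac{1}{2}\mu(S^*)$. The bin seeded by a leftover intermediate is the exception: its seed can be as large as just under $\mu(S^*)$, so I would treat that bin separately, possibly bounding it by $2\mu(S^*)$. The unfinished remainder has mass $< \mu(S^*)$, and this is the source of the additive $+1$.
\item Lower bound on $W$ from the optimum. The $i^*$ all-intermediate bins of the optimal solution each contain at least two intermediates. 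Every other optimal bin has mass at least $\mu(S^*)$. So $W \geq \mu(S^*)(k - i^*)$ plus the mass of the $i^*$ intermediate bins, and these use up at least $2i^*$ of the $t$ intermediates.
\end{itemize}

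\textbf{Main obstacle.} The hard step is the charging that combines these bounds. The algorithm's intermediate pairs are wasteful: up to $2\mu(S^*)$ per bin, which by itself only gives ratio $2$. What saves the argument is that the optimum's all-intermediate bins are equally wasteful. I would account for intermediates by counting them, not by mass: the optimum can form at most $t/2$ bins of that kind, while the algorithm forms $\lfloor t/2 \rfloor$. Each of the $i^*$ such bins in the optimum is matched by one in the algorithm, and these matched bins contribute the $-\frac{1}{2}i^*$ saving. The remaining optimal bins are charged by mass against the algorithm's tiny-filled bins at ratio $\frac{3}{2}$. The part I expect to be most delicate is intermediates that the optimum mixes with tiny elements. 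I would first try to show, by an exchange argument, that such mixed bins can be charged to the tiny mass $\geq \mu(S^*) - (\text{intermediate mass})$, which they consume, so that the $\frac{3}{2}$ ratio still applies to them. The leftover odd intermediate would be absorbed by the constant $1$.
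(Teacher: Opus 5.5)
There is a genuine gap, and it lies in the algorithm, not in the charging: pairing all intermediates up front cannot achieve the $\frac{3}{2}$ guarantee.

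Here is a counterexample. Take $\mu(S^*)=1$, no large or medium singletons, eight intermediates of value $0.99$ and eight tiny elements of value $0.01$.
\begin{itemize}
\item The optimum forms eight bins $\{0.99,0.01\}$, so $\mathrm{OPT}=9$. The total mass is exactly $8$, so every optimal solution puts each intermediate in its own bin, and hence $i^*=0$.
\item Your algorithm forms four intermediate pairs. The remaining tiny mass $0.08$ never closes a bin, so $\mathrm{ALG}=5$.
\end{itemize}
Then $9>\frac{3}{2}\cdot 5+1$. Note also that rank is one plus the number of bins, so $\mathrm{OPT}=1+l_0+m_0+k$. The correct reduced target is therefore $k\le\frac{3}{2}k_{\mathrm{alg}}+\frac{3}{2}-\frac{1}{2}i^*$, and the example violates this too. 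With $2q$ intermediates of value $1-\epsilon$ and $2q$ tiny elements of value $\epsilon<\frac{1}{2q}$, the ratio tends to $2$.

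The step you flagged as delicate is exactly where the argument breaks. A mixed optimal bin uses one intermediate plus tiny mass $\mu(S^*)-\mu(\{v\})$, which can be arbitrarily small. Your algorithm has already spent that intermediate in a pair. A bin built only from tiny elements needs mass at least $\mu(S^*)$, so charging mixed bins against such bins at rate $\frac{3}{2}$ cannot work. No exchange argument rescues an algorithm that completes at most one intermediate with tiny elements even when completion is cheap.

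The missing idea, which the paper supplies, is to guess the optimal split between pairing and completion.
\begin{itemize}
\item The paper enumerates $i$. It pairs only the $i$ smallest intermediates and seeds a bin with each of the $r-i$ largest.
\item The analysis uses the iteration with $r-i\approx\lfloor B^*/3\rfloor$, where $B^*$ counts optimal bins whose largest element is intermediate and which also contain tiny elements. Two-thirds of those bins are recovered by pairing and one-third by completion.
\item So that completion does not waste tiny mass, it also enumerates how many seeds are completed by one subintermediate, one subsubintermediate, or two subsubintermediate elements. These thresholds are relative to $c=\mu(\{v_{r-i}\})$, and the smallest such elements are used first.
\end{itemize}
This gives $W\le\frac{3}{4}W^*$ for the tiny mass spent on seeds. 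Hence at least as much tiny mass remains as the optimum's tiny-only bins use. Only then does your next-fit bound, bins of value below $\frac{3}{2}\mu(S^*)$, yield $C^*<\frac{3}{2}C+\frac{1}{2}$.

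Your mass argument for tiny-only bins is essentially the paper's final step. What you need to add is the enumeration over the split between pairing intermediates and completing them with tiny elements.
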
 

\begin{proof}
    Two key insights for our algorithm is that (1) a subset needs exactly 2 intermediate elements to become medium or large if no tiny elements are used; and (2) medium or large subsets comprised entirely of tiny elements never overshoot the threshold $\mu(S^*)$ by a factor exceeding $\frac{3}{2}$. We will also need to analyze subsets combining intermediate and tiny elements, since the tiny elements may exceed the \textit{remainder} of the threshold $\mu(S^*)$ by a factor exceeding $\frac{3}{2}$, which affects how many subsets may be formed from remaining tiny elements. These dynamics motivate Definition \ref{intermediate_tiny_def}. \\

    We now describe the algorithm and then state it formally as Algorithm \ref{complete_rank_max_alg}. Suppose there are exactly $r$ intermediate elements $\{v_1,\dots,v_r\} \in V$ and index $V$ such that $\mu(\{v_1\}) \geq \cdots \geq \mu(\{v_n\})$. Initialize $P = \emptyset$. For each $v \in V$ that is medium or large (as a singleton set), add $\{v\}$ to $P$. For each $i \in [\lceil \frac{2r}{3} \rceil, r]$ that is even, add $\frac{i}{2}$ subsets to $P$, each containing 2 intermediate elements (and nothing else) from the set $\{v_{r-i+1},\dots,v_r\}$. Now initialize a subset $S_j = \{v_j\}$ for each of the $r-i \in [0, \lfloor \frac{r}{3} \rfloor]$ remaining intermediate elements $v_j: 1 \leq j \leq r-i$ and add it to $P$. Set $c = \mu(\{v_{r-i}\})$ and let $s_1$ and $s_2$ be the number of subintermediate and subsubintermediate elements of $V$, respectively. For each tuple of natural numbers $(x,y,z)$ such that $0 \leq x \leq s_1$, $0 \leq 2y + z \leq s_2$, and $x+y+z \leq r-i$, add 1 distinct subsubintermediate element to $S_1,\dots,S_z$, add 2 distinct subsubintermediate elements to $S_{z+1},\dots,S_{y+z}$, and add 1 distinct subintermediate element to $S_{z+y+1},\dots,S_{x+y+z}$ (all in ascending value order). For $1 \leq l \leq z$, add any tiny elements not added to any subset thus far one-by-one to $S_l$ until $\mu(S_l) \geq \mu(S^*)$ (such a subset is called \textit{completed}). Lastly, add any tiny elements not added to any subset thus far one-by-one to an empty subset until its value is at least $\mu(S^*)$, transitioning to a new empty subset whenever this criterion is met. Add each subset that meets this criterion to $P$; add a subset containing any remaining elements of $V$ to $P$. The algorithm returns $\mathrm{argmax}_{i,x,y,z} P$. \\

\begin{algorithm}[H]
\caption{Given an instance $(V = \{v_1,\dots,v_n,\mu,S^*)$ of {\sc Rank Maximization} in the complete graph case such that $\mu(\{v_1\}) \geq \cdots \geq \mu(\{v_n\})$, compute the maximum rank of $S^*$}\label{complete_rank_max_alg}
\begin{algorithmic}

\State Initialize $P(\cdot,\cdot,\cdot,\cdot) \gets \emptyset$ \Comment{initialize empty partitions}
\State Add each large or medium $v \in V$ to $P(\cdot,\cdot,\cdot,\cdot)$ as a singleton
\For{$i$ in $\left[\lceil \frac{2r}{3} \rceil, r\right]: i$ even}
\State Split $\{v_{r-i+1},\dots,v_r\}$ into disjoint pairs and add each pair to $P(i,\cdot,\cdot,\cdot)$
\State $T_1(i,\cdot,\cdot,\cdot) \gets \{v \in V: v \text{ is subintermediate relative to } \mu(\{v_{r-i}\})\}$ 
\State $T_2(i,\cdot,\cdot,\cdot) \gets \{v \in V: v \text{ is subsubintermediate relative to } \mu(\{v_{r-i}\})\}$
\For{$j$ in $[1,r-i]$} 
\State $S_j(i,\cdot,\cdot,\cdot) \gets \{v_j\}$ 
\EndFor
\For{$(x, y, z) \in \mathbb{N}^3$ : $x \leq |T_1(\cdot,\cdot,\cdot,\cdot)| \wedge 2y + z \leq |T_2(\cdot,\cdot,\cdot,\cdot)| \wedge x + y + z \leq r - i$}
\For{$j$ in $[1,z]$}
\State Move $e \in T_2(i,x,y,z)$ with smallest $\mu$-measure to $S_j(i,x,y,z)$
\EndFor
\For{$j$ in $[z+1,y+z]$}
\State Move $e_1,e_2 \in T_2(i,x,y,z)$ distinct with smallest $\mu$-measure to $S_j(i,x,y,z)$
\State $P(i,x,y,z) \gets P(i,x,y,z) \cup \{S_j(i,x,y,z)\}$
\EndFor
\For{$j$ in $[y+z+1,x+y+z]$}
\State Move $e \in T_1(i,x,y,z)$ with smallest $\mu$-measure to $S_j(i,x,y,z)$
\State $P(i,x,y,z) \gets P(i,x,y,z) \cup \{S_j(i,x,y,z)\}$
\EndFor
\State $R(i,x,y,z) \gets T_1(i,x,y,z) \cup T_2(i,x,y,z) \cup \{v_{r+|T_1(\cdot,\cdot,\cdot,\cdot)|+|T_2(\cdot,\cdot,\cdot,\cdot)|},\dots,v_n\}$
\For{$j$ in $[1,z]$}
\While{$\mu(S_j(i,x,y,z)) < \mu(S^*)$ and}
\State Move an element $e \in R(i,x,y,z)$ to $S_j$
\EndWhile
\EndFor
\While{$R(i,x,y,z) \neq \emptyset$}
\State $S \gets \emptyset$
\While{$R(i,x,y,z) \neq \emptyset$ and $\mu(S) < \mu(S^*)$}
\State Move an element $e \in R(i,x,y,z)$ to $S$
\EndWhile
\State $P \gets P \cup \{S\}$
\EndWhile
\EndFor
\EndFor
\State \Return $\mathrm{argmax}_{i,x,y,z} P(i,x,y,z)$
\end{algorithmic}
\end{algorithm}

    To prove the approximation claim, consider an optimal partition $P^*$ achieving a rank of $\mathrm{OPT}$ and write $\mathrm{OPT} = A^* + B_0^* + B^* + C^* + 1$, where $A^*$, $B_0^* + B^*$, and $C^*$ are the number of subsets in $P^*$ containing whose largest element is large or medium, intermediate, and tiny, respectively, and where $B_0^*$ is the number of subsets in $P^*$ containing only intermediate elements. Similarly, we write $\mathrm{ALG} = A + B_0 + B + C + 1$ for the rank achieved by the partition $\mathrm{argmax}_{i,x,y,z} P$ returned by our algorithm. We now perform a term-by-term comparison of $\mathrm{OPT}$ and $\mathrm{ALG}$. \\

    First, we argue by exchange that $P^*$ may be chosen such that each medium or large element forms a singleton subset; if $P^*$ is not in this form, we may iteratively form a singleton subset from a medium or large element that is in a subset in $P^*$ containing other elements. Since any subset $S$ with a medium or large element satisfies $\mu(S) \geq \mu(S^*)$, this does not decrease the rank achieved by $P^*$. Thus by construction, $A^* = A = l_0 + m_0$. \\

    Next, write $S^*_j$, $1 \leq j \leq B^*$ for the subsets of $B^*$ with an intermediate element $v^*_j$; reorder such that $\mu(\{v^*_1\}) \geq \cdots \geq \mu(\{v^*_{B^*}\})$. Note that $B^* \leq r$. Consider the iteration of our algorithm where $r-i \in \{\lfloor \frac{B^*}{3} \rfloor,\lfloor \frac{B^*}{3} \rfloor+1\}$, $x = x^*$, $y = y^*$, and $z = z^*$, where $x^*$, $y^*$, and $z^*$ are the number of subsets in $\{S^*_1,\dots,S^*_{B^*}\}$ containing exactly 1 subintermediate element, exactly 1 subsubintermediate element, and exactly 2 subsubintermediate elements, respectively (still defined relative to $\mu(\{v_{r-i}\})$). Suppose for now that $r-i = \lfloor \frac{B^*}{3} \rfloor$. After termination of the algorithm, define $$W := \sum_{j=1}^{\lfloor \frac{B^*}{3} \rfloor} \mu(S_j \backslash \{v_j\})$$ and $$W^* := \sum_{j=1}^{B^*} \mu(S^*_j \backslash \{v^*_j\})$$ for the value of tiny elements added to these intermediate singletons by our algorithm and an optimal algorithm, respectively. We want to show $W \leq \frac{3}{4}W^*$, which will imply that all of $S_1,\dots,S_{\lfloor \frac{B^*}{3} \rfloor}$ were able to be completed by our algorithm (otherwise, we would have added more elements to try to complete a remaining subset). \\

    At a high level, this iteration of our algorithm splits $B^*$ into thirds, choosing to pair the smallest two-thirds of intermediate elements with each other while still completing a subset for each of the remaining largest one-third of intermediate elements (using tiny elements). This ensures that our algorithm still captures two-thirds of the subsets from $B^*$, and classifying tiny elements relative to the value that the smallest intermediate element in our largest one-third needs to be completed allows us to bound the value of tiny elements used with intermediate elements relative to an optimal algorithm, resulting in similar tiny element values ``left over'' to be combined for more subsets. The resulting partition, and how it differs from the optimal partition, is depicted in Figure \ref{fig:3_2_approximation_visual}.

    \begin{figure}[H]
    \centering
    \begin{tikzpicture}[scale=0.6, transform shape]
        \def\W{1.2}         
        \def\gapL{1.8}      
        \def\gapR{0.6}      
        
        \def\hA{4.8}
        \def\hB{4.4}
        \def\hC{4.0}
        \def\hD{3.6}
        \def\hE{3.2}
        \def\hF{2.8}
        
        \def\tA{0.8}
        \def\tB{1.0}
        \def\tC{1.4}
        \def\tD{1.8}
        \def\tE{2.2}
        \def\tF{2.6}

        \draw[thick] (-12, 0) -- (12, 0); 

        \node[font=\Large\bfseries] at (-6.1, 9) {Our Algorithm (ALG)};
        
        \def\xLone{-11.2}
        \def\xLtwo{-8.2}   
        \def\xLthree{-5.2} 
        \def\xLfour{-2.2}  

        \filldraw[fill=blue!30, draw=black, thick] (\xLone,0) rectangle ++(\W,\hA) node[midway] {$v_1$};
        \filldraw[fill=red!30, draw=black, thick] (\xLone,\hA) rectangle ++(\W,\tA) node[midway] {\small tiny};

        \filldraw[fill=blue!30, draw=black, thick] (\xLtwo,0) rectangle ++(\W,\hB) node[midway] {$v_2$};
        \filldraw[fill=red!30, draw=black, thick] (\xLtwo,\hB) rectangle ++(\W,\tB) node[midway] {\small tiny};

        \filldraw[fill=blue!30, draw=black, thick] (\xLthree,0) rectangle ++(\W,\hC) node[midway] {$v_3$};
        \filldraw[fill=blue!30, draw=black, thick] (\xLthree,\hC) rectangle ++(\W,\hF) node[midway] {$v_6$};

        \filldraw[fill=blue!30, draw=black, thick] (\xLfour,0) rectangle ++(\W,\hD) node[midway] {$v_4$};
        \filldraw[fill=blue!30, draw=black, thick] (\xLfour,\hD) rectangle ++(\W,\hE) node[midway] {$v_5$};

        \draw[decorate, decoration={brace, amplitude=5pt, mirror}, thick] (\xLone, -0.3) -- (\xLtwo+\W, -0.3) node[midway, below=8pt] {Largest $1/3$};
        \draw[decorate, decoration={brace, amplitude=5pt, mirror}, thick] (\xLthree, -0.3) -- (\xLfour+\W, -0.3) node[midway, below=8pt] {Smallest $2/3$ (Paired)};

        \draw[very thick, gray!60] (0, -2) -- (0, 9.5);

        \node[font=\Large\bfseries] at (6.1, 9) {Optimal Algorithm (OPT)};
        
        \def\xRone{1.0}
        \def\xRtwo{2.8}   
        \def\xRthree{4.6} 
        \def\xRfour{6.4}  
        \def\xRfive{8.2}  
        \def\xRsix{10.0}  

        \filldraw[fill=blue!30, draw=black, thick] (\xRone,0) rectangle ++(\W,\hA) node[midway] {$v^*_1$};
        \filldraw[fill=red!30, draw=black, thick] (\xRone,\hA) rectangle ++(\W,\tA) node[midway] {\small tiny};

        \filldraw[fill=blue!30, draw=black, thick] (\xRtwo,0) rectangle ++(\W,\hB) node[midway] {$v^*_2$};
        \filldraw[fill=red!30, draw=black, thick] (\xRtwo,\hB) rectangle ++(\W,\tB) node[midway] {\small tiny};

        \filldraw[fill=blue!30, draw=black, thick] (\xRthree,0) rectangle ++(\W,\hC) node[midway] {$v^*_3$};
        \filldraw[fill=red!30, draw=black, thick] (\xRthree,\hC) rectangle ++(\W,\tC) node[midway] {\small tiny};

        \filldraw[fill=blue!30, draw=black, thick] (\xRfour,0) rectangle ++(\W,\hD) node[midway] {$v^*_4$};
        \filldraw[fill=red!30, draw=black, thick] (\xRfour,\hD) rectangle ++(\W,\tD) node[midway] {\small tiny};

        \filldraw[fill=blue!30, draw=black, thick] (\xRfive,0) rectangle ++(\W,\hE) node[midway] {$v^*_5$};
        \filldraw[fill=red!30, draw=black, thick] (\xRfive,\hE) rectangle ++(\W,\tE) node[midway] {\small tiny};

        \filldraw[fill=blue!30, draw=black, thick] (\xRsix,0) rectangle ++(\W,\hF) node[midway] {$v^*_6$};
        \filldraw[fill=red!30, draw=black, thick] (\xRsix,\hF) rectangle ++(\W,\tF) node[midway] {\small tiny};
        
        \draw[decorate, decoration={brace, amplitude=5pt, mirror}, thick] (\xRone, -0.3) -- (\xRsix+\W, -0.3) node[midway, below=8pt] {All $B^*$ subsets completed with tiny elements};

        
        \draw[dashed, very thick, black] (-12, 5.0) -- (12, 5.0)
            node[pos=0, left, fill=white, inner sep=2pt, text=black] {$\mu(S^*)$}
            node[pos=1, right, fill=white, inner sep=2pt, text=black] {$\mu(S^*)$};

    \end{tikzpicture}
    \caption{A visual representation for one component of the $\frac{3}{2}$-approximation guarantee involving intermediate elements. Our algorithm completes the largest one-third of the intermediate elements with tiny elements and pairs the remaining two-thirds with each other (among the intermediate elements not paired by the optimal algorithm), guaranteeing they cross the $\mu(S^*)$ threshold without relying on further tiny elements.}
    \label{fig:3_2_approximation_visual}
\end{figure}
    
    More formally, to compare $W$ and $W^*$, we assume without loss of generality that $\mu(S^*_1 \backslash \{v^*_1\}) \leq \cdots \leq \mu(S^*_{B^*} \backslash \{v^*_{B^*}\})$ (for any reversed pair of subsets, we may swap their non-intermediate elements while keeping each complete). In particular, $$\sum_{j=\lfloor \frac{B^*}{3} \rfloor+1}^{2\lfloor \frac{B^*}{3} \rfloor} \mu(S^*_j \backslash \{v^*_j\}) \leq \frac{1}{2}W^*,$$ so it suffices to show that $W \leq \frac{3}{2} \sum_{j=\lfloor \frac{B^*}{3} \rfloor+1}^{2\lfloor \frac{B^*}{3} \rfloor} \mu(S^*_j \backslash \{v^*_j\})$. Whenever our algorithm adds 1 subintermediate or 2 subsubintermediate elements to a subset, the subset becomes completed; because $x = x^*$, $y = y^*$, and we choose the smallest (with respect to value under $\mu$) such elements first, we have that each of $\mu(S_{z+1} \backslash \{v_{z+1}\}),\dots,\mu(S_{x+y+z} \backslash \{v_{x+y+z}\})$ is less than or equal to a corresponding, distinct summand in $\sum_{j=\lfloor \frac{B^*}{3} \rfloor+1}^{2\lfloor \frac{B^*}{3} \rfloor} \mu(S^*_j \backslash \{v^*_j\})$. As for the remaining summands in $W$, each corresponding subset must have been completed by an element with value less than $\frac{1}{2}(\mu(S^*)-c) \equiv \frac{1}{2}(\mu(S^*)-\mu(v_{\lfloor \frac{B^*}{3} \rfloor}))$, whereas the subset is necessarily completed after adding a total value of $(\mu(S^*)-\mu(v_{\lfloor \frac{B^*}{3} \rfloor}))$, so those summands are at most $\frac{3}{2}(\mu(S^*)-\mu(v_{\lfloor \frac{B^*}{3} \rfloor}))$. On the other hand, each summand in $\sum_{j=\lfloor \frac{B^*}{3} \rfloor+1}^{2\lfloor \frac{B^*}{3} \rfloor} \mu(S^*_j \backslash \{v^*_j\})$ is at least $\mu(S^*)-\mu(v_{\lfloor \frac{B^*}{3} \rfloor})$. Comparing summand-by-summand, we conclude $$W \leq \frac{3}{2} \sum_{j=\lfloor \frac{B^*}{3} \rfloor+1}^{2\lfloor \frac{B^*}{3} \rfloor} \mu(S^*_j \backslash \{v^*_j\}).$$ Since the algorithm completes $S_1,\dots,S_{\lfloor \frac{B^*}{3} \rfloor}$ and pairs the other $\lceil \frac{2B^*}{3} \rceil$ intermediate elements to form $\lfloor \frac{\lceil \frac{2B^*}{3} \rceil}{2} \rfloor$ subsets, $B_0^* + B^* \leq B_0 + \frac{3}{2}B + 1$. \\ 
    
    Lastly, after the algorithm completes the claimed subsets, the elements not placed in subsets have total value $\mu(V) - \sum_{i=1}^r \mu(\{v_r\}) - W$, which is at least $\mu(V) - \sum_{i=1}^r \mu(\{v_r\}) - W^*$, the maximum total value of elements in the subsets of $P^*$ that contain only tiny elements. Our algorithm deals with this value by adding elements to subsets until they become completed, which occurs once their value reaches $\mu(S^*)$. Since each tiny element has value less than $\frac{1}{2}\mu(S^*)$, each completed subset has value less than $\frac{3}{2}\mu(S^*)$, whereas each medium or large subset in $P^*$ containing only tiny elements has value at least $\mu(S^*)$. Thus $C^* < \frac{3}{2}C + \frac{1}{2}$. \\

    If instead $r-i = \lfloor \frac{B^*}{3} \rfloor+1$, the algorithm attempts to complete an additional subset that already has an intermediate element before creating a new subset with tiny elements, so by a greedy stays ahead argument we still get $B_0^* + B^* + C^* < B_0 + \frac{3}{2}B + 1 + \frac{3}{2}C + \frac{1}{2}$. Combining the (in)equalities derived for $A,A^*,B_0,B_0^*,B,B^*,C,C^*$ yields $\mathrm{OPT} < \frac{3}{2}\mathrm{ALG} + \frac{3}{2} - \frac{1}{2}(l_0+m_0) - \frac{1}{2}B_0^*$, which is equivalent to the desired claim since the right-hand side is a multiple of $\frac{1}{2}$.
\end{proof}

\begin{corollary}
    The complete graph case of {\sc Rank Percentile Maximization} has a polynomial-time $\max\left(1,\frac{\frac{3}{2}l' - \frac{1}{2}(l_0+m_0) - \frac{1}{2}i^* + \frac{3}{2}}{\frac{3}{2}l' - \frac{1}{2}(l_0+m_0) - \frac{1}{2}i^* + \frac{5}{2}}\left(1+\frac{1}{l'}\right)\right)$-approximation algorithm, where $l_0$ and $m_0$ are the number of vertex singletons that are large and medium, respectively, $i^*$ is the number of subsets comprised of intermediate elements in an arbitrary optimal solution, and $l'$ is the number of large subsets, and $l'$ is the number of large subsets returned by the algorithm's rank maximization subroutine. 
\end{corollary}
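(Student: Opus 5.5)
The plan is to run the algorithm of Theorem \ref{complete_max_approx} as a subroutine and then post-process its partition, in the spirit of the reduction in Lemma \ref{max_approx}.

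\textbf{Algorithm.} First make the rank subroutine count only large subsets. As in the proof of Lemma \ref{max_approx}, perturb $\mu(S^*)$ upward to $\tilde\mu(S^*)$, the smallest rational above $\mu(S^*)$ with denominator $\prod_j \mu(\{v_j\})$. Then a subset has $\tilde\mu$-value at least $\tilde\mu(S^*)$ exactly when it is large for $\mu$. Run Algorithm \ref{complete_rank_max_alg} on this instance, producing $l'$ large subsets. Since $G$ is complete, every leftover medium or small subset can be absorbed into some large subset. By Lemma \ref{part_comb}, this absorption only helps once the percentile is at least $.5$. The output is therefore a partition whose only classes are the $l'$ large subsets and $S^*$, with percentile $\frac{l'+.5}{l'+1}$. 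We return the better of this and the trivial partition, which gives the outer $\max(1,\cdot)$ factor. If $l'=0$, the bound degenerates and the trivial comparison handles that case.

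\textbf{Upper bound on the optimum.} By Lemma \ref{frac_avg}, the percentile of any partition is a convex combination of $1$, $.5$ and $0$ with weights $\frac{l}{c+1}$, $\frac{m+1}{c+1}$, $\frac{s}{c+1}$. Medium and small classes only drag it down, so
\[
p^* \le \frac{l^*+.5}{l^*+1},
\]
where $l^*$ is the maximum number of large subsets, i.e.\ the optimum rank of the perturbed instance minus one. Theorem \ref{complete_max_approx}, applied to the perturbed instance, gives
\[
l^* + 1 \le \tfrac32 (l'+1) + 1 - \tfrac12(l_0+m_0) - \tfrac12 i^* .
\]
Write $K := \frac32 l' - \frac12(l_0+m_0) - \frac12 i^* + \frac32$. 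The map $t\mapsto \frac{t+.5}{t+1}$ is increasing, so $p^* \le \frac{K+.5}{K+1}$ after bookkeeping. The constants in the stated bound, $K$ in the numerator and $K+1$ in the denominator, suggest bounding $\frac{l^*+.5}{l^*+1}$ instead by $\frac{l^*}{l^*+1}$ plus a correction. I would adjust the additive constant to match the statement.

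\textbf{Approximation ratio.} Divide the two bounds:
\[
\frac{p^*}{\mathrm{ALG}} \le \frac{K}{K+1}\cdot \frac{l'+1}{l'+.5} .
\]
Then weaken $\frac{l'+1}{l'+.5}$ to $\frac{l'+1}{l'} = 1+\frac{1}{l'}$, using $\mathrm{ALG} \ge \frac{l'}{l'+1}$.

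\textbf{Main obstacle.} The hard step is making the perturbation compatible with Theorem \ref{complete_max_approx}. In the perturbed instance, the original medium singletons become small, and $l_0,m_0,i^*$ should be interpreted relative to the perturbed threshold. I would first check that intermediate and tiny classes shift consistently under the infinitesimal perturbation, and otherwise simply state the parameters for the perturbed instance.
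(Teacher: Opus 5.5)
Your route is the paper's route. You perturb $\mu(S^*)$ as in Lemma \ref{max_approx} so that only large subsets count. You run Algorithm \ref{complete_rank_max_alg}, merge the leftovers into a large subset (possible since $G$ is complete), and settle $\mu(V\backslash S^*)\le\mu(S^*)$ with the trivial partition. Finally you bound $l^*\le K$ via Theorem \ref{complete_max_approx} and monotonicity.

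The gap is in the final ratio step. The display
\[
\frac{p^*}{\mathrm{ALG}}\le\frac{K}{K+1}\cdot\frac{l'+1}{l'+.5}
\]
does not follow from $p^*\le\frac{K+.5}{K+1}$, and it is false whenever $l'\ge 2$. The left side is at least $1$ because $p^*\ge\mathrm{ALG}$. The right side is at least $1$ only if $K\ge 2l'+1$, but $K\le\frac32 l'+\frac32<2l'+1$. Dividing your two bounds honestly gives only $\frac{K+.5}{K+1}\left(1+\frac{1}{l'}\right)$, which is weaker than the statement. ``Adjusting the additive constant'' is not an argument.

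The missing observation is that the $.5$ contributed by $S^*$ appears in both percentiles, so the two must be removed together, using $l'\le l^*$ (the optimum has at least as many large subsets as your output). Since $l^*/l'\ge 1$, you have $\frac{l^*+.5}{l'+.5}\le\frac{l^*}{l'}$, hence
\[
\frac{p^*}{\mathrm{ALG}}=\frac{l^*+.5}{l^*+1}\cdot\frac{l'+1}{l'+.5}\le\frac{l^*}{l^*+1}\cdot\frac{l'+1}{l'}\le\frac{K}{K+1}\left(1+\frac{1}{l'}\right).
\]
The last inequality uses monotonicity of $t\mapsto\frac{t}{t+1}$ and $l^*\le K$. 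Equivalently, $\frac{K+.5}{K+1}\cdot\frac{l'+1}{l'+.5}\le\frac{K}{K+1}\cdot\frac{l'+1}{l'}$ holds if and only if $l'\le K$, which is true since $l'\le l^*\le K$.

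The paper computes the ratio directly as $\frac{l^*}{l^*+1}\cdot\frac{l'+1}{l'}$, with the $.5$ dropped from both percentiles and no comment. That expression is a valid upper bound only because of this same observation, so you were right about where the stated constants come from. Your concern about reading $l_0,m_0,i^*$ relative to the perturbed threshold is legitimate; the paper makes the same move silently, so stating the parameters for the perturbed instance is acceptable.
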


\begin{proof}
    We consider 3 cases. If $\mu(V \backslash S^*) < \mu(S^*)$, any partition of $V$ is comprised of small subsets, the percentile of which is maximized by having just 1 small subset, so the partition $(V \backslash S^*,S^*)$ maximizes the percentile of $S^*$. Similarly, if $\mu(V \backslash S^*) = \mu(S^*)$, any valid partition of $V$ is comprised of small subsets or one medium subset, the percentile of which is maximized by having just 1 medium subset, so the partition $(V \backslash S^*,S^*)$ maximizes the percentile of $S^*$. If $\mu(V \backslash S^*) > \mu(S^*)$, there exists a partition with a large subset, and since all partitions are valid, the maximum percentile is achieved by a partition containing $l^*$ large subsets, 0 medium subsets, and 0 small subsets, where $l^*$ is the maximum number of mutually disjoint subsets possible (by the proof of Lemma \ref{max_approx}). In other words, the maximum percentile is $\frac{l^*}{l^*+1}$. Instead, we use the algorithm described for Theorem \ref{complete_max_approx} to obtain $l'$ large subsets (after applying the preprocessing step in the proof of Lemma \ref{max_approx} to preclude medium subsets) and combine any remaining elements into one of these subsets, obtaining a percentile of $\frac{l'+1}{l'}$. Hence our approximation ratio in this case is $\frac{l^*}{l^*+1}\cdot\frac{l'+1}{l'}$. We can upper bound the first fraction by adding the same thing to the numerator and denominator, replacing $l^*$ with its upper bound given by the right-hand side of the approximation guarantee in Theorem \ref{complete_max_approx} plus $\frac{1}{2}$ (since we no longer add 1 to get the rank). This gives the desired factor.       
\end{proof}

\subsection{Linear Component Case}

Sometimes, the objects we wish to rank may be organized within each connected component by a single variable, with the similar elements in a connected component being those that have adjacent or consecutive values of the variable. For instance, each connected component may correspond to a television series, with the elements being its episodes chronologically ordered. Subsets in $\mathcal{S}$ may then be interpreted as seasons of a given series, and the above computational problems correspond to optimizing the rank of $S^*$, (a season of) a fixed series, with respect to hours watched, number of award nominations, etc. While $\mathcal{S}$ still contains exponentially many subsets in general, all four of the above problems are polynomial-time solvable in this special case where all connected components of $G$ are line graphs, called the \textit{linear component} case. \\

In this case, we will write $V = \{v_1,\dots,v_n\}$ to mean that each vertex $v_i$ is adjacent to at most $v_{i-1}$ and $v_{i+1}$ (if they exist) and no other vertices, so each connected component of $G$ is a line graph of the form $\{v_j,\dots,v_{j+k-1}\}$, where $k$ is the number of vertices in the component. Thus $S^*$ constitutes a line subgraph in one of the connected components, though as a preprocessing step for the rank percentile algorithms we discard $S^*$ from the description of $V$ and shift the index of proceeding vertices by $-|S^*|$. This truncates a linear component of $G$ or decomposes it into two linear components, so $G$ is still of the form argued above. \\

First, the linear component case of {\sc Rank Minimization} is just a special case of {\sc Rank Minimization} as defined above, so it is still polynomial-time solvable (see Theorem \ref{min_in_P}). 

\begin{corollary}
    The linear component case of {\sc Rank Minimization} is polynomial-time solvable. 
\end{corollary}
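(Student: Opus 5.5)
This corollary needs no new argument. It follows once we check that the linear component case is an instance of the general problem to which Theorem \ref{min_in_P} applies. First I would observe that an instance of the linear component case is a tuple $(V,E,\mu,S^*)$ in which every connected component of $G$ is a path $\{v_j,\dots,v_{j+k-1}\}$. This is a restriction on the input graph only. The collection $\mathcal{S}$, the notion of a valid partition, and the objective (the number of $S_i$ with $\mu(S_i)>\mu(S^*)$) are defined exactly as in Section \ref{general_case}. So any algorithm solving {\sc Rank Minimization} on all inputs solves it on this subclass, and its running time is still polynomial in the (same) input size.

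Second, I would invoke the greedy BFS algorithm from the proof of Theorem \ref{min_in_P} directly. In the linear setting it simplifies nicely. Removing $S^*$, which is a contiguous segment of one path, leaves $G\backslash S^*$ as a disjoint union of paths: that component is truncated or split in two. The optimal partition then takes each path containing a large vertex as one block, since a path is connected and so lies in $\mathcal{S}$, and leaves every other vertex as a singleton. By Corollary \ref{rank_lower_bound}, the resulting rank $1+C_l$ is optimal. Computing the components of a union of paths is linear time, so the bound is immediate.

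There is no real obstacle here. The only point needing care is that the preprocessing convention of this subsection, which deletes $S^*$ and reindexes, must not alter the problem. It does not, because $G\backslash S^*$ is exactly the graph on which the algorithm of Theorem \ref{min_in_P} operates.
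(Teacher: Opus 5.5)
Your proposal is correct and matches the paper's own one-line argument: the linear component case is a special case of general {\sc Rank Minimization}, so Theorem~\ref{min_in_P} applies directly. Your unpacking of the BFS on $G \backslash S^*$ as a disjoint union of paths is accurate, and so is the optimality of rank $1+C_l$ via Corollary~\ref{rank_lower_bound}, but neither step is needed for the corollary.
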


As for {\sc Rank Maximization}, this special case reduces to the classical unweighted interval scheduling problem. 

\begin{theorem}
    The linear component case of {\sc Rank Maximization} is polynomial-time solvable. 
\end{theorem}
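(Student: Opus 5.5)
We reduce to unweighted interval scheduling on each linear component of $G \backslash S^*$, as the paper suggests. After deleting $S^*$, each connected component is a path $v_j,\dots,v_{j+k-1}$. The connected subsets of a path are exactly its contiguous intervals $\{v_a,\dots,v_b\}$. So a valid partition of the component is a partition of $\{j,\dots,j+k-1\}$ into consecutive intervals.

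The objective counts the parts $S_i$ with $\mu(S_i)\ge\mu(S^*)$ (or $>$ for the other convention). Components are independent, so it suffices to maximize this count separately in each component and sum the results.

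Within one component, we first show that maximizing the number of disjoint intervals $I$ with $\mu(I)\ge\mu(S^*)$ (call these \emph{qualifying} intervals) gives the answer. Take any family of pairwise disjoint qualifying intervals, say $I_1<\dots<I_t$ in left-to-right order. We extend it to a partition of the component by absorbing each uncovered gap of vertices into an adjacent chosen interval. Since $\mu$ is nonnegative and additive, the enlarged intervals remain contiguous and still qualifying. If $t=0$, we simply take the whole component as one part. Hence the maximum count for the component equals the maximum number of disjoint qualifying intervals. Conversely, the qualifying parts of any valid partition form such a family, so the two quantities coincide.

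This is now an instance of unweighted interval scheduling (maximum set of disjoint intervals), which the earliest-finishing-time greedy solves optimally. We will not enumerate all $O(k^2)$ intervals; we use a two-pointer sweep instead. Keep a left end $a$ and advance $b$ until $\mu(\{v_a,\dots,v_b\})\ge\mu(S^*)$. This interval is the earliest-finishing qualifying interval starting at or after $a$. The reason is that any qualifying interval starting at $a'\ge a$ and ending at $b'<b$ would force $\{v_a,\dots,v_{b'}\}$ to qualify, by monotonicity of $\mu$ under inclusion. We then select this interval, set $a=b+1$, and repeat.

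The standard exchange argument ("greedy stays ahead") shows optimality. The only nonstandard point is to check that earliest-finishing among \emph{all} qualifying intervals is achieved by the shortest interval from the current left pointer; the monotonicity argument above gives exactly this. The total time is $O(n)$ after computing prefix sums.

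Finally, we assemble the overall partition as the union over components together with $S^*$. The rank is one plus the summed counts.

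I expect the main obstacle to be bookkeeping rather than substance. It consists of the argument that leftover vertices can always be merged without destroying qualification or contiguity, and of the handling of the $\ge$ versus $>$ conventions, which only changes the threshold test in the sweep.
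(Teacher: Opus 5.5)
Your proposal is correct and takes the paper's approach: both reduce each linear component of $G \backslash S^*$ to unweighted interval scheduling over the medium-or-large contiguous intervals, then take the union over components. You add two refinements. First, you justify that a maximum disjoint family of qualifying intervals extends to a valid partition by absorbing the gaps, whereas the paper only asserts this equivalence. Second, you use monotonicity of $\mu$ to run the earliest-finishing-time greedy as an $O(n)$ two-pointer sweep, whereas the paper enumerates all $O(n^2)$ intervals as jobs and gets $O(n^2\log n)$.
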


\begin{proof}
    First suppose $G$ has exactly one connected component. Therefore, $G$ is a line graph with first vertex $v_1$ and last vertex $v_n$, and without loss of generality $S^* = \{v_1,\dots,v_k\}$. We reduce to unweighted interval scheduling as follows. For all $k < i \leq j \leq n$, if $\{v_i,\dots,v_j\}$ is medium or large, add a job occupying the interval $[i,j+.5]$. Thus the intervals are in one-to-one correspondence with the set of medium or large subsets of $V$, and two intervals being non-overlapping means that the corresponding subsets of $V$ are disjoint. As a result, maximizing the number of non-overlapping subsets is equivalent to maximizing the number of medium or large subsets with pairwise empty intersection, which is the same as maximizing the number of medium or large subsets in a partition of $V$ that includes $S^*$. If instead $G$ has multiple connected components, or $G \backslash S^*$ does as a result of $S^*$ being in the middle of a connected component, perform this reduction for each connected component and take the union of the optimal set of jobs over each instance; this still maximizes the number of medium or large subsets with pairwise empty intersection because it does so for each connected component, and a subset in $\mathcal{S}$ may not intersect multiple components. The preprocessing step is $O(n^2)$, and the greedy subroutine for unweighted interval scheduling is then $O(n^2 \log n)$, so our algorithm is $O(n^2 \log n)$.          
\end{proof}

For the percentile problems of this special case, we must also take care to the fact that when modifying a partition of $V$, the percentile of $S^*$ experiences ``inertia" based on how many subsets are already in the partition. However, due to the one-dimensional structure of subsets in this case, we may optimize the percentile with a relatively efficient dynamic programming approach.    

\begin{theorem} 
    The linear component case of {\sc Rank Percentile Minimization} and {\sc Rank Percentile Maximization} are polynomial-time solvable.
\end{theorem}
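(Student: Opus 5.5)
The plan is to reduce both percentile problems to computing, for every pair of integers $(c, q)$, the best achievable count of the numerator type among partitions with exactly $c$ subsets. Here $c$ is the number of subsets $S_i$ other than $S^*$, and the quantity to optimize is $2\cdot(\#\text{large}) + \#\text{medium}$. The percentile of a valid partition is
$$\frac{2L+M+1}{2(c+1)},$$
where $L$ and $M$ are the numbers of large and medium subsets among the $c$ subsets. For fixed $c$, minimizing (respectively maximizing) the percentile is therefore the same as minimizing (respectively maximizing) the integer score $2L+M$. The number of subsets satisfies $c \le n$, so it suffices to compute, for each $c \in [0,n]$, the minimum and maximum of $2L+M$ over valid partitions with exactly $c$ subsets. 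We then take the best ratio over all feasible $c$. Feasibility also matters: with $k$ connected components after removing $S^*$, we need $c \ge k$ and $c \le |V\setminus S^*|$.

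After the preprocessing step described above (discarding $S^*$ and reindexing), $G$ is a disjoint union of paths. A valid partition is then exactly a choice, within each path, of breakpoints cutting it into contiguous intervals $\{v_i,\dots,v_j\}$. Each interval is classified as large, medium, or small by comparing $\mu(\{v_i,\dots,v_j\})$ with $\mu(S^*)$. Precompute prefix sums so each such classification takes $O(1)$ time, and assign each interval the weight $w(i,j)\in\{2,1,0\}$ accordingly.

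Concatenate the components in index order and define $F_{\min}(j,t)$ (respectively $F_{\max}(j,t)$) as the optimal total weight of a partition of $v_1,\dots,v_j$ into exactly $t$ intervals. This is only defined when $v_j$ ends a component or we are mid-component with a cut allowed at $j$. The recurrence is
$$F(j,t) = \operatorname{opt}_{i} \bigl(F(i-1,t-1) + w(i,j)\bigr),$$
where $i$ ranges over indices in the same component as $v_j$ such that $v_{i-1}$ is either the last vertex of the previous component or in the same component. Equivalently, an interval may not cross a component boundary. The base case is $F(0,0)=0$, and infeasible states are set to $\pm\infty$. The table has $O(n^2)$ states with $O(n)$ transitions each, giving $O(n^3)$ time overall. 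Standard backtracking recovers the optimal partition.

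Finally, for each $c$ with $F(n,c)$ finite, evaluate $\frac{F(n,c)+1}{2(c+1)}$ and return the minimizing (respectively maximizing) $c$ together with its reconstructed partition. Correctness rests on the bijection between valid partitions in the linear component case and interval decompositions respecting component boundaries. This bijection holds because connected subsets of a path are exactly contiguous intervals, and no subset in $\mathcal{S}$ meets two components.

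The main obstacle is that the percentile is a ratio, so it is not additive over intervals, and a direct dynamic program on the percentile fails because of the "inertia" effect noted above. Lemma \ref{part_comb} shows that local merges interact with the global count. Fixing $c$ in the state and optimizing the additive numerator is what sidesteps this; I would verify carefully that the numerator is indeed additive across intervals and independent of the other intervals, which holds since each interval's type depends only on its own $\mu$-value relative to the fixed $\mu(S^*)$.
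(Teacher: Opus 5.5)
Your proposal is correct and follows essentially the same route as the paper: a dynamic program indexed by a prefix endpoint $j$ and the exact number $k$ of non-$S^*$ subsets, where the recurrence chooses the last contiguous interval and the answer optimizes over $k$ at the end. The only difference is cosmetic. You track the additive integer numerator $2L+M$, whereas the paper tracks the rescaled percentile via $\mathrm{OPT}[j,k]=\frac{1}{k+1}\bigl(k\,\mathrm{OPT}[i,k-1]+2C[i+1,j]-.5\bigr)$, which for fixed $k$ is the same recurrence; your version makes the optimal-substructure step immediate and explicitly includes singleton intervals $w(j,j)$, which the paper's $C[i,j]$ (defined only for $i<j$) omits.
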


\begin{proof}
Consider the following algorithm for {\sc Rank Percentile Minimization} under this case. \\ 

\begin{algorithm}[H]
\caption{Given an instance $(V,E,\mu,S^*)$ of {\sc Rank Percentile Minimization} in the linear component case, compute the minimum percentile of $S^*$}\label{linear_percentile_alg}
\begin{algorithmic}

\For{$1 \leq i \leq n$}
    \For{$i < j \leq n$}
       \State $C[i,j] = 
        \begin{cases}
            \frac{1.5}{2} & \text{if } \{v_i,\dots,v_j\} \text{ is large} \\
            \frac{1}{2} & \text{if } \{v_i,\dots,v_j\} \text{ is medium} \\
            \frac{.5}{2} & \text{if } \{v_i,\dots,v_j\} \text{ is small} \\
            \infty & \text{otherwise} 
       \end{cases}$
    \EndFor
    \State $\mathrm{OPT}[i,1] = C[1,i]$
\EndFor

\For{$2 \leq k \leq n$} 
    \For{$1 \leq j \leq n$}
        \State $\mathrm{OPT}[j,k] = \frac{1}{k+1}(\inf_{1 \leq i < j} \{k\mathrm{OPT}[i,k-1] + 2C[i+1,j] - .5\})$
    \EndFor
\EndFor

\State \Return $\min_{1 \leq k \leq n} \{\mathrm{OPT}[n,k]\}$
        
\end{algorithmic}
\end{algorithm}

To see that this algorithm is correct, we argue that $\mathrm{OPT}[j,k]$ gives the minimum percentile of $S^*$ under a valid partition with exactly $k$ subsets other than $S^*$ in the problem instance where the vertex set is $\{v_1,\dots,v_j\} \cup S^*$ and the similarity relation is $E|_{(\{v_1,\dots,v_j\} \cup S^*) \times (\{v_1,\dots,v_j\} \cup S^*)}$. (We also adopt the convention that if there is no such partition, the minimum percentile is $\infty$.) First, when $c = 1$, notice that we have $\mathrm{OPT}[j,1] = C[1,j]$, which by definition gives the percentile of $S^*$ under the only partition of $\{v_1,\dots,v_j\} \cup S^*$ with 1 subset other than $S^*$: the trivial partition $(\{v_1,\dots,v_j\},S^*)$. (If $\{v_1,\dots,v_j\} \not\in \mathcal{S}$, then there is no such valid partition of $\{v_1,\dots,v_j\} \cup S^*$, and we set $C[1,j] = \infty$.) Suppose the claim holds for $k = k'-1 \in \mathbb{N}$ and all $1 \leq i < j$. Observe that if $k' \geq j$, then there any expression of the form $\mathrm{OPT}[i,k'-1]$ will be $\infty$ by hypothesis since we may not partition $i < j$ elements into $k'$ non-empty subsets; correspondingly, the infimum used to define $\mathrm{OPT}[j,k']$ is infinite. Otherwise, $k'-1 \leq j$. \\

Because $G$ has linear components, the valid partitions of $\{v_1,\dots,v_j\} \cup S^*$ with $k'$ subsets other than $S^*$ are exactly the valid partitions of $\{v_1,\dots,v_i\} \cup S^*$ plus the subset $[i+1,j]$, over all $1 \leq i < j$ such that $\{v_{i+1},\dots,v_j\} \in \mathcal{S}$. We claim that under the latter, $S^*$ has minimum percentile equal to $\frac{1}{k'+1}(k'\mathrm{OPT}[i,k'-1] + 2C[i+1,j] - .5)$. Indeed, $k'\mathrm{OPT}[i,k'-1]$ gives the number of large subsets plus .5 times the number of medium subsets in a given valid partition of $\{v_1,\dots,v_i\} \cup S^*$ plus .5; $2C[i+1,j]$, $[i+1,j]$. Now subtracting .5 and then multiplying by $\frac{1}{k'+1}$, we obtain the expression for the percentile of $S^*$ under a valid partition of $\{v_1,\dots,v_j\} \cup S^*$ with $k'$ subsets. Similarly, for given $i$ as above, the percentile under a partition with $k'$ subsets other than $S^*$ and the last subset $\{v_{i+1},\dots,v_j\}$ is a decreasing function of the percentile under the corresponding partition of $\{v_1,\dots,v_i\} \cup S^*$ with $k'-1$ subsets other than $S^*$, so the former is minimized by such a corresponding partition achieving minimum percentile, i.e., $\mathrm{OPT}[i,k'-1]$. Taking the infimum over all $1 \leq i < j$ results in minimizing the percentile under a partition of $\{v_1,\dots,v_j\} \cup S^*$ with $k'$ subsets other than $S^*$. (If the infimum is $\infty$, there is no valid partition of $\{v_1,\dots,v_j\} \cup S^*$ with $k'$ subsets other than $S^*$). \\

Finally, we note that there may be at most $n$ subsets in a partition of $V = \{v_1,\dots,v_n\}$, so $\min_{1 \leq k \leq n} \{\mathrm{OPT}[n,k]\}$ is the minimum percentile of $S^*$ under a valid partition of $V$. \\ 

To solve {\sc Rank Percentile Maximization}, we simply replace $\infty$ with $-\infty$, $\inf$ with $\sup$, and $\min$ with $\max$ in Algorithm \ref{linear_percentile_alg}; the proof of correctness is analogous. 
\end{proof}

\subsection{Uniform Value Case}

Another natural special case of the problem is where $\mu$ assigns the same value to each element of $V$; concretely, for all $v \in V$, $\mu(\{v\}) = 1$ (without loss of generality). Let $|S^*| = k \in \mathbb{N}$. Then, a subset $S$ is large if $|S| > |S^*|$, medium if $|S| = |S^*|$, and small if $|S| < |S^*|$. We call this the \textit{uniform value} case. \\

To see an application of this, think of each element of $V$ as the physical location of a specific type of resource, like a hospital or a polling place, and $E$ the set of roads connecting them. Now each $S \in \mathcal{S}$ can be thought of as a region bounded by roads, and the rank or percentile of $S^*$ under a partition indicates how well-resourced $S^*$ is as a region. \\

Due to Theorem \ref{min_in_P}, the uniform value case of {\sc Rank Minimization} is polynomial-time solvable, and because each element of $V$ has the same value, all singletons are small subsets, so the partition that minimizes rank also maximizes the number of subsets in a valid partition of $V$ while having no medium or large subsets, hence achieving the minimum percentile of $S^*$ as well. Therefore, {\sc Rank Percentile Minimization} is polynomial-time solvable as well. \\ 

Given a graph $G_0$ with $kt$ vertices for any $t \in \mathbb{N}$, if we let $G$ consist of $G_0$ with $k$ additional vertices forming $S^*$, then finding a rank of at least $t+1$ is the same as the {\sc NP}-complete problem $Vk${\sc (connected)}, so the uniform value case of {\sc Rank Maximization} is {\sc NP}-hard \cite{Vkconnectedresult, Vkconnected}. \\

We can also note that the uniform value case of {\sc Rank Maximization} is the same as the special case of the optimization version of $k$-{\sc Set Packing} where the set-system $(V,\mathcal{S})$ consists of the vertices of an undirected graph $G$ with its subsets of size $k$ that form a connected subgraph of $G$. In other words, given an undirected graph $G$, our objective is to exhibit a maximum-cardinality union of vertex-disjoint induced subgraphs of $G$ with $k$ vertices (henceforth referred to as \textit{$k$-components}). Moreover, we have a $\frac{k+2}{3}$-approximation algorithm for $k$-{\sc Set Packing} \cite{localsearchapprox}. \\

If $G$ is a connected circulant graph, then it is known that $G$ contains a Hamiltonian path \cite{hampathcirculant} and hence contains $\lfloor \frac{n}{k} \rfloor$ disjoint $k$-components, since we can place the first $k$ vertices in a $k$-component and recurse. We give an algorithmic proof of this fact, allowing us to compute a rank-maximizing partition of $V$ in polynomial-time. Denote $G = C_n(s_1,\dots,s_l)$, where $s_1,\dots,s_l < \frac{n}{2}$ are called \textit{jumps}; writing $V = \{0,1,\dots,n-1\}$, we have $(i,j) \in E \iff j-i \equiv \pm s_r$ (mod $n$).



 
\begin{theorem}
    The uniform value case of {\sc Rank Maximization} with the additional restriction that $G \backslash S^*$ be connected and circulant is polynomial-time solvable.  
\end{theorem}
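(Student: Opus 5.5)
Since every element has value $1$ and $|S^*| = k$, a subset of $V \backslash S^*$ is medium or large exactly when it has at least $k$ vertices. Write $N := |V \backslash S^*|$. The plan is to sandwich the optimal rank between two matching bounds.

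\emph{Upper bound.} Any valid partition of $V \backslash S^*$ has at most $\lfloor N/k \rfloor$ subsets of size at least $k$, because these subsets are disjoint. So the rank of $S^*$ is at most $1 + \lfloor N/k \rfloor$.

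\emph{Lower bound.} It is enough to find, in polynomial time, a Hamiltonian path $u_0,u_1,\dots,u_{N-1}$ of $H := G \backslash S^*$. Given such a path, we proceed as follows.
\begin{enumerate}
\item Take the consecutive blocks $\{u_{jk},\dots,u_{jk+k-1}\}$ for $0 \leq j < \lfloor N/k \rfloor$.
\item Append the leftover $N - k\lfloor N/k \rfloor$ vertices to the last block. If $\lfloor N/k \rfloor = 0$, instead take the whole path as a single small subset.
\end{enumerate}
Each block is a subpath, so it lies in $\mathcal{S}$ by the walk characterization of Section \ref{general_case}. Each block is medium or large, and together with $S^*$ the blocks form a valid partition. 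This attains rank $1 + \lfloor N/k \rfloor$, which matches the upper bound.

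\emph{Constructing the Hamiltonian path.} Relabel $H = C_N(s_1,\dots,s_l)$ on $\mathbb{Z}_N$. Connectivity gives $\gcd(s_1,\dots,s_l,N) = 1$. The construction is an induction on the number of jumps.
\begin{enumerate}
\item \emph{One jump.} If $d = \gcd(s_1,N) = 1$, the sequence $0, s_1, 2s_1, \dots, (N-1)s_1$ is a Hamiltonian path.
\item \emph{Several jumps.} Otherwise, let $d = \gcd(s_1,N)$. The jump $s_1$ splits $\mathbb{Z}_N$ into $d$ cosets of $\langle s_1 \rangle$, and each coset is a cycle of length $N/d$. The remaining jumps induce a connected circulant on the quotient $\mathbb{Z}_d$. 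By induction (one fewer jump, which is polynomial-time), that quotient circulant has a Hamiltonian path $c_0, c_1, \dots, c_{d-1}$ of cosets, where consecutive cosets are joined by some jump $s_r$.
\item \emph{Snaking.} Traverse coset $c_0$ along its $s_1$-cycle, ending at some vertex $x$. Cross via the jump $s_r$ to $x + s_r \in c_1$. Traverse $c_1$ along its cycle starting at $x + s_r$, and continue in this way. The cycle in each coset can be entered at any vertex, and every vertex of a coset has a neighbor via $s_r$ in the next coset. So the crossings always exist, and the concatenation visits every vertex exactly once.
\end{enumerate}
This is the standard argument behind the result of Chen and Quimpo \cite{hampathcirculant}, made explicit. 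It runs in $O(Nl)$ time.

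The main obstacle is the bookkeeping in the snaking step. Two points need checking. First, the jump $s_r$ connecting $c_i$ to $c_{i+1}$ in the quotient must lift to an edge from the exit vertex of $c_i$. This holds because the coset graph is vertex-transitive, so any vertex of $c_i$ can serve as the exit. Second, we need the well-definedness of the quotient circulant, including jumps that become $0$ or $\pm$ each other modulo $d$. I would handle these by discarding jumps that vanish mod $d$ and checking that the remaining jumps still generate $\mathbb{Z}_d$, which follows from connectivity of $H$.
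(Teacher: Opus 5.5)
Your proposal is correct and takes essentially the same route as the paper. The paper, combined with the remark just before the theorem, also reduces the problem to finding a Hamiltonian path in $G \backslash S^*$ and cutting it into consecutive blocks of $k$ vertices. It builds that path by the same strong induction on the number of jumps: it contracts the $\gcd(n,s_1)$ cycles of $C_n(s_1)$ to a connected circulant with fewer jumps, then splices Hamiltonian paths of the cycles along a Hamiltonian path of the contracted graph. Your explicit upper bound $1+\lfloor N/k\rfloor$ and your use of $N = |V \backslash S^*|$ are slightly more careful than the paper's write-up, as is your handling of jumps that vanish or coincide modulo $d$.
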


\begin{proof}
    We prove the stronger statement that there exists a Hamiltonian path in $G \backslash S^*$ by strong induction on the number of jumps $l$. If $l = 1$, then $G \backslash S^*$ is connected if and only if $\gcd(n,s_1) = 1$, but this means that $\{s_1\}$ generates the additive group $\mathbb{Z}/n\mathbb{Z}$, and so the path $(0,s_1,2s_1,\dots,(n-1)s_1)$ is Hamiltonian. Now for fixed $l > 1$, assume the statement holds for $l-1,l-2,\dots,1$. Consider the following circulant subgraph of $G \backslash S^*$, $C_n(s_1)$. It consists of $g_1 = \frac{\mathrm{lcm}(n,s_1)}{s_1} = \gcd(n,s_1)$ cycles (the number of $s_1$ increments needed to return to the initial vertex), each having $b_1 = \frac{n}{g_1}$ vertices by symmetry. Also note that the vertices $0,1,\dots,g_1-1$ are in $g_1$ distinct cycles (because $n$ and $s_1$ are both multiples of $g_1$ and hence $ts_1 - n$ is divisible by $g_1$ for any $t \in \mathbb{Z}$, so no two elements of a cycle may have a difference of greater than $-g_1$ but less than $g_1$ (mod $n$)). Thus label the cycles as $0,1,\dots,g_1-1$ corresponding to which of these vertices it contains. Now in each cycle, perform edge contractions until each cycle only has one vertex, denoted $0,1,\dots,g_1-1$ as above. Note that the vertices in cycle $c$ are described by the set $\{c + ts_1$ (mod $n$)$: t \in \mathbb{Z}\}$, and so the set of possible differences between two vertices in cycles $c_1$ and $c_2$ (mod $n$) is given by $\{c_2-c_1 + ts_1$ (mod $n$)$: t \in \mathbb{Z}\} = -\{c_1-c_2 + ts_1$ (mod $n$)$: t \in \mathbb{Z}\}$. \\ 
    
    Fix $x \in \{2,\dots,l\}$ and consider $s_x$. Note that given two cycles $c_1,c_2 \in \{0,1,\dots,g_1-1\}$, $c_2-c_1 \in \{-\lfloor \frac{g_1}{2} \rfloor,-\lfloor \frac{g_1}{2} \rfloor +1,\dots,0,1,\dots,\lfloor \frac{g_1}{2} \rfloor\}$ (mod $n$). Now, if there exists $t_0 \in \mathbb{Z}$ such that $c_2-c_1 + t_0s_1 = \pm s_x$ (mod $n$), then $-(c_2-c_1) + -t_0s_1 = \mp s_x$ (mod $n$). Moreover, replacing $\pm t_0s_1$ with another integer (mod $n$) results in adding a non-zero multiple of $s_1$, hence a non-zero multiple of $g_1$, to the left-hand side of the equation (mod $n$), but $\pm (c_2-c_1)$ is uniquely determined by its remainder mod $n$. Therefore, a jump of $s_x$ either does not induce any edges in the contracted graph or induces an edge between $c_2$ and $c_1$ in the contracted graph exactly when $c_2-c_1 = \pm s_x'$ for unique $s_x' \in \{-\lfloor \frac{g_1}{2} \rfloor,-\lfloor \frac{g_1}{2} \rfloor +1,\dots,0,1,\dots,\lfloor \frac{g_1}{2} \rfloor\}$. \\
    
    Hence the contracted graph is circulant (and clearly it is connected) with at most $l-1$ distinct jumps, so by the inductive hypothesis there exists a Hamiltonian path in that graph. Moreover, the above expressions show that, if two vertices from different cycles are adjacent in $G \backslash S^*$, then by adding the multiples $s_1,2s_1,\dots,(b_1-1)s_1$ (mod $n$) to both of the vertices, we see that every vertex in one cycle has an edge to a vertex in the other cycle in $G \backslash S^*$. As a result, by replacing each vertex in the Hamiltonian path of the contracted graph with a Hamiltonian path in the corresponding cycle, we obtain a Hamiltonian path in $G \backslash S^*$ (since in $G \backslash S^*$, we can move between cycles that share an edge by taking an edge from any vertex in the first cycle).
\end{proof}

Furthermore, in the setting of the preceding theorem, unless there are no large subsets in $G$ (in which case the problem is trivial), Lemma \ref{max_approx} tells us to just focus on maximizing the number of large subsets to maximize the percentile. 

\begin{corollary}
    The uniform value case of {\sc Rank Percentile Maximization} with the additional restriction that $G \backslash S^*$ be connected and circulant is polynomial-time solvable.  
\end{corollary}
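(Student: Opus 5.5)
The plan is to exhibit an explicit optimal partition in closed form, using the Hamiltonian path from the preceding theorem, and then to argue optimality through Lemma \ref{max_approx} and Lemma \ref{part_comb}. Let $n' = |V \backslash S^*|$ and $k = |S^*|$. Because $G \backslash S^*$ is connected and circulant, the preceding theorem gives a Hamiltonian path $w_1,\dots,w_{n'}$ in $G \backslash S^*$. It is found constructively by the induction on jumps, in polynomial time. Consecutive blocks of this path are connected, so they lie in $\mathcal{S}$.

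The first step handles the trivial regimes. If $n' < k$, every subset of $V \backslash S^*$ is small, and Lemma \ref{part_comb} with $l = m = 0$, $s \geq 2$ shows that merging small subsets raises the percentile. Hence the single subset $V \backslash S^*$, which is connected, is optimal. If $n' = k$, the best partition is the single medium subset $V\backslash S^*$: a medium subset contributes $.5$ to the numerator, whereas any split produces only small subsets. So the candidate $(V\backslash S^*, S^*)$, with percentile $1/2$, beats every finer partition. This can be checked directly, since a finer partition has all parts small and percentile $.5/(c+1) \leq 1/4$.

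The main case is $n' > k$. A large subset then exists, and a partition consisting of that one large subset has percentile at least $1.5/2 > .5$. So the maximum percentile $p^*$ is at least $.5$, and by Lemma \ref{max_approx} the optimum is attained by maximizing the number of large subsets and absorbing all leftover elements into them within the single component. Large here means size at least $k+1$. Any family of disjoint large subsets has at most $t := \lfloor n'/(k+1) \rfloor$ members. The Hamiltonian path attains this bound: cut it into $t$ consecutive blocks of size $k+1$ and append the remaining $n' - t(k+1)$ vertices to the last block, which stays a path segment and hence stays connected. Every part of this partition is large, so its percentile is $(t+.5)/(t+1)$.

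To confirm optimality directly rather than relying only on the lemma, I would note that the percentile is a convex combination of $1$, $.5$ and $0$ in the sense used in the proof of Theorem \ref{max_hardness}. Any partition with $l \le t$ large, $m$ medium and $s$ small parts has percentile $(l+.5m+.5)/(l+m+s+1) \le (l+.5)/(l+1) \le (t+.5)/(t+1)$. The first inequality holds because $.5m/m = .5 < (l+.5)/(l+1)$ whenever $l \geq 1$, and the case $l=0$ is immediate. This direct inequality is the step needing the most care, particularly the handling of medium subsets. All constructions run in polynomial time, which completes the proof.
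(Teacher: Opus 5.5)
Your proposal is correct and follows the paper's route: take the Hamiltonian path of $G \backslash S^*$ from the preceding theorem and, as the paper does via Lemma \ref{max_approx}, reduce to maximizing the number of large subsets, which cutting the path into $\lfloor n'/(k+1) \rfloor$ blocks achieves. Your explicit handling of the cases $n' \leq k$ and your direct bound $(l+.5m+.5)/(l+m+s+1) \leq (l+.5)/(l+1) \leq (t+.5)/(t+1)$ fill in details the paper leaves implicit, and they make the appeal to the lemma unnecessary.
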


\section{Variants of the Problem} \label{variants}

\subsection{Equivalence Class Variant}

For some of the examples given above, it makes more sense to study the following \textit{equivalence class} variant of the above problems obtained by requiring that $E$ be an equivalence relation and redefining $\mathcal{S}$ as the set of singletons and equivalence classes of $V$. It turns out that, while there are still exponentially many possible partitions, this additional structure results in all four of the above problems becoming polynomial-time solvable.

\begin{corollary}
    The equivalence class variant of {\sc Rank Minimization} is polynomial-time solvable.
\end{corollary}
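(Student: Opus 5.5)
In the equivalence class variant, every allowed subset is either a singleton or a full equivalence class. A valid partition of $V \backslash S^*$ is therefore a choice, made independently for each equivalence class $K$ with $K \cap S^* = \emptyset$, of whether to keep $K$ as one subset or split it into singletons. I would first settle how $S^*$ sits relative to the classes. Since $S^* \in \mathcal{S} \cup \{\emptyset\}$, $S^*$ is empty, a singleton, or a whole class. If $S^*$ is a singleton lying in a class $K$ with $|K| \geq 2$, then $K$ cannot appear as a block, so the elements of $K \backslash S^*$ are forced to be singletons. So the classes split into \emph{free} classes, those disjoint from $S^*$, and a set of forced singletons.

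I would minimize rank separately within each free class $K$. The rank counts the blocks with $\mu > \mu(S^*)$, and the choices for different classes do not interact, so the minimum rank is $1$ plus the forced large singletons plus $\sum_K \min(\ell_K, \mathbf{1}[\mu(K) > \mu(S^*)])$. Here $\ell_K$ is the number of large singletons in $K$. By positivity of $\mu$, if $K$ contains a large element then $\mu(K) > \mu(S^*)$, so keeping $K$ whole contributes exactly $1$. If $K$ has no large element, splitting into singletons contributes $0$. Hence the optimal rule is to merge $K$ exactly when it contains a large singleton and otherwise leave it split.

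This agrees in spirit with Corollary \ref{rank_lower_bound}: each free class behaves like a connected component. Alternatively, one can observe that the variant is the ordinary problem on the graph whose components are the classes, with $\mathcal{S}$ restricted. I would not rely on Theorem \ref{min_in_P} directly, though, because its BFS output may produce sets that are neither singletons nor whole classes, for instance a proper subset of $K$ when $S^*$ lies inside $K$. The per-class argument avoids this.

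Running time is linear once the classes are computed, e.g.\ by BFS on the relation $E$. The only real obstacle is the bookkeeping for the case $S^* \subsetneq K$. I handle it by treating such elements as forced singletons, each contributing $1$ to the rank exactly when it is large, and this contribution is unavoidable under any valid partition.
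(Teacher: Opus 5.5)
Your proof is correct, and it takes a different route from the paper. The paper does not decompose by class. It reuses Theorem \ref{min_in_P}: restricting $\mathcal{S}$ can only raise the minimum rank, and the paper asserts that the BFS partition from that theorem already consists of whole classes (those containing a large element) plus singletons. That partition would then remain feasible, and hence optimal, in the variant. This reduction is shorter and ties the variant to the general algorithm. However, it is valid only when $S^*$ is empty, a whole class, or leaves $|K \backslash S^*| \leq 1$ in its class.

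It fails in exactly the case you flag: $S^*$ is a singleton in a class $K$ where $K \backslash S^*$ has at least two elements and contains a large one. There, BFS in $G \backslash S^*$ returns the block $K \backslash S^*$, which is not admissible. Moreover, the variant's optimum can be strictly larger than the unrestricted one. Take $K = \{a,b,c\}$, $S^* = \{a\}$, $\mu(\{a\}) = 1$ and $\mu(\{b\}) = \mu(\{c\}) = 2$. The unrestricted minimum rank is $2$, via the block $\{b,c\}$. In the variant, $b$ and $c$ are forced to be singletons, so the rank is $3$, which is what your formula gives.

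Your per-class decomposition with forced singletons yields the correct value in every case, so it is the more robust argument. The paper's monotonicity argument buys brevity at the cost of this edge case.
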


\begin{proof}
    Without the additional restrictions on $\mathcal{S}$, the algorithm in Theorem \ref{min_in_P} would just take as the partition sets the equivalence classes associated with each large element along with any remaining singletons. The minimum rank may only increase by removing subsets from $\mathcal{S}$, so this partition still achieves the minimum rank.  
\end{proof}

\begin{theorem}
    The equivalence class variant of {\sc Rank Maximization} is polynomial-time solvable.
\end{theorem}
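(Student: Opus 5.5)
In the equivalence class variant, $\mathcal{S}$ contains only singletons and equivalence classes of $E$. So a valid partition is determined by a choice, independently for each equivalence class $K$ disjoint from $S^*$, between keeping $K$ whole and splitting it into singletons. The plan is to show that this choice can be made optimally class by class.

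First I handle $S^*$ itself. Since $S^* \in \mathcal{S} \cup \{\emptyset\}$, it is empty, a singleton, or an entire equivalence class. If $S^*$ is a singleton $\{v^*\}$ lying in a class $K^*$ with $|K^*| \geq 2$, then $K^*$ can never be used as a block, because it would intersect $S^*$. The remaining vertices of $K^*$ must therefore be singletons. In every case, once $S^*$ is fixed, the classes $K$ with $K \cap S^* = \emptyset$ are independent decision units, and the leftover vertices of the class containing $S^*$ are forced to be singletons.

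Next, I observe that the rank is one plus a sum of contributions, one per unit. For a free class $K$, keeping it whole contributes $f_{\text{whole}}(K) = \mathbf{1}\{\mu(K) \geq \mu(S^*)\}$. Splitting it contributes $f_{\text{split}}(K) = |\{v \in K : \mu(\{v\}) \geq \mu(S^*)\}|$. Since the rank is additive over units and the choices do not interact, the maximum rank is
\[
1 + \sum_{v \text{ forced}} \mathbf{1}\{\mu(\{v\}) \geq \mu(S^*)\} + \sum_{K \text{ free}} \max\bigl(f_{\text{whole}}(K), f_{\text{split}}(K)\bigr).
\]
The algorithm computes the classes (for example by BFS, or by reading the relation directly), evaluates both options for each class, and takes the better one. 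This runs in $O(n^2)$ time. The same argument works verbatim under the strict-inequality convention for rank.

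Two points need care. The first is checking that no partition mixing parts of different classes is valid; this is immediate, since every member of $\mathcal{S}$ is a singleton or a whole class. The second is the boundary case where $S^*$ is a proper subset of its class, which is resolved by forcing singletons as above. I expect the only real obstacle to be stating the decomposition of valid partitions cleanly. After that, optimality follows by exchange: any valid partition can be changed one class at a time to the locally better option without decreasing the rank.
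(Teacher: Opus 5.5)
Your proof is correct and follows essentially the same route as the paper. The class containing $S^*$ is forced, every other class independently contributes either its whole-class indicator or its count of singletons with $\mu(\{v\}) \geq \mu(S^*)$, and you pick the better option per class; the paper phrases this as keeping a class whole exactly when at most one of its elements is large. Your explicit $\max\bigl(f_{\text{whole}}(K), f_{\text{split}}(K)\bigr)$ is slightly cleaner, since it automatically counts medium elements, which the paper's wording in terms of ``large'' elements glosses over.
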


\begin{proof}
    Consider the partition of $V$ obtained by taking each equivalence class exactly when at most one of its elements are large---excluding the equivalence class containing $S^*$ (for which the partition is predetermined). Then reverting any of these selected equivalence classes into singletons does not increase the rank of $S^*$ (since the value of a subset containing a large element does not increase), whereas consolidating singletons sets in a non-selected equivalence class decreases the rank (since by construction there are at least two large elements), and these are the only admissible deviations from this partition.   
\end{proof}

As for optimizing the percentile of $S^*$, respecting equivalence classes greatly simplifies the space of admissible partitions, decomposing it into linearly many choices: one for each equivalence class. 

\begin{theorem}
    The equivalence class variants of {\sc Rank Percentile Minimization} and {\sc Rank Percentile Maximization} are polynomial-time solvable. 
\end{theorem}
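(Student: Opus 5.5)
In the equivalence class variant, every valid partition is determined by an independent binary choice for each equivalence class $Q$ (other than the one containing $S^*$, whose treatment is fixed): either $Q$ is kept whole, or it is split into its singletons. Each choice contributes a pair of numbers to the percentile. Write $a_Q$ for the contribution to the numerator, i.e.\ the number of large subsets plus $.5$ times the number of medium subsets it produces. Write $b_Q$ for the number of subsets it contributes to the denominator. Keeping $Q$ whole gives $b_Q = 1$ and $a_Q \in \{0,.5,1\}$. Splitting $Q$ gives $b_Q = |Q|$ and $a_Q = l_Q + .5m_Q$, where $l_Q$ and $m_Q$ count the large and medium elements of $Q$. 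The percentile is then
\[
\frac{.5 + \sum_Q a_Q}{1 + \sum_Q b_Q},
\]
a ratio of two sums of separable choices. I would treat this as a $0$-$1$ fractional programming problem. Singleton classes, and any class whose two options coincide, have no degree of freedom and just add constants.

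The plan is to solve the problem by parametric search, as in Dinkelbach's method. For a candidate value $p$, the question "is there a choice with percentile $< p$?" is equivalent, since the denominator is positive, to
\[
\min \Bigl( .5 - p + \sum_Q (a_Q - p\, b_Q) \Bigr) < 0.
\]
This minimum decouples: for each class we pick whichever option has smaller $a_Q - p b_Q$. So each test runs in linear time. It suffices to run the test only at the finitely many candidate values $p$, and the denominator $1+\sum_Q b_Q$ is an integer between $1$ and $n$. For each possible denominator value $D$, the best achievable numerator is a knapsack-type quantity over the binary choices. Since the numerator is a half-integer bounded by $n$, a dynamic program over classes with state (current denominator, current numerator in halves) has $O(n^2)$ states and $O(n)$ stages. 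It computes every achievable pair $(\text{numerator}, D)$ in polynomial time, and we then take the min (or max) of the ratio. I would present this DP as the main algorithm and mention the Dinkelbach decoupling as the reason the structure is simple.

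As a sanity check, Lemma \ref{part_comb} can be applied class by class: merging $Q$ lowers (raises) the percentile iff $\frac{a_Q^{\text{split}}-a_Q^{\text{whole}}}{|Q|-1}$ exceeds (is below) the current percentile. This gives a greedy ordering of classes by this ratio as an alternative algorithm. The only delicate step is that the classes' benefits interact through the shared current percentile, so a one-pass greedy needs justification. That is why I would rely on the exact DP over (denominator, numerator) pairs, whose correctness is immediate from the independence of the per-class choices. Maximization is identical, with max replacing min.
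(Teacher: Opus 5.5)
Your proof is correct, but it takes a different route from the paper. The paper uses exactly the greedy you set aside as a sanity check. It starts from the partition of $S^*$ and singletons and repeatedly merges any equivalence class whose marginal ratio $\frac{l+.5m-1}{l+m+s-1}$ exceeds the current percentile. Its correctness argument via Lemma \ref{part_comb} has three parts:
\begin{itemize}
\item an optimal partition must merge every class whose ratio exceeds the optimum $p^*$;
\item every intermediate percentile is at least $p^*$, so the greedy only merges such classes;
\item at termination no remaining merge can lower the percentile.
\end{itemize}
The paper states the last part somewhat informally. Your Dinkelbach decoupling is in fact the cleanest justification of that greedy. At $p = p^*$ your per-class minimizer keeps $Q$ whole exactly when $(a_Q^{\text{split}}-a_Q^{\text{whole}})/(|Q|-1) > p^*$, so the optimal set of merged classes is a threshold set. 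If the greedy halted at some $p > p^*$, the mediant inequality (Lemma \ref{frac_avg}) applied to the unmerged classes of that threshold set would give one with ratio above $p$, which is a contradiction. Your dynamic program avoids this interaction analysis altogether. The numerator is a half-integer at most $n+.5$ and the denominator an integer at most $n+1$, so the reachable (numerator, denominator) pairs can be enumerated class by class in roughly $O(n^3)$ time. Correctness then follows directly from the independence of the per-class choices. Your approach is more robust and more general: minimization and maximization are handled identically, and it works for any objective that depends only on the counts of large, medium, and total subsets. The cost is a slower algorithm. The paper's greedy is faster and shows which classes an optimal partition merges, but its correctness depends on the more delicate threshold and exchange argument.
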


\begin{proof}
    We argue for {\sc Rank Percentile Minimization}; {\sc Rank Percentile Maximization} follows by a symmetric argument. Consider the following algorithm: initialize $P = P_0$ to be the partition consisting of singletons and $S^*$, and $p = p_0$ the percentile under $P$. While there exists an equivalence class not in $P$ for which $\frac{l+.5m-1}{l+m+s-1} > p$, add such an equivalence class to $P$ by combining the corresponding singletons and update $p$ to the percentile under $P$. \\
    
    Upon termination, we claim that $p$ is the minimum percentile $p^*$ of $S^*$ under a valid partition, and that it is achieved by $P$. To see this, note that a partition with percentile $p^*$ must contain any equivalence class for which $\frac{l+.5m-1}{l+m+s-1} > p^*$; otherwise, Lemma \ref{part_comb} implies that combining the equivalence class's singletons decreases the percentile, contradicting optimality of $p^*$. Any percentile $p$ realized during the execution of the algorithm necessarily satisfies $p \geq p^*$, so the equivalence classes it adds are part of an optimal partition $P^*$. To see that the rest are not, after the algorithm's termination we have that for any equivalence class not in $P$, $\frac{l+.5m-1}{l+m+s-1} \leq p$; if the inequality is strict, Lemma \ref{part_comb} states that adding the corresponding subset to $P$ would increase the percentile, and then the criterion for such only becomes weaker if such subsets have already been added by repeated application of the lemma. If it is equality, Lemma \ref{frac_avg} reveals adding the subset does not change the percentile and hence may be included or excluded. Thus $P$ consists of exactly the subsets in an optimal partition $P^*$, so the algorithm is correct. For runtime, we can sort, compare, and compute percentiles in polynomial time (e.g., by cross multiplication).       
\end{proof}

\subsection{Hierarchical Category Variant} \label{hierarchical_category_variant}

It is natural to extend the formalism in the preceding section to allow for subcategories of arbitrary depth in classifying elements and to allow an element to be classified under multiple (sub)categories, each instance having a different weight (value under $\mu$). In order to do this, we modify the set-up of our rank optimization problems as follows:

\begin{itemize}
    \item $G$ is comprised of the leaves of a rooted forest consisting of trees $T_1,\dots,T_m$ in which no nodes have exactly 1 child. 
    \item The root of each tree $T_j$ represents a \textit{category} of $V$ identified with all of its descendants that are leaves. Each descendant of the root that in turn has descendants is a \textit{subcategory} (of each of its ancestors) identified with all of its descendants that are leaves (we also still consider it a category of $V$). (In the case where the root is a leaf, we no longer call it a category of $V$, but it is still an element of $V$). 
    \item The elements of $V$ are partitioned into classes $\Tilde{V}$, each one representing all instances of a particular item under the category given by its parent node.
    \item The rank and percentile of $S^*$ are only optimized over valid partitions of $V$ such that each subset (including $S^*$) is a category of $V$ or the elements in a class in $\Tilde{V}$ that are not in a category subset in the partition.
\end{itemize}

Sometimes, we also add another node $v_0$ whose children are the roots of the above trees, yielding a composite tree $T'$. 

These rules are motivated by several real-world examples of ranking. \\ 

First, food manufacturers have numerous degrees of freedom when listing ingredients on packaged food labels. In the United States, for instance, even though the Food and Drug Administration (FDA) requires that all ingredients be listed in descending weight order, an ingredient consisting of multiple subingredients may either be listed on the label according to its total weight (with the subingredients listed in descending weight order using parentheses) or its subingredients may be listed independently without reference to the original ingredient (including recursive application of parentheses), cf. \textit{21 CFR 101.4(b)(2)} \cite{ingredientlabelrules}. One consequence of these rules is that an ingredient with no subingredients may be listed in multiple places on a food label according to its weight deriving from that specific place. For example, ``sugar'' may be listed as a subingredient of ``milk chocolate'' (its order among the subingredients determined by its weight in the milk chocolate) and as its own ingredient in the ingredient list (its order determined by its weight that is not part of the milk chocolate). \\

Similarly, there are many ways to organize the myriad greenhouse emissions that contribute to climate change, and the same type of emission can be broken up into the different uses that result in it being emitted. For this, the EPA uses the ``Common Reporting Tables" (CRT) prescribed by the Paris Agreement. As an example, ``Incineration of Waste" is part of the broader CRT Source Category 1A consisting of energy industry sources, but conversely can be decomposed further based on the gas emitted, e.g. $\mathrm{CO_2}$ (even if these gases are released in other contexts) \cite{EPAcategories}. \\   

As for the computational nature of this setting, this is the first setting where just minimizing rank is {\sc NP}-hard, resulting in all four of our problems being intractable and introducing the need for additional structure. 

\begin{theorem}
    The hierarchical category variant of {\sc Rank Minimization} is {\sc NP}-hard.  
\end{theorem}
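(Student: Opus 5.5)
Since the hierarchical category variant lets elements belong to several overlapping categories, the natural route is a reduction from a set-packing/cover problem. I would reduce from \textsc{X3C}, as in the proof of Theorem \ref{inapprox}, but via the decision question ``is there a valid partition in which $S^*$ has rank at most $r$?''. The mechanism that replaces graph connectivity is the class structure. Each item of the ground set is represented by several instances, one per category containing it. Once we pick a category as a subset of the partition, the other instances of the same item in unselected categories still have to be placed, and the only options are singleton-like ``class'' subsets, namely the leftover instances of a class not absorbed into a chosen category subset.

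For the construction, given an \textsc{X3C} instance with ground set $U$, $|U| = n = 3q$, and triples $C_1,\dots,C_m$, I would build one tree per triple $C_j$. Its root is a category with three leaves, one instance of each element $u \in C_j$. Instances of the same element $u$ across different trees are declared to lie in a common class of $\tilde V$. (If the definition forces classes to share a parent node, I would instead hang all instances of $u$ under a subcategory of a single tree and make the triples a second layer; the plan does not depend on this choice.) I set $S^*$ to be an isolated category with value $\mu(S^*)$. Each instance is given a value so that a single leftover class subset, i.e.\ the instances of $u$ not covered by chosen categories, is large whenever it is nonempty, while each triple category is small. One way is to make each instance value exceed $\mu(S^*)$ only when counted alone, but that fails, since the three instances in a category would then sum to a large value. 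Instead I would give instances in triple $C_j$ value $\epsilon$, and make the class of $u$ also contain a heavy ``anchor'' instance of value $> \mu(S^*)$ that lies in no triple. A leftover class with at least one instance is then large exactly when it contains the anchor, which is always. That is wrong too, so I would refine: put the anchor's weight in a separate per-element gadget whose class subset becomes small once any triple covering $u$ is chosen.

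I expect this value assignment to be the main obstacle: making ``element $u$ left uncovered'' (or ``covered twice'') cost exactly one unit of rank while chosen triples cost nothing. The alternative I would try first is the dual encoding. Make triple categories \emph{large} and leaf classes small, and ask whether the rank can be at most $1 + q$. Any partition must choose a set of categories and then place the remaining instances. Disjointness is enforced because an instance lies in only one chosen subset, so two chosen categories containing the same element conflict only if they share instances. Therefore I would make a single instance of each $u$ shared by all triples containing it, via leaves appearing in multiple trees, if the forest formalism allows it. I would also add a large padding category of weight forcing any partition leaving some $u$ in a class subset to create an extra large subset. Once the gadget is fixed, the two directions are routine. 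An exact cover gives exactly $q$ large subsets plus small leftovers. Conversely, rank $\le 1+q$ forces the chosen large categories to be $q$ disjoint triples covering $U$.
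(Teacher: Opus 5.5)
Your proposal does not yet contain a working reduction. The value assignment, which you correctly flag as the crux, is never fixed, and both encodings you sketch break.

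\textbf{First encoding.} Here triple categories are small and an uncovered element costs a unit. The triples sit in different trees, so they share no leaves. Nothing stops you from selecting every triple category at once. That empties every class and gives rank $1$ on every instance, so the design goal that chosen triples ``cost nothing'' is exactly what makes the instance trivial. Class subsets also cannot penalize double coverage, because covering $u$ more often only shrinks its leftover class. Your target property is misstated as well. You need the leftover class to be large exactly when it is the \emph{entire} class, i.e.\ $u$ is uncovered, not ``whenever it is nonempty.'' Otherwise an element covered by one of its triples still costs a unit.

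\textbf{Dual encoding.} If leaf classes are small regardless, choosing no categories already gives rank $1$. The ``padding category'' meant to prevent this is never specified. The disjointness mechanism also needs a leaf to belong to several trees, which a rooted forest does not allow.

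\textbf{Missing idea.} You do not need to forbid overlapping triples structurally, and you do not need chosen triples to be free. Let each chosen triple cost one unit and each uncovered element cost one unit, and let counting do the rest. The paper uses X3C instances in which every element lies in exactly three triples (implicit in its claim that every class has value $6$). It sets $\mu(S^*) = 5$ and gives every instance weight $2$. Then:
\begin{itemize}
\item every triple category has value $6$ and is large;
\item a full class has value $6$ and is large;
\item a class from which at least one instance was absorbed into a chosen triple has value at most $4$ and is small.
\end{itemize}
Suppose $s$ triples are chosen and they cover $c$ distinct elements. Since $c \le n$ and $c \le 3s$, the rank is $1 + s + (n - c) \ge 1 + \max(s, n - 2s) \ge 1 + n/3$. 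Equality holds iff $s = n/3$ and $c = n$, i.e.\ the chosen triples form an exact cover.

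So the minimum rank equals $n/3 + 1$ exactly when an exact cover exists. Your instincts in the dual encoding (large triples, target rank $1 + q$) were right. What was missing is the threshold that makes a class large precisely when it is untouched, so that overlaps are penalized by wasted triples rather than by shared leaves.
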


\begin{proof}
    We reduce from {\sc X3C}. Given an instance with $n$ elements and $m$ subsets, we create a tree $T_m$ of depth 1 for each subset, where there is one leaf node for each element in the corresponding subset. $\Tilde{V}$ is defined by declaring all leaf nodes for a given element equivalent (across all the trees we create). Lastly, create a tree with one node $v^*$, which will be the only element in $S^*$. Set $\mu(S^*) = 5$ and $\mu(\{v\}) = 2, v \neq v^*$. Therefore, each category (subset in the {\sc X3C} instance) and each class in $\Tilde{V}$ have value 6 under $\mu$, so they are large subsets, but a subset of a class with just 2 of its representatives has value 4 under $\mu$, which is a small subset. We claim there exists an exact cover if and only if the minimum rank of $S^*$ is $\frac{n}{3} + 1$. For the forward direction, let each category pertaining to a subset in the exact cover be in the partition, with no other categories in the partition. The cover has size $\frac{n}{3}$, and the remaining elements form subsets of classes in $\Tilde{V}$ of size 2, hence not large. Thus this partition achieves rank $\frac{n}{3} + 1$. For the reverse direction, we note that there cannot be more than $\frac{n}{3}$ categories selected in the partition achieving minimum rank, but if there are $k$ fewer categories selected, the partition has at least $3k-k$ more large subsets than $\frac{n}{3}$, so we must have $\frac{n}{3}$ category subsets. But now, if they do not form an exact cover, then the leftover part of the class for an element not covered is a large subset in the partition, a contradiction.        
\end{proof}

\begin{theorem}
    The hierarchical category variant of {\sc Rank Maximization} is {\sc NP}-hard. 
\end{theorem}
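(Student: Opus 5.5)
The plan is a reduction from {\sc X3C}, in the spirit of the hardness proof for the hierarchical category variant of {\sc Rank Minimization}. The difference is that here the gadgets must \emph{reward} a selection of pairwise compatible categories rather than penalize leftover class pieces. I expect the weights used there (leaves of value $2$, $\mu(S^*)=5$) to need changing.

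First, fix the base structure. Given an {\sc X3C} instance with elements $x_1,\dots,x_n$ and sets $C_1,\dots,C_m$, I may assume each element lies in exactly three sets, so $m=n$; this restricted version remains {\sc NP}-complete. For each $C_j$ build a depth-1 tree $T_j$ with one leaf per element of $C_j$. All leaves of $x_i$ form one class in $\Tilde{V}$. Finally add an isolated node $v^*$ with $S^*=\{v^*\}$.

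Second, observe that with this base structure alone, categories are pairwise disjoint as sets of leaves. So every subfamily of categories can be chosen, and any interaction comes only through the class remnants. This means the weights must make each intact class contribute a large subset, while any damaged remnant contributes nothing. With leaves of value $1$ and $\mu(S^*)=2.5$, each category has value $3$ (large), each full class has value $3$ (large), and every partial remnant is small. A choice $\mathcal{C}$ of categories then yields rank $1+|\mathcal{C}|+(\text{number of elements with no chosen category containing them})$. That is still maximized trivially by choosing everything or nothing, which is the degenerate behaviour to be removed.

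Third, and this is the main obstacle, I need extra gadgets that break this degeneracy. My first attempt is to give each tree $T_j$ an additional heavy leaf $p_j$ forming its own singleton class. I would set $\mu(\{p_j\})$ just below $\mu(S^*)$, so that $p_j$ alone is small but $T_j$ taken as a category is large. I would also make the element leaves light enough that a class is large only when it is intact. I would then tune the weights so that selecting a category gains exactly one large subset, destroying an intact class loses one, and the net gain is positive precisely when the selected categories cover disjoint elements. If a single layer of weights cannot achieve this, I would next use depth-two trees: put a subcategory node above the three element leaves of $C_j$, so that the partition has an intermediate option whose value is just below the threshold, and calibrate the weights so that overlapping selections strictly lose.

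Fourth, with the gadget fixed, I would verify the two directions in the style of the minimization proof. If an exact cover exists, selecting its $n/3$ categories together with the appropriate leftover classes attains a target rank $R$. Conversely, counting large subsets category by category shows that rank $R$ forces $n/3$ selected categories that never share an element, and such a family is an exact cover. I would also check the structural requirement that no node has exactly one child, and confirm that the construction is polynomial in size.
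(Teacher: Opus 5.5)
There is a genuine gap, and it sits where you flagged the main obstacle. The gadget in your third step is never actually built, and the principle guiding it points the wrong way. If a class is large \emph{only when intact}, the cost of covering an element is paid the first time it is covered. Overlapping selections are therefore \emph{cheaper} than disjoint ones, so the interaction has the wrong sign for detecting exact covers.

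The heavy leaf $p_j$ does not change this. As a singleton class it is small whether or not $T_j$ is chosen, and $T_j$ is large without it. So the objective stays $|\mathcal{C}| + (\text{number of uncovered elements})$, which by double counting in a 3-regular instance never exceeds $n$. More generally, suppose a gadget makes each selected tree yield $g$ large subsets while each intact class it destroys costs one. The rank is then $1 + n + g|\mathcal{C}| - |\bigcup \mathcal{C}|$. Since $|\bigcup\mathcal{C}| \ge |\mathcal{C}|$, its maximum is $1+\max(n, gn)$, attained by the empty or full family, independently of whether an exact cover exists. Your depth-two fallback is not specified enough to check, and your fourth step leaves the target rank $R$ undetermined.

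The missing idea is to set the threshold so that a class \emph{survives losing one representative but not two}. This is what the paper does. Keep your base structure: one depth-1 tree per set, one class per element, each element in exactly three sets. Then give every leaf value $2$ and set $\mu(S^*)=3$. Categories and classes with three or two remaining leaves are now large, while a class with one remaining leaf is small. A family $\mathcal{C}$ of chosen categories then gives rank $1 + |\mathcal{C}| + n - e$, where $e$ is the number of elements covered at least twice. Disjoint categories are now free gains, and overlaps are penalized. Double counting gives $3|\mathcal{C}| \le (n-e) + 3e$, hence $|\mathcal{C}| - e \le \frac{n}{3} - \frac{e}{3}$. So the rank is at most $\frac{4n}{3}+1$, with equality iff $e=0$ and $|\mathcal{C}| = \frac{n}{3}$, that is, iff $\mathcal{C}$ is an exact cover. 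This counting also makes the paper's somewhat terse reverse direction rigorous.
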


\begin{proof}
     We reduce from {\sc X3C}. Given an instance with $n$ elements and $m$ subsets, we create a tree $T_m$ of depth 1 for each subset, where there is one leaf node for each element in the corresponding subset. $\Tilde{V}$ is defined by declaring all leaf nodes for a given element equivalent (across all the trees we create). Lastly, create a tree with one node $v^*$, which will be the only element in $S^*$. Set $\mu(S^*) = 3$ and $\mu(\{v\}) = 2, v \neq v^*$. Therefore, each category (subset in the {\sc X3C} instance) and each class in $\Tilde{V}$ have value 6 under $\mu$, so they are large subsets, but a subset of a class with just 1 of its representatives has value 2 under $\mu$, which is a small subset. We claim there exists an exact cover if and only if the maximum rank of $S^*$ is $\frac{4n}{3} + 1$. For the forward direction, we can choose each subset in the {\sc X3C} instance that is part of the exact cover, which keeps each element's class having 2 elements, resulting in $\frac{n}{3} + n$ large subsets in total. For the reverse direction, note that at least $\frac{n}{3}$ category subsets must be in the partition, but that if $k$ more are selected, then at least $3k$ elements are covered twice, so $\frac{4n}{3}$ large subsets is not possible. Thus the number is exactly $\frac{n}{3}$, and they must only cover each element once so that the corresponding classes have 2 representatives left and are therefore large. Hence these subsets form an exact cover.         
\end{proof}

Moreover, by observing that the number of small and large subsets is optimized if and only if an exact cover exists, these reductions establish that the hierarchical category variants of {\sc Rank Percentile Minimization} and {\sc Rank Percentile Maximization} are {\sc NP}-hard.  

\subsubsection{Bounded Degree and $|\Tilde{V}| = |V|$ Case}

The hardness results in this subsection arose due to the combinatorial complexity introduced by allowing representatives to be part of multiple subingredients with different siblings. We now show that if this is disallowed, i.e. $|\Tilde{V}| = |V|$ (meaning there is only one instance of each item), then the rank problems become easy and the percentile ones are polynomial-time solvable if the maximum degree of $G$ is bounded by a constant. \\

First, for rank minimization, the procedure is the following modified breadth-first search procedure on $T'$. Add the root to a list called explore, set rank equal to 1, and initialize a partition $P = {S^*}$. While there is a node in explore, pop such a node $v$ from explore and set a variable called distinct equal to false. If $v$ is not equal to $S^*$ and there is not a path from $v$ to $S^*$ that only goes away from the root (i.e. that can only move from a node to one of its child nodes), then set distinct equal to true. If distinct is true and the value of at least one leaf node that is a descendant of $v$ is at least the value of $S^*$ under $\mu$, update $P = P \cup {v}$ and increment rank. Else if distinct is true, put all leaf nodes that are descendants of $v$ into $P$ as singletons. Else if distinct is false, add all child nodes of $v$ to explore. To see correctness, note that for each node $v$ encountered in explore such that distinct is true, if one of its leaf nodes is medium or large, we add a subset consisting of all the leaf nodes in the subtree with $v$ as the root, and otherwise we leave each leaf node in the subtree as a singleton in the partition. If the former condition holds, then at least one subset from the subtree will be medium or large, and we ensure this lower bound is achieved; if the latter condition holds, none of the subsets from the subtree will be medium or large. \\  


For rank maximization, the procedure is as follows. For each category that does not intersect $S^*$, exclude it from the partition if and only if at least one of its children is medium or large. The resultant partition is well-defined because leaf nodes can always form singleton subsets in the partition, and a category that is included in the partition must be medium or large with no such children or small itself, so none of its descendants may be medium or large (due to additivity of $\mu$) and hence all will be excluded from the partition except for leaf nodes. Similarly, correctness follows from a simple inductive argument. \\

For rank percentile minimization and maximization, a challenge is caused by the fact that we do not know the optimal percentile (and hence whether excluding a category improves the percentile) a priori, and even if we did, the number of subsets we have added to any partition built iteratively will affect the magnitude of changes from subsequent subsets to the percentile. Therefore, there are several senses in which locally optimal decisions do not lead to a globally optimal solution. However, it is feasible to find the globally optimal solution if the degree of each category vertex is not too large; the hierarchical category variant of {\sc Rank Percentile Minimization} and {\sc Rank Percentile Maximization} are fixed-parameter tractable. 

\begin{theorem}
    The hierarchical category variant of {\sc Rank Percentile Minimization} and {\sc Rank Percentile Maximization} under the restriction that $|\Tilde{V}| = |V|$ are solvable in time $O\left(\left(\frac{e\cdot n}{\max\{{d,m\}}}\right)^{\max\{{d,m\}}}\right)$, where $d$ is the maximum tree degree of a vertex in $G$. 
\end{theorem}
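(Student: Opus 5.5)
Our approach is a bottom-up search over the forest $T'$ in which each node carries a small table of achievable count profiles rather than a single percentile. Under $|\Tilde{V}| = |V|$, every class is a single leaf. So a valid partition is just an antichain-type choice: for each category node $u$ not containing $S^*$, either $u$ itself is a block, or we recurse into its children, with leaves always available as singletons. This gives a recursive structure. A partition of the leaves below $u$ is either $\{u\}$, or a union of partitions of the leaves below each child of $u$. The percentile depends only on the triple $(l,m,s)$ of large, medium and small block counts, through $\frac{l+.5m+.5}{l+m+s+1}$. It is not monotone in any single count, which is exactly the ``inertia'' issue noted before the statement. So instead of optimizing locally we will track, at each node, the set of achievable triples, or at least the Pareto-relevant ones.

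First we handle the part of $T'$ containing $S^*$. Since $S^*$ is itself a category or leaf, its ancestors cannot be blocks. We therefore cut along the root-to-$S^*$ path and treat the hanging subtrees (siblings of path nodes, plus the other trees $T_j$) as independent components whose triples add. Next, for a node $u$ with children $u_1,\dots,u_k$, $k\le d$, we define $F(u)$ as the set of triples $(l,m,s)$ achievable below $u$. It is computed as
\[
F(u) = \{\tau(u)\} \cup \big(F(u_1)+\cdots+F(u_k)\big),
\]
where $\tau(u)$ is the unit triple recording whether $u$ is large, medium or small, and $+$ is the Minkowski sum. Every coordinate is at most $n$, so each $F(u)$ has polynomially bounded size. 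Finally, at the top we take the Minkowski sum over components and evaluate the percentile on every triple, taking the min or the max. Correctness follows by structural induction from the recursive description of valid partitions, and no appeal to Lemma \ref{part_comb} is needed.

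The main obstacle is matching the stated bound $O\big((en/\max\{d,m\})^{\max\{d,m\}}\big)$. This is the standard estimate $\binom{n}{t}\le (en/t)^t$ with $t=\max\{d,m\}$. It suggests the intended count is over ways to distribute at most $n$ leaves (or blocks) among at most $t$ children or trees at a single merge step. In that reading, the enumeration at a node with $k\le d$ children (or at the virtual root $v_0$ with $m$ trees) ranges over compositions whose number is $\binom{n}{t}$-bounded. I would first try to bound the merge cost at each node by the number of ways to split the available leaf budget among its $\le t$ children, and show that this dominates the total work. If the naive Minkowski-sum analysis gives a polynomial bound instead, that is stronger and still implies the claim. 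In that case I would present the cruder enumeration only to match the statement and remark that the estimate is loose.
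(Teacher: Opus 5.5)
Your correctness argument is sound, and it differs from the paper's route. You carry the whole set $F(u)$ of achievable profiles $(l,m,s)$ and combine children by Minkowski sums, which works for any objective depending on $(l,m,s)$; the paper instead stores a single optimal percentile per node and block count. The gap is in the running time, which is what the theorem asserts. Your fallback that a polynomial bound ``is stronger and still implies the claim'' is false. Since $t\mapsto t\ln(en/t)$ has derivative $\ln(n/t)\ge 0$ on $[2,n]$, the target $(en/t)^t$ is smallest at $t=2$, where it is $\Theta(n^2)$ (binary category trees), and for each fixed $t$ it is $\Theta(n^t)$. Your sets are too large for this. Take $\mu(S^*)=10$ and two-leaf categories with leaf values $6,6$, or $5,5$, or $1,1$; their option sets are $\{(1,0,0),(0,0,2)\}$, $\{(0,1,0),(0,0,2)\}$ and $\{(0,0,1),(0,0,2)\}$. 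With $k$ of each hung in a binary tree you already get $(k+1)^3=\Theta(n^3)$ distinct triples at the top. So merely writing $F$ down costs $\Omega(n^3)$ with $t=2$, and a naive pairwise Minkowski sum can cost $\Theta(n^6)$. Your other plan, counting ways to split a leaf budget among at most $t$ children, does not bound your own algorithm's work either, because a Minkowski sum of triple sets is not indexed by a split of one budget.

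The missing idea is the reduction that makes such a split meaningful. Once the number $c=l+m+s$ of non-$S^*$ blocks is fixed, the percentile $\frac{l+\frac12 m+\frac12}{c+1}$ is increasing in $l+\frac12 m$. So for each node $u$ and each $c$ only one extremal value matters; that is the paper's $\mathrm{OPT}[u,c]$. Children are then combined by optimizing over compositions $c_1+\cdots+c_r=c$ in the weighted recurrence with coefficients $\frac{c_i+1}{c+1}$ and correction $-\frac{r-1}{2(c+1)}$. The count $\binom{c-1}{r-1}\le\bigl(\frac{e(c-1)}{r-1}\bigr)^{r-1}$ with $r\le\max\{d,m\}$ is where the stated bound comes from. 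To get the bound literally even for small $t$, store the arrays $c\mapsto\min(2l+m)$ and $c\mapsto\max(2l+m)$ for $1\le c\le n_u$, where $n_u$ is the number of leaves below $u$. Merge children one at a time by $(\min,+)$ or $(\max,+)$ convolution. Each merge costs the number of leaf pairs split between the children already merged and the new child, so every pair of leaves is paid for once, at its lowest common ancestor. This gives $O(n^2)$ in total, which is at most a constant times $(en/t)^t$ for every $2\le t\le n$.
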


\begin{proof}
    Let $\mathrm{OPT}[u,c]$ be the minimum percentile when partitioning the elements of category $u$ into $c$ subsets (if no such partition exists, it is instead equal to $\infty$). Suppose $u$ has subcategories $u_1,\dots,u_r$ as children (excluding $S^*$ as a subcategory) and $c > 1$. Then the following recurrence holds: 
    \[
    \mathrm{OPT}[u,c] =
    \min_{\substack{c_1 \geq 1, \dots, c_r \geq 1 \\ c_1 + \cdots + c_r = c}} 
    \left\{ \frac{c_1+1}{c+1}\mathrm{OPT}[u_1,c_1] + \cdots + \frac{c_r+1}{c+1}\mathrm{OPT}[u_r,c_r] - \frac{r-1}{2(c+1)}
    \right\}
    \]
     To see this, note that there are two options for category $u$ in a partition of $V$: it may be one subset in the partition (corresponding to the below case), or each subcategory of $u$ may be one subset in the partition (or subdivided into multiple subsets according to the above rules). Therefore, we minimize over all partitions considered in the hierarchical category variant. The coefficients in the recurrence are selected to give the percentile of the partition obtained by combining the choice of subsets for each subcategory of $u$. If $u$ is a leaf node or $c = 1$, we instead have $$\mathrm{OPT}[u,c] = 
    \begin{cases}
    \frac{1.5}{2} & \text{if } u \text{ is large}, u \not\in S^*, \text{and } c = 1 \\ 
    \frac{1}{2} & \text{if } u \text{ is medium}, u \not\in S^*, \text{and } c = 1 \\
    \frac{.5}{2} & \text{if } u \text{ is small}, u \not\in S^*, \text{and } c = 1 \\
    \infty & \text{otherwise}
    \end{cases}.$$ 
    As for computation, each choice of $(c_1,\dots,c_r)$ in the second $\min$ operator requires $O(1)$ runtime, and there are $\binom{c-1}{r-1}$ such choices by a stars-and-bars argument. Repeating for at most $n$ categories and using that $\binom{c-1}{r-1} \leq (\frac{e\cdot(c-1))}{r-1})^{r-1}$ with $c \leq n$ and $r \leq \max\{d,m\}$, we get the claimed runtime. We return $$\min_{1 \leq c \leq n}\mathrm{OPT}[v_0,c].$$ \\

    To solve the hierarchical category variant of {\sc Rank Percentile Maximization}, replace $\infty$ with $-\infty$ and $\min$ with $\max$ in the above algorithm; the proof of correctness is analogous. 
\end{proof}

\subsection{Grid Variants}

An interesting modification of the preceding, where $G$ has no natural embedding in $\mathbb{R}^n$, is to view the graph $G$ as a two-dimensional grid, perhaps representing locations or equally spaced points on a map or a discretization thereof. We also allow a set of vacancies $V_-$ in the grid, representing points that would not be actual locations within $G$ (however interpreted). Therefore, without loss of generality we may regard $V$ as $[1,l] \times [1,w] \backslash V_-$, where $V_- \subset [1,l] \times [1,w]$ and $n = |V| = lw - |V_-|$. Lastly, two points $(a,c),(b,d) \in V$ are similar if they occupy adjacent positions in the grid: $((a,c),(b,d)) \in E \iff |b-a| + |d-c| = 1$. \\ 

For instance, if $G$ represents the United States as projected on a map, then we may leave some vacancies for Canada and the Atlantic Ocean instead of, say, stacking New England on top of New York, Pennsylvania, and Ohio. \\ 

Denote $\square_{abcd} := [a,b] \times [c,d]$, the rectangular grid spanning $x$-coordinates $a$ to $b$ and $y$-coordinates $c$ to $d$ (inclusive). \\

In the below variants of the problem, we study partitions of the grid $[1,l] \times [1,w]$ into axis-parallel rectangles. By taking the intersection of each rectangle with $V$, we obtain an induced partition of $V$. As usual, we seek to optimize the percentile of $S^*$ (a subgrid of $V$ that without loss of generality does not contain any vacancies) under a valid partition of $V$; we just place restrictions on which valid partitions may be considered, as described for each of the proceeding variants below. \\    

For both variants, we specify a collection of subsets of $V_-$ representing valid subsets of vacancies to place in a rectangle with elements of $V$. This may be specified in polynomial space by giving a list of rectangles $\mathcal{R} \subset \mathcal{P}(\{\square_{abcd}: 1 \leq a \leq b \leq l, 1 \leq c \leq d \leq w\})$ that may be included in the partition of $V$, with the caveat that if two rectangles contain elements of $V$ and the same subset of $V_-$, they are either both allowed or both disallowed (and rectangles comprised only of elements of $V$ are allowed). In other words, $\square_{abcd} \cap V_- = \emptyset \implies \square_{abcd} \in \mathcal{R}$ and $\square_{abcd} \cap V_- = \square_{a'b'c'd'} \cap V_- \implies (\square_{abcd} \in \mathcal{R} \iff \square_{a'b'c'd'} \in \mathcal{R})$.  


\subsubsection{Grid with Hierarchical Rectangles Variant}

In the \textit{grid with hierarchical rectangles} variant, we consider the restriction where the only admissible partitions of $V$ are those induced by hierarchically partitioning the grid into rectangles with axis-parallel edges. That is, in each step, we take a subgrid from a previous step ($[1,l] \times [1,w]$ for the first step) and draw a horizontal or vertical line between two rows or columns thereof, respectively, resulting in the cut subgrid becoming two subgrids separated by the line instead. Upon termination, each subgrid containing elements of $V$ must be in $\mathcal{R}$, which induces a partition of $V$.    




\begin{theorem}
    The grid with hierarchical rectangles variant of {\sc Rank Minimization} and {\sc Rank Maximization} are polynomial-time solvable.
\end{theorem}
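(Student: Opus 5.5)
The plan is a dynamic program over subrectangles of the grid, in the spirit of the guillotine-partition DPs and of the interval DP used for the linear component case. Every hierarchical partition of $[1,l]\times[1,w]$ is described by a binary tree. The root is the whole grid, each internal node is a subgrid $\square_{abcd}$ split by one horizontal or vertical line, and each leaf is a final rectangle. There are only $O(l^2w^2)$ subgrids $\square_{abcd}$ and $O(l+w)$ cuts for each, so the set of relevant states is polynomial. For {\sc Rank Minimization} I will define $\mathrm{MIN}[a,b,c,d]$ as the minimum number of large subsets in an admissible hierarchical partition of $\square_{abcd}$; for {\sc Rank Maximization}, $\mathrm{MAX}[a,b,c,d]$ is the maximum number of medium or large subsets. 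Either value is $\pm\infty$ if no admissible partition exists. Subgrids containing no element of $V$ contribute nothing and need no membership in $\mathcal{R}$.

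The treatment of $S^*$ comes first. We require that $S^*$ appear as a part of the induced partition, and $S^*$ is a vacancy-free subgrid $\square_{a^*b^*c^*d^*}$. A subgrid that properly intersects $S^*$ may not be a leaf. A subgrid whose intersection with $V$ equals $S^*$ is a leaf contributing $0$, and it must not be split further. Splitting it is actually harmless for a correct formulation, but I will forbid it, because $S^*$ is fixed as one part. The recurrence is then
\[
\mathrm{MIN}[a,b,c,d]=\min\Bigl(\mathrm{leaf}(a,b,c,d),\ \min_{\text{cuts}}\bigl(\mathrm{MIN}[\text{piece}_1]+\mathrm{MIN}[\text{piece}_2]\bigr)\Bigr).
\]
Here the leaf option is allowed only if $\square_{abcd}\in\mathcal{R}$ (or $\square_{abcd}\cap V=\emptyset$) and $\square_{abcd}$ does not meet $S^*$ unless $\square_{abcd}\cap V=S^*$. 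Its value is $1$ if $\mu(\square_{abcd}\cap V)>\mu(S^*)$ and $0$ otherwise. $\mathrm{MAX}$ is defined symmetrically with $\geq$ in place of $>$. The cuts range over the $b-a$ vertical and $d-c$ horizontal lines. The answer is $1+\mathrm{MIN}[1,l,1,w]$ or $1+\mathrm{MAX}[1,l,1,w]$. Evaluating subgrids in increasing order of area gives $O(l^2w^2(l+w))$ time, and a prefix-sum table makes each $\mu(\square_{abcd}\cap V)$ an $O(1)$ query. Membership in $\mathcal{R}$ is decided from the given list in polynomial time.

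Correctness follows from the additivity of the objective over the two sides of a cut. The objective is a count over parts, and a part's contribution depends only on that part and on $\mu(S^*)$, which is fixed. Hence the optimal partition of a subgrid restricts to optimal partitions of its two pieces, which is the usual cut-and-paste argument. Conversely, every recurrence choice is realized by a genuine hierarchical partition, so the DP is exact. It remains to check that every part of an induced partition lies in $\mathcal{S}$: a rectangle minus vacancies need not be connected in $G$. I expect this to be the main subtlety. I will handle it by noting that the variant defines validity through $\mathcal{R}$. If connectivity is also required, I will add the test ``$\square_{abcd}\cap V$ is connected'' to the leaf condition, computed by BFS for each of the polynomially many subgrids. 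This leaves the DP structure unchanged, because the admissibility of each leaf is decided locally.
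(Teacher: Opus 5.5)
Your proposal is correct and is essentially the paper's proof. The paper also runs a memoized dynamic program over all subgrids $\square_{abcd}$, taking the best of two options: keeping $\square_{abcd}$ as a single part (allowed only if it lies in $\mathcal{R}$ and does not contain $S^*$, scored $1$ if large, or large-or-medium for maximization, and $0$ otherwise), or making one of the $b-a$ vertical or $d-c$ horizontal cuts and adding the two pieces' optimal values, with $S^*$ forced to be a part and $\pm\infty$ assigned to subgrids that cut through $S^*$. Your extra touches are harmless refinements rather than a different route: letting a leaf whose intersection with $V$ is exactly $S^*$ stand for $S^*$, the prefix-sum runtime bound, and the optional connectivity test for rectangles with vacancies, which the paper leaves implicit by letting $\mathcal{R}$ govern admissibility.
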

    
\begin{proof}

Consider the following algorithm.

\begin{algorithm}[H]
\caption{Given an instance $(V,E,\mu,S^*)$ of {\sc Rank Minimization} in the grid with hierarchical rectangles variant, compute the minimum rank of $S^*$}\label{hierarchical_rank}
\begin{algorithmic}

\State Initialize four-dimensional table $\mathrm{OPT}$ of size $l \times l \times w \times w$ with {\sc null} entries

\State \Call{Compute-Opt}{$1,l,1,w$}
\newline
\Procedure{Compute-Opt}{$a,b,c,d$}
\If{$\mathrm{OPT}[a,b,c,d]$ is not {\sc null}}
    \State \Return $\mathrm{OPT}[a,b,c,d]$

\ElsIf{$\square_{abcd} = S^*$ \text{ or } $\square_{abcd} \subset V_-$}
    \State $\mathrm{OPT}[a,b,c,d] = 0$

\ElsIf{($\square_{abcd} \cap S^* \neq \emptyset$ \text{ and } $\square_{abcd}^\mathrm{c} \cap S^* \neq \emptyset$)}
    \State $\mathrm{OPT}[a,b,c,d] = \infty$
\EndIf

\State $c_b \gets \begin{cases}
1 & \text{if } \square_{abcd} \text{ is large, not } S^* \subset \square_{abcd} \text{, and } \square_{abcd} \in \mathcal{R} \\
0 & \text{if } \square_{abcd} \text{ is medium or small, not } S^* \subset \square_{abcd} \text{, and } \square_{abcd} \in \mathcal{R} \\
\infty & \text{otherwise}
\end{cases}$

\State $c_v \gets \infty$
\State $c_h \gets \infty$

\If{$a < b$}
    \State $c_v \gets \min_{a \leq i < b}$ \{\Call{Compute-Opt}{$a,i,c,d$} + \Call{Compute-Opt}{$i+1,b,c,d$}\}
\EndIf

\If{$c < d$}
    \State $c_h \gets \min_{c \leq j < d}$ \{\Call{Compute-Opt}{$a,b,c,j$} + \Call{Compute-Opt}{$a,b,j+1,d$}\}
\EndIf

\State $\mathrm{OPT}[a,b,c,d] = \min\{c_b,c_v,c_h\}$
 
\State \Return $\mathrm{OPT}[a,b,c,d]$
\EndProcedure        

\end{algorithmic}
\end{algorithm}

We claim that for all $1 \leq a \leq b \leq l, 1 \leq c \leq d \leq w$, {\sc Compute-Opt}($a,b,c,d$) gives the minimum number of large subsets of $\square_{abcd} \backslash V_-$ when $\square_{abcd} \backslash V_-$ is hierarchically partitioned into subsets in $\mathcal{R}$ with the further restriction that $S^*$ is one of the subsets (or that $S^*$ does not intersect $\square_{abcd}$ and its complement, in which case {\sc Compute-Opt}($a,b,c,d$) should be infinity). Taking $a = 1, b = l, c = 1, d = w$ will then prove the correctness of the algorithm. \\ 

First, if $\square_{abcd} = S^*$ or $\square_{abcd} \subset V_-$, then the subgrid we are partitioning is vacuous and hence there are 0 large subsets. Similarly, if $S^*$ intersects $\square_{abcd} \backslash V_-$ but also its complement, $S^*$ may not be one of the subsets in the former and the algorithm must assign a value of infinity to {\sc Compute-Opt}($a,b,c,d$). \\ 

Otherwise, $\square_{abcd}$ does not intersect $S^*$ or it properly contains $S^*$, and in both cases it includes elements of $V$ that need to be partitioned. There are two options. One, we may leave $\square_{abcd}$ as one subset, but only if it is in $\mathcal{R}$ and does not contain $S^*$. In that case, the number of large subsets is 1 if $\square_{abcd}$ is large and 0 otherwise (if it is not possible, $\square_{abcd}$ is not a singleton and hence the second option will be possible). Two, we may subdivide $\square_{abcd}$ by cutting it vertically (assuming $a \neq b$) in one of $b-a$ positions or horizontally (assuming $c \neq d$) in one of $d-c$ positions and only further partitioning each of these subdivisions (no subsets crossing this cut). In that case, the minimum number of large subsets is the minimum sum (over all positions to make the cut) of the minimum number of large subsets for each resultant subgrid, since a hierarchical partition of $\square_{abcd}$ with a horizontal or vertical cut across which there may be no subsets is exactly comprised of a hierarchical partition of each subgrid. Finally, the minimum number of large subsets of $\square_{abcd} \backslash V_-$ is the minimum number resulting from each option. Hence if we suppose that {\sc Compute-Opt}($a',b',c',d'$) has the desired behavior for all $\square_{a'b'c'd'} \subsetneq \square_{abcd}$, the claim follows by induction on the size of the subgrid $(b-a)(d-c)$ (if it reaches 1, then no recursive calls are made). \\ 



To solve the grid with hierarchical rectangles variant of {\sc Rank Maximization}, we simply replace $\infty$ with $-\infty$, $\min$ with $\max$, ``large" with ``large or medium", and ``medium or small" with ``small" in Algorithm \ref{hierarchical_rank}; the proof of correctness is analogous. 
\end{proof}

\begin{theorem}
    The grid with hierarchical rectangles variant of {\sc Rank Percentile Minimization} and {\sc Rank Percentile Maximization} are polynomial-time solvable.
\end{theorem}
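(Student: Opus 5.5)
The plan is to extend the memoized dynamic program of Algorithm \ref{hierarchical_rank} by adding one more index for the number of subsets, exactly as in Algorithm \ref{linear_percentile_alg} and the hierarchical category recurrence. The percentile depends on the whole partition only through the counts $(L, M, c)$ of large subsets, medium subsets and all subsets other than $S^*$. In particular it equals $(L + .5M + .5)/(c+1)$. Since $L + .5M$ is a half-integer quantity at most $n$ and $c \le n$, I would tabulate, for every subgrid $\square_{abcd}$ and every $c \le n$, the minimum (respectively maximum) value of the numerator $N = L + .5M$ over hierarchical partitions of $\square_{abcd} \backslash V_-$ into $c$ subsets from $\mathcal{R}$. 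The partition must additionally have $S^*$ as one of its blocks, with $S^*$ not counted in $c$. For a fixed total $c$ the percentile is monotone in $N$, so the final answer is $\min_{c}$ (respectively $\max_c$) of $(\mathrm{OPT}[1,l,1,w,c] + .5)/(c+1)$.

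The recurrence mirrors Algorithm \ref{hierarchical_rank}, and I would organize it as follows.
\begin{itemize}
\item \emph{Base cases.} If $\square_{abcd} = S^*$ or $\square_{abcd} \subset V_-$, then the entry is $0$ at $c=0$ and infeasible ($\pm\infty$) otherwise. If $\square_{abcd}$ meets both $S^*$ and its complement, the entry is infeasible.
\item \emph{Keep option.} If $\square_{abcd} \in \mathcal{R}$ and $S^* \not\subset \square_{abcd}$, then at $c = 1$ the entry takes the value $1$, $.5$ or $0$ according as the block is large, medium or small.
\item \emph{Cut option.} For each of the $(b-a)+(d-c)$ cut positions and each split $c = c_1 + c_2$, combine $\mathrm{OPT}[\text{left}, c_1] + \mathrm{OPT}[\text{right}, c_2]$.
\end{itemize}
The numerator is additive across the two sides of a cut, and hierarchical partitions of $\square_{abcd}$ whose first cut is at a given line are exactly products of hierarchical partitions of the two sides. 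Because of this, the optimal $N$ for fixed $(\square, c)$ decomposes. This avoids the ``inertia'' issue that made a direct percentile recurrence delicate, since we optimize the additive numerator rather than the ratio.

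Correctness then follows by the same induction on subgrid size as in the proof for Algorithm \ref{hierarchical_rank}. At each stage the hypothesis gives optimal numerators for every count on strictly smaller subgrids, and the two options above are exhaustive. For the runtime, there are $O(l^2w^2 n)$ states and each is evaluated in $O((l+w)n)$ time, which is polynomial. The maximization version swaps $\min/\max$ and $\infty/-\infty$.

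The main obstacle is justifying that fixing $c$ and optimizing only the numerator loses nothing. The concern is that one might worry about needing the pair $(L, M)$ separately. However, for fixed $c$ the objective is a strictly increasing affine function of $L + .5M$ alone, so storing only the optimal $N$ suffices. The remaining care is in bookkeeping when $S^*$ lies strictly inside the subgrid: $S^*$ must contribute $0$ to $c$ and to $N$, and its $.5$ contribution is added only once at the top level.
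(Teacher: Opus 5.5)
Your proposal is correct and is essentially the paper's proof. Both use the same memoized table over subgrids $\square_{abcd}$ and subset counts $k$, the same cut-and-split recurrence, and the same final optimization over $k$; the paper's combination rule $\frac{1}{k+1}\bigl\{(k_1+1)p_1+(k_2+1)p_2-\frac12\bigr\}$ is exactly your additivity of $L+\frac12 M$ rescaled by the increasing map $N\mapsto (N+\frac12)/(k+1)$. Tracking the numerator is the cleaner bookkeeping: the paper assigns $0$ to a vacuous subgrid (e.g.\ $\square_{abcd}=S^*$) at $k=0$, whereas its $-\frac12$ correction requires the value $\frac12$ there, and your formulation avoids this issue.
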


\begin{proof}

Consider the following algorithm.

\begin{algorithm}[H]
\caption{Given an instance $(V,E,\mu,S^*)$ of {\sc Rank Percentile Minimization} in the grid with hierarchical rectangles variant, compute the minimum percentile of $S^*$}\label{hierarchical_percentile}
\begin{algorithmic}
\State Initialize five-dimensional table $\mathrm{OPT}$ of size $l \times l \times w \times w \times n$ with {\sc null} entries
\State \Return $\min_{1 \leq k^* \leq n}$ \{\Call{Compute-Opt}{$1,m,1,n,k^*$}\}
\newline
\Procedure{Compute-Opt}{$a,b,c,d,k$}
\If{$\mathrm{OPT}[a,b,c,d,k]$ is not {\sc null}}
    \State \Return $\mathrm{OPT}[a,b,c,d,k]$

\ElsIf{($\square_{abcd} = S^*$ \text{ or } $\square_{abcd} \subset V_-$) \text{ and } $k = 0$}
    \State $\mathrm{OPT}[a,b,c,d,k] = 0$

\ElsIf{($\square_{abcd} \cap S^* \neq \emptyset$ \text{ and } $\square_{abcd}^\mathrm{c} \cap S^* \neq \emptyset$) \text{ or } $k = 0$}
    \State $\mathrm{OPT}[a,b,c,d,k] = \infty$

\EndIf

\State $c_b \gets \begin{cases}
\frac{1.5}{2} & \text{if } \square_{abcd} \text{ is large, not } S^* \subsetneq \square_{abcd}, \square_{abcd} \in \mathcal{R}, \text{ and } k = 1 \\ 
\frac{1}{2} & \text{if } \square_{abcd} \text{ is medium, not } S^* \subsetneq \square_{abcd}, \square_{abcd} \in \mathcal{R}, \text{ and } k = 1 \\
\frac{.5}{2} & \text{if } \square_{abcd} \text{ is small, not } S^* \subsetneq \square_{abcd}, \square_{abcd} \in \mathcal{R}, \text{ and } k = 1 \\
\infty & \text{otherwise}
\end{cases}$

\State $c_v \gets \infty$
\State $c_h \gets \infty$

\If{$a < b$}
    \State $c_v \gets \frac{1}{k+1}\min_{a \leq i < b, 0 \leq k_h \leq k}$ \{$(k_h+1)$\Call{Compute-Opt}{$a,i,c,d,k_h$} + $(k-k_h+1)$\Call{Compute-Opt}{$i+1,b,c,d,k-k_h$} $ - \frac{1}{2}$\}
\EndIf

\If{$c < d$}
    \State $c_h \gets \frac{1}{k+1}\min_{c \leq j < d, 0 \leq k_v \leq k}$ \{$(k_v+1)$\Call{Compute-Opt}{$a,b,c,j,k_v$} + $(k-k_v+1)$\Call{Compute-Opt}{$a,b,j+1,d,k-k_v$} $ - \frac{1}{2}$\}
\EndIf

\State $\mathrm{OPT}[a,b,c,d,k] = \min\{c_b,c_v,c_h\}$
 
\State \Return $\mathrm{OPT}[a,b,c,d,k]$
\EndProcedure        

\end{algorithmic}
\end{algorithm}

The proof of correctness is similar to that for Algorithm \ref{hierarchical_rank}, except that we are now minimizing percentile instead of rank. Accordingly, we claim that {\sc Compute-Opt}$(a,b,c,d,k)$ gives the minimum percentile of $S^*$ when $\square_{abcd} \backslash V_-$ is hierarchically partitioned into exactly $k \geq 0$ subsets in $\mathcal{R}$ other than $S^*$ with the further restriction that $S^*$ be one of the subsets (or that $S^*$ not intersect $\square_{abcd}$ and its complement, in which case {\sc Compute-Opt}($a,b,c,d,k$) should be infinity). Moreover, if there is no such partition with our value of $k$, {\sc Compute-Opt}($a,b,c,d,k$) should be infinity. We also adopt the convention that if $k = 0$ and there are no non-$S^*$ elements of $V$ in $\square_{abcd}$, {\sc Compute-Opt}($a,b,c,d,k$) should be infinity. We call this the \textit{minimum percentile of $S^*$ for $\square_{abcd}$}. Taking $a = 1, b = l, c = 1, d = w$ for each $1 \leq k^* \leq n$ will then prove the correctness of the algorithm, since the minimum percentile of $S^*$ in the original instance is achieved by the minimum percentile of $S^*$ under a valid partition of $V$ with the number of subsets other than $S^*$ being between 1 and $n$, so the algorithm returns the minimum percentile of $S^*$ over all valid partitions of $V$. \\

First, if $\square_{abcd} = S^*$ or $\square_{abcd} \subset V_-$, then the subgrid we are partitioning is vacuous and we must have $k = 0$ for such a partition. Similarly, if $S^*$ intersects  $\square_{abcd} \backslash V_-$ but also its complement, $S^*$ may not be one of the subsets in the former and the algorithm must assign a value of infinity to {\sc Compute-Opt}($a,b,c,d,k$). Lastly, if $k = 0$ but there are non-$S^*$ elements of $V$ in $\square_{abcd}$, then we cannot put those element in any partition subset and the algorithm must assign a value of infinity to {\sc Compute-Opt}($a,b,c,d,k$). \\ 

Otherwise, $\square_{abcd}$ does not intersect $S^*$ or it properly contains $S^*$, and in both cases it includes elements of $V$ that need to be partitioned. We have also ensured at this point that $k \geq 1$. There are two options. One, we may leave $\square_{abcd}$ as one subset, but only if it is in $\mathcal{R}$ and does not contain $S^*$, and $k = 1$. In that case, by definition the percentile of $S^*$ is $\frac{1.5}{2}$ if $\square_{abcd}$ is large, $\frac{1}{2}$ if medium, and $\frac{.5}{2}$ if small otherwise (if it is not possible, $\square_{abcd}$ is not a singleton and hence the second option will be possible). Two, we may subdivide $\square_{abcd}$ by cutting it vertically (assuming $a \neq b$) in one of $b-a$ positions or horizontally (assuming $c \neq d$) in one of $d-c$ positions and only further partitioning each of these subdivisions (no subsets crossing this cut). In that case, by direct computation the minimum percentile of $S^*$ for $\square_{abcd}$ is the minimum (over all positions to make the cut and possible number of subsets in a partition of each adding up to $k$) convex combination computed by the algorithm of the minimum percentile of $S^*$ for each resultant subgrid, since a hierarchical partition of $\square_{abcd}$ with a horizontal or vertical cut across which there may be no subsets is exactly comprised of a hierarchical partition of each subgrid. Finally, the minimum percentile of $S^*$ for $\square_{abcd}$ is the minimum percentile resulting from each option. Hence if we suppose that {\sc Compute-Opt}($a',b',c',d',k'$) has the desired behavior for all $\square_{a'b'c'd'} \subsetneq \square_{abcd}$ and $k' \leq k$, the claim follows by induction on the size of the subgrid $(b-a)(d-c)$ and $k$ (if the variable reaches 1 or 0, respectively, then no recursive calls are made). \\  

To solve the grid with hierarchical rectangles variant of {\sc Rank Percentile Maximization}, we simply replace $\infty$ with $-\infty$ and $\min$ with $\max$ in Algorithm \ref{hierarchical_percentile}; the proof of correctness is analogous. 
\end{proof}


\subsubsection{Grid with Rectangles Variant}

In the \textit{grid with rectangles} variant, we consider the restriction where the only admissible partitions of $V$ are those induced by partitioning the grid into rectangles with axis-parallel edges, not just ones arising from hierarchical partitioning. Additionally, each subset in the induced partition of $V$ must be in $\mathcal{R}$.

\begin{theorem}
    The grid with rectangles variant of {\sc Rank Maximization} is {\sc NP}-hard. 
\end{theorem}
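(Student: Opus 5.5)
The plan is to reduce from a known {\sc NP}-hard packing problem on rectangles, specifically {\sc Maximum Independent Set of Rectangles} (equivalently, maximum disjoint set of axis-parallel rectangles), whose hardness goes back to Rim and Nakajima \cite{mdshardness} and \cite{integralrectanglehardness}. Given integral axis-parallel rectangles $R_1,\dots,R_m$ in an $l \times w$ grid and a target $t$, we build an instance of the grid with rectangles variant of {\sc Rank Maximization}. The desired rectangles will be exactly those in which grouping creates a medium or large subset. Every other valid subset will be forced to be small. Maximizing the rank then amounts to maximizing the number of pairwise disjoint chosen $R_j$.

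For the construction, we place $S^*$ as a single extra cell in a new column separated from the rest by vacancies, and we set $\mu(S^*) = M$ for a large $M$. First we scale the input grid by a factor of $3$, so that every $R_j$ gets a private interior cell $p_j$ and distinct rectangles have distinct interiors. We assign $\mu(\{p_j\}) = M - \epsilon_j$, and every other cell receives a tiny weight $\delta$ with $\delta \cdot lw < \min_j \epsilon_j$, so that total filler mass cannot matter. The difficulty is that a rectangle containing two marker points $p_j,p_{j'}$ would be large. We prevent this through $\mathcal{R}$: we may make the non-marker cells vacancies except for thin supports, or, more cleanly, we exploit the allowance that rectangles containing vacancies are governed by $\mathcal{R}$ via their vacancy pattern. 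So the plan is to make each marker cell a genuine element of $V$, make all other cells vacancies, and declare $\mathcal{R}$ to contain exactly the rectangles whose vacancy sets coincide with $R_j \cap V_-$ for some $j$, together with all singletons. Then $R_j$ is then a subset of value $M - \epsilon_j + (\text{other markers inside } R_j)$, and we choose $\epsilon_j$ small so that $R_j$ is at least medium. We still have to handle markers of other rectangles lying inside $R_j$, and each of those markers would otherwise be stranded. In the intended solution, a rectangle that is not chosen has its marker as a singleton, which is small. So the partition takes the chosen disjoint $R_j$'s plus singletons elsewhere, and the rank is $1+$(number of chosen rectangles). Here the partition is of the whole grid into rectangles in $\mathcal{R}$ or rectangles that avoid $V$ entirely, and we must check that the leftover region can be tiled by rectangles containing no elements or single elements. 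Single cells always work, so this holds.

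For the converse, consider a partition with $k$ medium or large subsets. Each such subset is a rectangle in $\mathcal{R}$ containing at least one element, and since singletons are small it must be one of the $R_j$'s (up to having the same vacancy set, which means the same cell set). These $k$ rectangles are disjoint, which gives an independent set of size $k$.

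The main obstacle is making $\mathcal{R}$ consistent with the required invariance: rectangles with equal $\square \cap V_-$ must be treated alike, and rectangles containing no vacancies must be allowed. I would handle this by ensuring, through the scaling and by padding each $R_j$'s boundary with a unique vacancy pattern, that no vacancy-free rectangle contains two elements. Every element is surrounded by vacancies, so any vacancy-free rectangle is a single cell. After that I would verify that the vacancy pattern of $R_j$ determines $R_j$, and this is where the tiny-weight and marker placement details need care.
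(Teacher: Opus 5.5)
\paragraph{The gap is in the weights.} Your overall plan matches the paper's: reduce from maximum disjoint set of integral axis-parallel rectangles, rescale the grid, and use $V_-$ and $\mathcal{R}$ so that vacancy-free rectangles are single cells and an input rectangle is determined by its vacancy pattern. The weighting you chose, however, breaks the reduction. In your final construction the only elements are the markers, each of value $M-\epsilon_j<M=\mu(S^*)$. So an input rectangle $R_j$ that contains no other marker has $\mu(R_j\cap V)=M-\epsilon_j$. This is exactly the value of the singleton $\{p_j\}$, which you need to be small. No $\epsilon_j>0$ makes $R_j$ medium, and $\epsilon_j\le 0$ would make the singletons medium too. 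Your earlier filler version fails the same way, because you explicitly required the total filler mass to be below $\min_j\epsilon_j$.

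Consequently, the partition built from $k$ disjoint input rectangles has rank $1$, not $1+k$. The only medium or large subsets in your instance are allowed rectangles containing at least two markers. The maximum rank therefore counts disjoint input rectangles that happen to contain another rectangle's marker. That is not the independent-set optimum, so your forward direction fails.

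\paragraph{The missing idea.} The threshold $\mu(S^*)$ has to be crossed by grouping, not by a single heavy element. Each input rectangle must contain at least two elements whose individual values lie in $[\frac{1}{2}\mu(S^*),\mu(S^*))$, while every vacancy-free rectangle must be a single element. The paper achieves this as follows:
\begin{itemize}
\item It doubles all coordinates and makes every point with an odd coordinate a vacancy. Then $V$ is the even--even sublattice, and any vacancy-free rectangle is a single point.
\item It gives every element value $\frac{3}{5}\mu(S^*)$.
\item It lets $\mathcal{R}$ consist of the doubled input rectangles, each identified by the vacancies it contains.
\end{itemize}
A doubled non-degenerate rectangle contains at least two (indeed four) elements. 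It is also the unique rectangle with its vacancy set, because that vacancy set already spans the full bounding box and any enlargement adds a row or column of vacancies. Hence the consistency requirement on $\mathcal{R}$ introduces no spurious rectangles. An allowed subset is then medium or large exactly when it is a doubled input rectangle, and two doubled rectangles are disjoint exactly when the originals are. This gives maximum rank equal to $1$ plus the maximum disjoint set size.
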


\begin{proof}
    The special case of {\sc Maximum Disjoint Set} where the collection of objects is comprised of axis-parallel rectangles in $\mathbb{R}^2$ is NP-hard \cite{mdshardness}, and this remains true if each rectangle consists of vertices with integer coordinates \cite{integralrectanglehardness}, so we fix such an instance of {\sc Maximum Disjoint Set}. To convert this into an instance of {\sc Rank Maximization} in the grid with rectangles variant, we proceed as follows. 

    \begin{itemize}
        \item The grid dimensions are 2 times the maximum rectangle coordinate in the corresponding dimension
        \item For all $v \in V$, $\mu(\{v\}) = \frac{3}{5}\mu(S^*)$
        \item The vacancies are all grid points that have an odd coordinate
        \item $\mathcal{R}$ consists of the doubled version of each rectangle---i.e., the rectangle obtained by multiplying each of its vertex coordinates by 2---as delineated by the subset of vacancies inside its boundary
    \end{itemize}

Clearly, two rectangles intersect if and only if their doubled versions intersect. By choice of $\mu$, any rectangular subset $S \in \mathcal{R}$ of the doubled grid that contains at least two elements of $V$ satisfies $\mu(S) \geq \mu(S^*)$, and any other subset does not. Hence, maximizing the rank of $S^*$ in the {\sc Rank Maximization} instance is equivalent to choosing a collection of disjoint rectangular subsets of maximum cardinality. To show that we only consider subsets identified as rectangles in the {\sc Maximum Disjoint Set} instance, note that between any two grid points on the boundary of a rectangle in the {\sc Maximum Disjoint Set} instance, a vacancy is placed between these points in its doubled version. As a result, the doubled version of each non-degenerate, rectangular subset in the original grid uniquely contains the subset of vacancies inside its boundary.    
\end{proof}

Fortunately, the analogue with the better studied {\sc Maximum Disjoint Set} yields an approximation algorithm for our problem.  

\begin{theorem}
    For any $\epsilon > 0$, the grid with rectangles variant of {\sc Rank Maximization} has a polynomial-time approximation algorithm with approximation guarantee $\mathrm{OPT} \leq (2+\epsilon)\mathrm{ALG} - (1+\epsilon)$.
\end{theorem}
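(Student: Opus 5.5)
The plan is to reduce to maximum independent set of axis-parallel rectangles and invoke the $(2+\epsilon)$-approximation of G\'alvez et al.\ \cite{mdsapprox}.

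\textbf{Reduction.} Maximizing the rank of $S^*$ amounts to maximizing the number of medium or large rectangles in a valid rectangular partition. First I will show that any family of pairwise disjoint medium or large rectangles in $\mathcal{R}$, each disjoint from $S^*$, extends to a valid partition of the grid containing all of them, without adding further medium or large pieces beyond these. The remaining region (grid minus the chosen rectangles and $S^*$) can be cut into unit cells. Every single cell either is a vacancy or is a singleton of $V$ lying in $\mathcal{R}$, since a rectangle with empty vacancy intersection is always allowed.

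A subtlety is that a leftover singleton might itself be medium or large. I will handle this by putting every medium or large singleton into the candidate family of rectangles in the first place. So the optimum rank equals $1+\alpha$, where $\alpha$ is the maximum number of pairwise disjoint rectangles in the family $\mathcal{F}$. Here $\mathcal{F}$ consists of the rectangles in $\mathcal{R}$ that avoid $S^*$ and are medium or large.

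The family $\mathcal{F}$ has polynomial size, since there are $O(l^2w^2)$ rectangles. Each one can be tested for membership in $\mathcal{R}$, for disjointness from $S^*$, and for its value in polynomial time, using prefix sums on $\mu$.

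\textbf{Applying the approximation.} I will run the polynomial-time $(2+\epsilon)$-approximation for maximum independent set of rectangles on $\mathcal{F}$. This returns a disjoint subfamily of size $a \ge \alpha/(2+\epsilon)$. I then complete it to a valid partition as described above, obtaining rank $\mathrm{ALG}\ge a+1$.

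Since $\mathrm{OPT}=\alpha+1$, we get $\mathrm{OPT}\le (2+\epsilon)(\mathrm{ALG}-1)+1 = (2+\epsilon)\mathrm{ALG}-(1+\epsilon)$, which is exactly the claimed guarantee.

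\textbf{Main obstacle.} I expect the main obstacle to be the completion step. It must be checked that filling the leftover region with unit cells is always a legal rectangular partition and never destroys the count. Vacancy-only cells simply induce empty subsets and are harmless.

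A related point is that closed rectangles of the grid which merely touch are disjoint as vertex sets. This must match the notion of disjointness used by the approximation algorithm. I will handle it by translating each grid rectangle $\square_{abcd}$ to the real rectangle $[a-\tfrac12,b+\tfrac12]\times[c-\tfrac12,d+\tfrac12]$ with slightly shrunk boundaries. Then open-interior intersection coincides with sharing a grid point.
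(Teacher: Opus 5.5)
Your proposal is correct and takes the same route as the paper. The paper simply observes that the grid with rectangles variant of {\sc Rank Maximization} is a special case of maximum disjoint set of axis-parallel rectangles, with the vacancies and $\mathcal{R}$ restricting which rectangles are allowed, and then invokes the $(2+\epsilon)$-approximation of \cite{mdsapprox}. Your write-up is more complete: it spells out the unit-cell completion to a valid partition, the identity $\mathrm{OPT}=1+\alpha$, and the calculation $\mathrm{OPT}\le(2+\epsilon)(\mathrm{ALG}-1)+1$ that yields the additive $-(1+\epsilon)$, all of which the paper leaves implicit.
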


To see this, we leverage the fact that there is a $(2+\epsilon)$-approximation algorithm for the above special case of {\sc Maximum Disjoint Set} \cite{mdsapprox}. The same guarantees apply to maximizing the number of medium or large subsets in the grid with rectangles variant of {\sc Rank Maximization} because, conversely, this variant is a special case of {\sc Maximum Disjoint Set}; the vacancies and $\mathcal{R}$ effectively just restrict the collection of rectangles we may consider. \\

We also have the following relationship between rank in these two variants, since a hierarchical partition is just a special class of partition. 

\begin{corollary}
    Given an input $(G,V_-,\mu,S^*)$ for a grid variant problem, let $\mathrm{max \; rank}$ be the maximum rank in the grid with rectangles variant and let $\mathrm{max \; hierarchical \; rank}$ be the maximum rank in the grid with hierarchical rectangles variant; define $\mathrm{min \; rank}$ and $\mathrm{min \; hierarchical \; rank}$ analogously. Then $\mathrm{min \; rank} \leq \mathrm{min \; hierarchical \; rank} \leq \mathrm{max \; hierarchical \; rank} \leq \mathrm{max \; rank}$. 
\end{corollary}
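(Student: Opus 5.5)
The plan is to show that every partition admissible in the grid with hierarchical rectangles variant is also admissible in the grid with rectangles variant. The two inner inequalities then follow by optimizing over a smaller feasible set. Concretely, I will verify that the set $\mathcal{H}$ of valid partitions induced by hierarchical rectangle subdivisions is contained in the set $\mathcal{P}$ of valid partitions induced by arbitrary partitions of $[1,l]\times[1,w]$ into axis-parallel rectangles whose induced subsets lie in $\mathcal{R}$, with $S^*$ as one part in both.

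First I will prove the containment $\mathcal{H}\subseteq\mathcal{P}$ by induction on the number of cuts. The base case is the single rectangle $[1,l]\times[1,w]$. For the inductive step, cutting one subgrid by a horizontal or vertical line replaces a rectangle with two axis-parallel rectangles that are disjoint and cover it. Hence every intermediate family, and in particular the terminal family, is a partition of the grid into axis-parallel rectangles. The terminal condition in the hierarchical variant, that each subgrid containing elements of $V$ lies in $\mathcal{R}$, is exactly the membership requirement of the rectangle variant. The requirement that $S^*$ appear as a part is also common to both variants. So every hierarchical partition is feasible for the rectangle variant.

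Given the containment, the minimum rank over the larger set $\mathcal{P}$ is at most the minimum over $\mathcal{H}$, and the maximum over $\mathcal{P}$ is at least the maximum over $\mathcal{H}$. This gives $\mathrm{min\;rank}\le\mathrm{min\;hierarchical\;rank}$ and $\mathrm{max\;hierarchical\;rank}\le\mathrm{max\;rank}$. The middle inequality $\mathrm{min\;hierarchical\;rank}\le\mathrm{max\;hierarchical\;rank}$ needs a slight argument because the two problems define rank with different tie conventions: minimization counts parts with $\mu(S_i)>\mu(S^*)$, while maximization counts parts with $\mu(S_i)\ge\mu(S^*)$. For any fixed partition the strict count is at most the non-strict count. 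Taking a hierarchical partition that attains the minimum, its strict rank is at most its non-strict rank, which in turn is at most the maximum non-strict rank.

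The main obstacle is feasibility, meaning that $\mathcal{H}$ is nonempty so that all four quantities are finite. I will handle this either by assuming the instance is feasible, as is implicit in the statement, or by noting that if $\mathcal{H}$ is empty the hierarchical quantities are $\pm\infty$ under the conventions of Algorithm \ref{hierarchical_rank}, in which case the chain must be read with those conventions. A secondary check is that the \emph{induced} partition of $V$, obtained by intersecting rectangles with $V$ and discarding pure-vacancy rectangles, is treated identically in both variants. This holds because both variants use the same induced-partition rule and the same $\mathcal{R}$.
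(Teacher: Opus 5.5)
Your proposal is correct and matches the paper's own one-line justification: every hierarchical rectangle partition is in particular a rectangle partition, so the outer inequalities follow by optimizing over a smaller feasible set, and your strict-versus-non-strict remark correctly handles the middle inequality. The feasibility concern is moot: guillotine cuts can always isolate the subgrid $S^*$ and then split everything else into unit cells, each of which is either a pure vacancy or vacancy-free and hence in $\mathcal{R}$, so the hierarchical feasible set is never empty.
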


\subsection{Applications of Recursive Algorithms for Rank and Percentile}

The above recursive algorithms for the grid variants of the problems allow us to easily modify the criteria for subsets in a valid partitions and the convex combination of ranks or percentiles that we compute. We consider two such applications below



\subsubsection{Numerical Grading over Multiple Periods}

In several academic and professional contexts, especially secondary schools in the United States, it is customary to evaluate people by recording a measure of average performance for a set number of terms and then computing a (possibly weighted) average of those marks. We consider an ex-post change in these terms to discuss the robustness of such grading practices.  

\begin{definition}
    The {\sc Weighted Average Maximization} problem is the following modification of the linear component case of {\sc Rank Percentile Maximization}: we now require  $S^* = \emptyset$. We are also given a measure $\mu^*$ such that $\mu \leq \mu^*$, and we drop the requirement that $\mu(S) = 0 \implies S = \emptyset$ but force $\mu^*(V) > 0$. The objective is now to select a valid partition $P = (S_1,S_2,\dots,S_c)$, where $S_1 = [1,i_1], S_2 = [i_1+1,i_2],\dots, S_c = [i_{c-1}+1,i_c]$, $i_1 < i_2 < \cdots < i_{c-1} < i_c = l$, and $\mu^*(S_1) > 0, \mu^*(S_2) > 0,\dots, \mu^*(S_c) > 0$ such that the grade $\frac{1}{i_1}\frac{\mu(S_1)}{\mu^*(S_1)} + \frac{1}{i_2-i_1}\frac{\mu(S_2)}{\mu^*(S_2)} + \cdots + \frac{1}{i_c-i_{c-1}}\frac{\mu(S_c)}{\mu^*(S_c)}$ is maximized.   
\end{definition}


In this formulation, the horizontal dimension represents time, and each vertex represents a discrete time period, such as one day. $\mu$ is the number of points earned for a particular interval of time, and $\mu^*$ is the maximum possible points for a particular interval of time. The goal is to partition the entire interval of time into marking periods such that the average percentage of points earned (weighted by the amount of time in each marking period) is as large as possible, yet manipulation by making some periods have zero points possible, effectively redistributing grading weight to other periods (including non-adjacent ones), is disallowed. \\

Because time is the only dimension of interest, we can obtain the below polynomial-time algorithm to solve this version by adapting the procedure for the linear component case of {\sc Rank Percentile Maximization}.

\begin{algorithm}[H]
\caption{Given an instance $(V,E,\mu^*,\mu,S^*=\emptyset)$ of {\sc Weighted Average Maximization}, compute the maximum grade of a partition}\label{numerical_grading}
\begin{algorithmic}

\For{$1 \leq i \leq n$}
    \For{$i \leq j \leq n$}
       \State $C[i,j] = 
        \begin{cases}
            \frac{1}{j-i+1}\frac{\mu(\{v_i,\dots,v_j\})}{\mu^*(\{v_i,\dots,v_j\})} & \text{if } \mu^*(\{v_i,\dots,v_j\}) > 0 \\
            -\infty & \text{otherwise} 
       \end{cases}$
    \EndFor
\EndFor

\State $\mathrm{OPT}[1] = C[1,1]$

\For{$1 < j \leq n$}
    \State $\mathrm{OPT}[j] = \sup_{1 \leq i < j} \{\mathrm{OPT}[i] + C[i+1,j]\}$    
\EndFor

\State \Return $\mathrm{OPT}[n]$
        
\end{algorithmic}
\end{algorithm}

\subsection{Gerrymandering with Hierarchical Rectangles}

There are two main impediments to using the preceding techniques to study gerrymandering: monotonicity of unsigned measures is frequently leveraged and subsets in a valid partition are not required to be a similar size (even if this outcome is likely in the maximization problems). We circumvent both by adapting the algorithm for {\sc Rank Maximization} in the grid with hierarchical rectangles variant, restricting to districts of this form in order to check the new redistricting conditions in our recursive solution approach. \\

In this section, we augment the general set-up of our four partitioning problems in Section \ref{general_case}, which required a vertex set $V$, a binary relation $E$ on $V \times V$ (which determines a set of connected vertices $\mathcal{S}$), a measure $\mu$ on $2^V$, and a subset $S^*$ of $V$. Now, we are also given $N \in \mathbb{N}$, a signed measure $\mu_R$ on $2^V$, and a parameter $\rho \in [0,\frac{1}{N+1})$ such that $|\mu_R| \leq \mu$, each valid partition must also have exactly $N$ subsets, and any subset $S \in \mathcal{S}$ in a valid partition must also satisfy $(1-\rho)\frac{\mu(V)}{N} \leq \mu(S) \leq (1+\rho)\frac{\mu(V)}{N}$. \\

We may now interpret the parameters as follows. A state, which is subdivided into the elements of $V$ (which we call \textit{precincts}) and whose adjacency is represented by $E$, is allocated $N$ districts in a legislative body, each of which should be contiguous and have population (measured by $\mu$) within a fraction $\rho$ of the eventual average district population for that state. Two parties are ascendant in the political system, labeled Player 1 and Player 2, and $\mu_R$ is the number of Player 1-leaning voters minus the number of Player 2-leaning voters (or the expectation thereof) for a given area, so by this convention the objective for this version of {\sc Rank Maximization} would be to choose a redistricting plan satisfying the above rules such that the number of districts that lean toward Player 1 by at least the margin of $S^*$ is maximized. For concreteness, we study the special case where $S^* = \emptyset$; we call this variant of rank maximization {\sc Gerrymandering with Contiguous Districts}. \\

More formally, we have the following definitions. 

\begin{definition}[Valid Contiguous Redistricting, District, and Slate]
    Given a tuple $(V,E,\mu,\mu_R,\rho,N)$ as above, we say a partition $(S_1,\dots,S_N)$ of $V$ is a valid contiguous redistricting if $S_1,\dots,S_N \in \mathcal{S}$ and $(1-\rho)\frac{\mu(V)}{N} \leq \mu(S_i) \leq (1+\rho)\frac{\mu(V)}{N}$ for all $i \in \{1,\dots,N\}$, and each subset $S_i$ is called a district thereof. Given a valid contiguous redistricting $(S_1,\dots,S_N)$ and $j \in \{1,2\}$, we say Player 1's slate $N_1$ is the number of $S_i$ for which $\mu_R(S_i) \geq 0$ and Player 2's slate $N_2$ is the number of $S_i$ for which $\mu_R(S_i) < 0$ (i.e., ties are broken in favor of Player 1). 
\end{definition}

\begin{definition}[Gerrymandering with Contiguous Districts]
    The {\sc Gerrymandering with Contiguous Districts} problem is as follows: given a tuple $(V,E,\mu,\mu_R,\rho,N)$ as above, select a valid contiguous redistricting $(S_1,\dots,S_N)$ that maximizes the number of $S_i$ for which $\mu_R(S_i) \geq 0$. 
\end{definition}

\begin{definition}[Valid Hierarchical Rectangle Redistricting]
    Given a tuple $(V,V_-,E,\mathcal{R},\mu,\mu_R,\rho,N)$ as above, we say a partition $(S_1,\dots,S_N)$ of $V$ is a valid hierarchical rectangle redistricting if it is a valid rectangle redistricting and is induced by hierarchically partitioning the grid into rectangles with axis-parallel edges. That is, in each step, we take a subgrid from a previous step ($[1,l] \times [1,w]$ for the first step) and draw a horizontal or vertical line between two rows or columns thereof, respectively, resulting in the cut subgrid becoming two subgrids separated by the line instead. 
\end{definition}

\begin{definition}[Gerrymandering with Hierarchical Rectangles]
    The {\sc Gerrymandering with Hierarchical Rectangles} problem is as follows: given a tuple $(V,V_-,E,\mathcal{R},\mu,\mu_R,\rho,N)$ as above, select a valid hierarchical rectangle redistricting $(S_1,\dots,S_N)$ that maximizes the number of $S_i$ for which $\mu_R(S_i) \geq 0$. 
\end{definition}

A polynomial-time algorithm solving {\sc Gerrymandering with Hierarchical Rectangles} by modifying the procedure for the grid with hierarchical rectangles variant of {\sc Rank Maximization} is presented below. 

\begin{theorem}
    {\sc Gerrymandering with Hierarchical Rectangles} is polynomial-time solvable.
\end{theorem}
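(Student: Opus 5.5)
We adapt Algorithm \ref{hierarchical_rank}, the memoized recursion over subgrids $\square_{abcd}$, to the new constraints. There are two changes. First, the number of districts must be exactly $N$, so the count becomes a parameter of the state, as in Algorithm \ref{hierarchical_percentile}. Second, every leaf rectangle must satisfy the population bound.

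Concretely, define $\mathrm{OPT}[a,b,c,d,k]$ for $1 \leq a \leq b \leq l$, $1 \leq c \leq d \leq w$ and $0 \leq k \leq N$. Its value is the maximum number of districts $S$ with $\mu_R(S) \geq 0$ over all hierarchical partitions of $\square_{abcd}$ into rectangles in which
\begin{itemize}
\item exactly $k$ of the rectangles contain elements of $V$;
\item each such rectangle lies in $\mathcal{R}$ and satisfies $(1-\rho)\frac{\mu(V)}{N} \leq \mu(\square \cap V) \leq (1+\rho)\frac{\mu(V)}{N}$.
\end{itemize}
If no such partition exists, the value is $-\infty$. The answer is $\mathrm{OPT}[1,l,1,w,N]$.

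The base cases are as follows.
\begin{itemize}
\item If $\square_{abcd} \subset V_-$, the value is $0$ when $k=0$ and $-\infty$ otherwise.
\item If $k=0$ and $\square_{abcd}$ contains an element of $V$, the value is $-\infty$.
\item The option of keeping $\square_{abcd}$ whole is available only when $k=1$, $\square_{abcd} \in \mathcal{R}$, and the population bound holds. It contributes $\mathbf{1}[\mu_R(\square_{abcd}) \geq 0]$.
\end{itemize}
Since $S^* = \emptyset$, the $S^*$-handling branches of Algorithm \ref{hierarchical_rank} disappear.

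The recursive options are vertical cuts at $a \leq i < b$ and horizontal cuts at $c \leq j < d$, together with a split $k = k_1 + k_2$ with $0 \leq k_1 \leq k$. For a vertical cut the value is $\mathrm{OPT}[a,i,c,d,k_1] + \mathrm{OPT}[i+1,b,c,d,k_2]$, and horizontal cuts are handled symmetrically. $\mathrm{OPT}[a,b,c,d,k]$ is the maximum over the whole-rectangle option and all cut options.

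Correctness follows by induction on $(b-a)(d-c)$ together with $k$, exactly as in the proof for Algorithm \ref{hierarchical_rank}. Every hierarchical partition of $\square_{abcd}$ is either the trivial one or has a first cut. Its restriction to each side is then a hierarchical partition of that side, and the district counts and the number of nonnegative-margin districts both add across the cut. The population constraint and membership in $\mathcal{R}$ are local to each leaf rectangle, so they are checked only at the leaves and never interact across cuts. Feasibility, meaning exactly $N$ valid districts, is enforced by the $k$ index and the $-\infty$ convention.

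The step needing the most care is the count index. Without it the recursion could produce the wrong number of districts, and the population bound alone does not force exactly $N$ districts when $\rho > 0$. One could note that $\rho < \frac{1}{N+1}$ pins the count down: the total population $\mu(V)$ lies within a factor $1 \pm \rho$ of $k\frac{\mu(V)}{N}$ only if $k=N$. Keeping $k$ explicitly is simpler, though, and costs only a polynomial factor.

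For the runtime, the table has $O(l^2w^2N)$ entries and $N \leq n$. Each entry takes $O((l+w)N)$ time given precomputed rectangle sums of $\mu$ and $\mu_R$, which 2D prefix sums provide in $O(1)$ per rectangle. Membership in $\mathcal{R}$ is a lookup in the polynomial-size list. The total time is therefore polynomial.
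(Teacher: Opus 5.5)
Your proposal is correct and follows the same memoized guillotine-cut recursion over subrectangles as the paper, checking the population bound and $\mathcal{R}$-membership only at leaf rectangles and proving correctness by induction on subgrid size. The one difference is your explicit district-count index $k$: the paper keeps a four-dimensional table and uses $\rho < \frac{1}{N+1}$ to show that any hierarchical partition into population-feasible districts has exactly $N$ of them (the observation you mention as an alternative), so your version costs an extra polynomial factor but does not depend on that hypothesis.
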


\begin{proof}

Consider the following algorithm. 

\begin{algorithm}[H]
\caption{Given an instance $(V,V_-,E,\mathcal{R},\mu,\mu_R,\rho,N)$ of {\sc Gerrymandering with Hierarchical Rectangles}, compute the maximum slate of Player 1}\label{gerrymandering}
\begin{algorithmic}

\State Initialize four-dimensional table $\mathrm{OPT}$ of size $l \times l \times w \times w$ with {\sc null} entries

\State \Call{Compute-Opt}{$1,l,1,w$}
\newline
\Procedure{Compute-Opt}{$a,b,c,d$}
\If{$\mathrm{OPT}[a,b,c,d]$ is not {\sc null}}
    \State \Return $\mathrm{OPT}[a,b,c,d]$

\ElsIf{$\square_{abcd} \subset V_-$}
    \State $\mathrm{OPT}[a,b,c,d] = 0$

\EndIf

\State $c_b \gets \begin{cases}
1 & \text{if } \mu_R(\square_{abcd}) \geq 0 \text{, } (1-\rho)\frac{\mu(V)}{N} \leq \mu(\square_{abcd}) \leq (1+\rho)\frac{\mu(V)}{N} \text{, and } \square_{abcd} \in \mathcal{R} \\
0 & \text{if } \mu_R(\square_{abcd}) < 0 \text{, } (1-\rho)\frac{\mu(V)}{N} \leq \mu(\square_{abcd}) \leq (1+\rho)\frac{\mu(V)}{N} \text{, and } \square_{abcd} \in \mathcal{R} \\
-\infty & \text{otherwise}
\end{cases}$

\State $c_v \gets -\infty$
\State $c_h \gets -\infty$

\If{$a < b$}
    \State $c_v \gets \max_{a \leq i < b}$ \{\Call{Compute-Opt}{$a,i,c,d$} + \Call{Compute-Opt}{$i+1,b,c,d$}\}
\EndIf

\If{$c < d$}
    \State $c_h \gets \max_{c \leq j < d}$ \{\Call{Compute-Opt}{$a,b,c,j$} + \Call{Compute-Opt}{$a,b,j+1,d$}\}
\EndIf

\State $\mathrm{OPT}[a,b,c,d] = \max\{c_b,c_v,c_h\}$
 
\State \Return $\mathrm{OPT}[a,b,c,d]$
\EndProcedure        

\end{algorithmic}
\end{algorithm}

We claim that for all $1 \leq a \leq b \leq l, 1 \leq c \leq d \leq w$, {\sc Compute-Opt}($a,b,c,d$) gives the maximum number of rectangles with $\mu_R$ non-negative when $\square_{abcd} \backslash V_-$ is hierarchically partitioned into districts $D$ in $\mathcal{R}$ with the further restriction that for each $D$, $(1-\rho)\frac{\mu(V)}{N} \leq \mu(D) \leq (1+\rho)\frac{\mu(V)}{N}$ (unless such a partition is not possible, in which case {\sc Compute-Opt}($a,b,c,d$) should be negative infinity). Taking $a = 1, b = l, c = 1, d = w$ will then prove the correctness of the algorithm because if $V$ is hierarchically partitioned in this manner, by construction the result is a valid hierarchical rectangle redistricting with exactly $N$ districts. Indeed, since $\rho < \frac{1}{N+1}$, $N_- < N$ districts have population at most $(1+\rho)\frac{\mu(V)}{N}N_- < \mu(V)$ and $N_+ > N$ districts have population at least $(1-\rho)\frac{\mu(V)}{N}N_+ > \mu(V)$. \\ 

First, if $\square_{abcd} \subset V_-$, then the subgrid we are partitioning is vacuous and hence there are 0 districts satisfying any claimed conditions. \\

Otherwise, $\square_{abcd}$ includes elements of $V$ that need to be partitioned. There are two options. One, we may leave $\square_{abcd}$ as one district, but only if it is in $\mathcal{R}$ and $(1-\rho)\frac{\mu(V)}{N} \leq \mu(\square_{abcd}) \leq (1+\rho)\frac{\mu(V)}{N}$. In that case, the number of districts contributing to Player 1's slate is 1 if $\square_{abcd}$ satisfies $\mu_R(\square_{abcd}) \geq 0$ and 0 otherwise (if it is not possible, $\square_{abcd}$ is not a singleton and hence the second option will be possible). Two, we may subdivide $\square_{abcd}$ by cutting it vertically (assuming $a \neq b$) in one of $b-a$ positions or horizontally (assuming $c \neq d$) in one of $d-c$ positions and only further partitioning each of these subdivisions (no districts crossing this cut). In that case, the maximum number of districts contributing to Player 1's slate is the maximum sum (over all positions to make the cut) of the maximum number of such districts for each resultant subgrid, since a hierarchical partition of $\square_{abcd}$ with a horizontal or vertical cut across which there may be no districts is exactly comprised of a hierarchical partition of each subgrid. Finally, the maximum number of districts contributing to Player 1's slate from $\square_{abcd} \backslash V_-$ is the maximum number resulting from each option. Hence if we suppose that {\sc Compute-Opt}($a',b',c',d'$) has the desired behavior for all $\square_{a'b'c'd'} \subsetneq \square_{abcd}$, the claim follows by induction on the size of the subgrid $(b-a)(d-c)$ (if it reaches 1, then no recursive calls are made). 
\end{proof}

\section{Numerical Experiments} \label{numerical_experiments}

The computational problems studied in the present paper were motivated by a variety of real-world contexts in which objects are ranked. In this section, we aim to apply the results for these problems to compute the maximum and minimum rank and percentile of select objects of interest. A key point of concern is how robust the rank and percentile are to (perhaps manipulated) partitions of the ambient objects---for instance, is the maximum value close to the minimum? \\

We specialize our efforts to a salient topic: sources of greenhouse gas emissions, which is widely accepted by the scientific community as the leading cause of climate change. The relative contribution of sources to such emissions carries myriad implications for how blame is allocated, both to individual and corporate entities, and may shape policy debates on how to best mitigate climate change. The EPA itself ranks sources of greenhouse gas emissions at a coarse scale of categorization \cite{EPAranking}. \\

More specifically, we used the supplemental table ``KCA-3: 2022 Key Category Approach 1 and Approach 2 Analysis—Level Assessment, without LULUCF" in the EPA's comprehensive report \textit{Inventory of U.S. Greenhouse Gas Emissions and Sinks: 1990-2022}, which is intended to give an exhaustive breakdown of sources of emissions by CRT category, as introduced in Section \ref{hierarchical_category_variant}, with the exception of ``Land Use, Land-Use Change, and Forestry'', a CRT category that the EPA sometimes omits due to the presence of intertwined greenhouse gas emission sources and sinks (whose separation could foment a form of manipulation in rankings that we are not studying in this paper). These CRT codes allow us to model the categories according to the hierarchical category variant from Section \ref{hierarchical_category_variant}, where we also create a category for each CRT Category that is listed for multiple greenhouse gases and append the corresponding greenhouse gas to the description. Our measure $\mu$ is the EPA estimate for the amount of emissions, in million metric tons of CO$_2$ equivalent, the underlying subset of activities caused in the year 2022. \\

The results indicate that large sources of emissions are somewhat robust in their placement in rankings. For instance, emissions due to road transportation (CRT Category: 1.A.3.b Transportation: Road) are always ranked first or second in any partition, and they always appear in the top 12\% of ranked sources. More intermediate sources of emissions, however, experience much greater variability in ranking. Emissions due to cement production (CRT Category: 2.A.1 Cement Production (CO$_2$)) can be ranked as the 5th or 25th largest source of emissions and can appear after about 8\% or 71\% of other sources. Therefore, while factors such as automobiles may already dominate public discourse of personal responsibility and environmental regulations, there are a variety of other contributors to climate change whose relative importance may be manipulated by bad-faith actors and interpreted in widely different ways arising from the same set of base facts. \\

The results for additional categories are shown in Table \ref{tab:crt_category_rankings}. The partitions achieving the indicated ranks and percentiles for 2.F.1 Emissions from Substitutes for Ozone Depleting Substances: Refrigeration and Air conditioning HFCs, PFCs are reproduced in Figures \ref{fig:tree_min_rank_2f1_full_list}, \ref{fig:tree_max_rank_2f1_full_list}, and \ref{fig:tree_max_perc_2f1_full_list}. To optimize readability, node labels in these figures are factored hierarchically. Each parent node displays the maximal common string prefix of its subtree, while descendant nodes omit this inherited prefix to display only their unique differentiating suffixes. The full list of CRT categories used appears in Appendix \ref{chpt:appendix_crt_categories} along with their absolute emission contributions.

\begin{table}[H]
    \centering
    \makebox[\textwidth][c]{
        \setlength{\tabcolsep}{.7pt}
        \begin{tabular}{|>{\centering\arraybackslash}m{5.5cm}|c|c|c|c|}
            \hline
            \textbf{CRT Category} & \textbf{Min Rank} & \textbf{Max Rank} & \textbf{Min Percentile} & \textbf{Max Percentile} \\
            \hline
            1.A.3.b Transportation: Road & 1 & 2 & 0.40\% & 11.54\% \\
            \hline
            1.A.3.a Transportation: Aviation & 7 & 13 & 7.39\% & 67.86\% \\
            \hline
            2.A.1 Cement Production (CO$_2$) & 5 & 25 & 8.46\% & 71.05\% \\
            \hline
            2.F.1 Emissions from Substitutes for Ozone Depleting Substances: Refrigeration and Air conditioning HFCs, PFCs & 3 & 13 & 4.72\% & 44.74\% \\
            \hline
            3 Agriculture & 2 & 6 & 2.34\% & 50.00\% \\
            \hline
            3.B (Manure Management) & 6 & 15 & 8.87\% & 71.88\% \\
            \hline
        \end{tabular}
    }
    \caption{Minimum and maximum rank and percentile (rounded to 2 decimal places) for select EPA Common Reporting Tables (CRT) categories over all valid hierarchical category partitions.}
    \label{tab:crt_category_rankings}
\end{table}

\begin{figure}[H]
    \centering
    \adjustbox{max height=0.95\textheight, max width=\textwidth}{
    \begin{forest}
      forked edges,
      for tree={
        grow'=0, fit=band, draw, rectangle, rounded corners, align=left, 
        edge={thick}, font=\sffamily\tiny, inner sep=2pt, s sep=0.5mm, l sep=4mm
      },
      part/.style={fill=blue!10, draw=blue!80!black, thick},
      target/.style={fill=red!15, draw=red!80!black, thick}
      [{All}
        [{1 Energy}, part
          [{1.A}
            [{1.A.1 Stationary Combustion}
              [{Coal - Electricity Generation}
                [{CH$_4$}]
                [{CO$_2$}]
                [{N$_2$O}]
              ]
              [{Geothermal Energy CO$_2$}]
              [{Natural Gas - Electricity Generation}
                [{CH$_4$}]
                [{CO$_2$}]
                [{N$_2$O}]
              ]
              [{Oil - Electricity Generation}
                [{CH$_4$}]
                [{CO$_2$}]
                [{N$_2$O}]
              ]
              [{Wood - Electricity Generation}
                [{CH$_4$}]
                [{N$_2$O}]
              ]
            ]
            [{1.A.2 Stationary Combustion}
              [{Coal - Industrial CO$_2$}]
              [{Industrial}
                [{CH$_4$}]
                [{N$_2$O}]
              ]
              [{Natural Gas - Industrial CO$_2$}]
              [{Oil - Industrial CO$_2$}]
            ]
            [{1.A.3 Transportation}
              [{1.A.3.a Aviation}
                [{CH$_4$}]
                [{CO$_2$}]
                [{N$_2$O}]
              ]
              [{1.A.3.b Road}
                [{CH$_4$}]
                [{CO$_2$}]
                [{N$_2$O}]
              ]
              [{1.A.3.c Railways}
                [{CH$_4$}]
                [{CO$_2$}]
                [{N$_2$O}]
              ]
              [{1.A.3.d Domestic Navigation}
                [{CH$_4$}]
                [{CO$_2$}]
                [{N$_2$O}]
              ]
              [{1.A.3.e Other}
                [{CH$_4$}]
                [{CO$_2$}]
                [{N$_2$O}]
              ]
            ]
            [{1.A.4 Stationary Combustion}
              [{1.A.4.a}
                [{Coal - Commercial CO$_2$}]
                [{Commercial}
                  [{CH$_4$}]
                  [{N$_2$O}]
                ]
                [{Natural Gas - Commercial CO$_2$}]
                [{Oil - Commercial CO$_2$}]
              ]
              [{1.A.4.b}
                [{Coal - Residential CO$_2$}]
                [{Natural Gas - Residential CO$_2$}]
                [{Oil - Residential CO$_2$}]
                [{Residential}
                  [{CH$_4$}]
                  [{N$_2$O}]
                ]
              ]
            ]
            [{1.A.5}
              [{Non-Energy Use of Fuels CO$_2$}]
              [{Stationary Combustion - Coal - U.S. Territories CO$_2$}]
              [{Stationary Combustion - Natural Gas - U.S. Territories CO$_2$}]
              [{Stationary Combustion - Oil - U.S. Territories CO$_2$}]
              [{Stationary Combustion - U.S. Territories}
                [{CH$_4$}]
                [{N$_2$O}]
              ]
              [{1.A.5.b Transportation: Military}
                [{CH$_4$}]
                [{CO$_2$}]
                [{N$_2$O}]
              ]
            ]
          ]
          [{1.B}
            [{1.B.1}
              [{Coal Mining CO$_2$}]
              [{Fugitive Emissions from Abandoned Underground Coal Mines CH$_4$}]
              [{Fugitive Emissions from Coal Mining CH$_4$}]
            ]
            [{1.B.2}
              [{Abandoned Oil and Natural Gas Wells}
                [{CH$_4$}]
                [{CO$_2$}]
              ]
              [{Natural Gas Systems}
                [{CH$_4$}]
                [{CO$_2$}]
                [{N$_2$O}]
              ]
              [{Petroleum Systems}
                [{CH$_4$}]
                [{CO$_2$}]
                [{N$_2$O}]
              ]
            ]
          ]
        ]
        [{2 Industrial Processes and Product Use}
          [{2.A}
            [{2.A.1 Cement Production CO$_2$}, part]
            [{2.A.2 Lime Production CO$_2$}, part]
            [{2.A.3 Glass Production CO$_2$}, part]
            [{2.A.4 Other Process Uses of Carbonates CO$_2$}, part]
          ]
          [{2.B}
            [{2.B.1 Ammonia Production CO$_2$}, part]
            [{2.B.2 Nitric Acid Production N$_2$O}, part]
            [{2.B.3 Adipic Acid Production N$_2$O}, part]
            [{2.B.4 Caprolactam, Glyoxal, and Glyoxylic Acid Production N$_2$O}, part]
            [{2.B.5 Silicon Carbide Production and Consumption}
              [{CH$_4$}, part]
              [{CO$_2$}, part]
            ]
            [{2.B.6 Titanium Dioxide Production CO$_2$}, part]
            [{2.B.7 Soda Ash Production CO$_2$}, part]
            [{2.B.8 Petrochemical Production}
              [{CH$_4$}, part]
              [{CO$_2$}, part]
            ]
            [{2.B.9 Fluorochemical Production PFC, HFC, SF$_6$, NF$_3$}, part]
            [{2.B.10}
              [{Carbon Dioxide Consumption CO$_2$}]
              [{Phosphoric Acid Production CO$_2$}]
              [{Urea Consumption for Non-Ag Purposes CO$_2$}]
            ]
          ]
          [{2.C}
            [{2.C.1 Iron and Steel Production \& Metallurgical Coke Production}
              [{CH$_4$}, part]
              [{CO$_2$}, part]
            ]
            [{2.C.2 Ferroalloy Production}
              [{CH$_4$}, part]
              [{CO$_2$}, part]
            ]
            [{2.C.3 Aluminum Production}
              [{CO$_2$}, part]
              [{PFCs}, part]
            ]
            [{2.C.4 Magnesium Production and Processing}
              [{CO$_2$}, part]
              [{HFCs}, part]
              [{SF$_6$}, part]
            ]
            [{2.C.5 Lead Production CO$_2$}, part]
            [{2.C.6 Zinc Production CO$_2$}, part]
          ]
          [{2.E Electronics Industry}
            [{N$_2$O}, part]
            [{PFC, HFC, SF$_6$, NF$_3$}, part]
          ]
          [{2.F Emissions from Substitutes for Ozone Depleting Substances}
            [{2.F.1 Refrigeration and Air conditioning HFCs, PFCs}, target]
            [{2.F.2 Foam Blowing Agents HFCs, PFCs}, part]
            [{2.F.3 Fire Protection HFCs, PFCs}, part]
            [{2.F.4 Aerosols HFCs, PFCs}, part]
            [{2.F.5 Solvents HFCs, PFCs}, part]
          ]
          [{2.G}
            [{2.G Electrical Equipment PFC, SF$_6$}, part]
            [{2.G Other Product Manufacture and Use}
              [{N$_2$O}, part]
              [{PFC, HFC, SF$_6$}, part]
            ]
          ]
        ]
        [{3 Agriculture}, part
          [{3.A Enteric Fermentation}
            [{3.A.1 Cattle CH$_4$}]
            [{3.A.4 Other Livestock CH$_4$}]
          ]
          [{3.B Manure Management}
            [{3.B.1 Cattle}
              [{CH$_4$}]
              [{N$_2$O}]
            ]
            [{3.B.4 Other Livestock}
              [{CH$_4$}]
              [{N$_2$O}]
            ]
          ]
          [{3.C Rice Cultivation CH$_4$}]
          [{3.D}
            [{3.D.1 Direct Agricultural Soil Management N$_2$O}]
            [{3.D.2 Indirect Applied Nitrogen N$_2$O}]
          ]
          [{3.F Field Burning of Agricultural Residues}
            [{CH$_4$}]
            [{N$_2$O}]
          ]
          [{3.G Liming CO$_2$}]
          [{3.H Urea Fertilization CO$_2$}]
        ]
        [{5 Waste}
          [{5.A}
            [{5.A Commercial Landfills CH$_4$}, part]
            [{5.A Industrial Landfills CH$_4$}, part]
          ]
          [{5.B}
            [{5.B Composting}
              [{CH$_4$}, part]
              [{N$_2$O}, part]
            ]
            [{5.B.2 Anaerobic Digestion at Biogas Facilities CH$_4$}, part]
          ]
          [{5.C.1 Incineration of Waste}
            [{CH$_4$}, part]
            [{CO$_2$}, part]
            [{N$_2$O}, part]
          ]
          [{5.D}
            [{5.D Domestic Wastewater Treatment}
              [{CH$_4$}, part]
              [{N$_2$O}, part]
            ]
            [{5.D Industrial Wastewater Treatment}
              [{CH$_4$}, part]
              [{N$_2$O}, part]
            ]
          ]
        ]
      ]
    \end{forest}
    }
    \caption{A partition that minimizes both the rank and rank percentile of 2.F.1 Emissions from Substitutes for Ozone Depleting Substances: Refrigeration and Air conditioning HFCs, PFCs, as recorded in Table \ref{tab:crt_category_rankings}. Subsets included in the partition are shaded.}
    \label{fig:tree_min_rank_2f1_full_list}
\end{figure}

\begin{figure}[H]
    \centering
    \adjustbox{max height=0.95\textheight, max width=\textwidth}{
    \begin{forest}
      forked edges,
      for tree={
        grow'=0, fit=band, draw, rectangle, rounded corners, align=left, 
        edge={thick}, font=\sffamily\tiny, inner sep=2pt, s sep=0.5mm, l sep=4mm
      },
      part/.style={fill=blue!10, draw=blue!80!black, thick},
      target/.style={fill=red!15, draw=red!80!black, thick}
      [{All}
        [{1 Energy}
          [{1.A}
            [{1.A.1 Stationary Combustion}
              [{Coal - Electricity Generation}
                [{CH$_4$}, part]
                [{CO$_2$}, part]
                [{N$_2$O}, part]
              ]
              [{Geothermal Energy CO$_2$}, part]
              [{Natural Gas - Electricity Generation}
                [{CH$_4$}, part]
                [{CO$_2$}, part]
                [{N$_2$O}, part]
              ]
              [{Oil - Electricity Generation}, part
                [{CH$_4$}]
                [{CO$_2$}]
                [{N$_2$O}]
              ]
              [{Wood - Electricity Generation}, part
                [{CH$_4$}]
                [{N$_2$O}]
              ]
            ]
            [{1.A.2 Stationary Combustion}
              [{Coal - Industrial CO$_2$}, part]
              [{Industrial}, part
                [{CH$_4$}]
                [{N$_2$O}]
              ]
              [{Natural Gas - Industrial CO$_2$}, part]
              [{Oil - Industrial CO$_2$}, part]
            ]
            [{1.A.3 Transportation}
              [{1.A.3.a Aviation}, part
                [{CH$_4$}]
                [{CO$_2$}]
                [{N$_2$O}]
              ]
              [{1.A.3.b Road}
                [{CH$_4$}, part]
                [{CO$_2$}, part]
                [{N$_2$O}, part]
              ]
              [{1.A.3.c Railways}, part
                [{CH$_4$}]
                [{CO$_2$}]
                [{N$_2$O}]
              ]
              [{1.A.3.d Domestic Navigation}, part
                [{CH$_4$}]
                [{CO$_2$}]
                [{N$_2$O}]
              ]
              [{1.A.3.e Other}, part
                [{CH$_4$}]
                [{CO$_2$}]
                [{N$_2$O}]
              ]
            ]
            [{1.A.4 Stationary Combustion}
              [{1.A.4.a}
                [{Coal - Commercial CO$_2$}, part]
                [{Commercial}, part
                  [{CH$_4$}]
                  [{N$_2$O}]
                ]
                [{Natural Gas - Commercial CO$_2$}, part]
                [{Oil - Commercial CO$_2$}, part]
              ]
              [{1.A.4.b}
                [{Coal - Residential CO$_2$}, part]
                [{Natural Gas - Residential CO$_2$}, part]
                [{Oil - Residential CO$_2$}, part]
                [{Residential}, part
                  [{CH$_4$}]
                  [{N$_2$O}]
                ]
              ]
            ]
            [{1.A.5}, part
              [{Non-Energy Use of Fuels CO$_2$}]
              [{Stationary Combustion - Coal - U.S. Territories CO$_2$}]
              [{Stationary Combustion - Natural Gas - U.S. Territories CO$_2$}]
              [{Stationary Combustion - Oil - U.S. Territories CO$_2$}]
              [{Stationary Combustion - U.S. Territories}
                [{CH$_4$}]
                [{N$_2$O}]
              ]
              [{1.A.5.b Transportation: Military}
                [{CH$_4$}]
                [{CO$_2$}]
                [{N$_2$O}]
              ]
            ]
          ]
          [{1.B}
            [{1.B.1}, part
              [{Coal Mining CO$_2$}]
              [{Fugitive Emissions from Abandoned Underground Coal Mines CH$_4$}]
              [{Fugitive Emissions from Coal Mining CH$_4$}]
            ]
            [{1.B.2}
              [{Abandoned Oil and Natural Gas Wells}, part
                [{CH$_4$}]
                [{CO$_2$}]
              ]
              [{Natural Gas Systems}
                [{CH$_4$}, part]
                [{CO$_2$}, part]
                [{N$_2$O}, part]
              ]
              [{Petroleum Systems}, part
                [{CH$_4$}]
                [{CO$_2$}]
                [{N$_2$O}]
              ]
            ]
          ]
        ]
        [{2 Industrial Processes and Product Use}
          [{2.A}, part
            [{2.A.1 Cement Production CO$_2$}]
            [{2.A.2 Lime Production CO$_2$}]
            [{2.A.3 Glass Production CO$_2$}]
            [{2.A.4 Other Process Uses of Carbonates CO$_2$}]
          ]
          [{2.B}, part
            [{2.B.1 Ammonia Production CO$_2$}]
            [{2.B.2 Nitric Acid Production N$_2$O}]
            [{2.B.3 Adipic Acid Production N$_2$O}]
            [{2.B.4 Caprolactam, Glyoxal, and Glyoxylic Acid Production N$_2$O}]
            [{2.B.5 Silicon Carbide Production and Consumption}
              [{CH$_4$}]
              [{CO$_2$}]
            ]
            [{2.B.6 Titanium Dioxide Production CO$_2$}]
            [{2.B.7 Soda Ash Production CO$_2$}]
            [{2.B.8 Petrochemical Production}
              [{CH$_4$}]
              [{CO$_2$}]
            ]
            [{2.B.9 Fluorochemical Production PFC, HFC, SF$_6$, NF$_3$}]
            [{2.B.10}
              [{Carbon Dioxide Consumption CO$_2$}]
              [{Phosphoric Acid Production CO$_2$}]
              [{Urea Consumption for Non-Ag Purposes CO$_2$}]
            ]
          ]
          [{2.C}, part
            [{2.C.1 Iron and Steel Production \& Metallurgical Coke Production}
              [{CH$_4$}]
              [{CO$_2$}]
            ]
            [{2.C.2 Ferroalloy Production}
              [{CH$_4$}]
              [{CO$_2$}]
            ]
            [{2.C.3 Aluminum Production}
              [{CO$_2$}]
              [{PFCs}]
            ]
            [{2.C.4 Magnesium Production and Processing}
              [{CO$_2$}]
              [{HFCs}]
              [{SF$_6$}]
            ]
            [{2.C.5 Lead Production CO$_2$}]
            [{2.C.6 Zinc Production CO$_2$}]
          ]
          [{2.E Electronics Industry}, part
            [{N$_2$O}]
            [{PFC, HFC, SF$_6$, NF$_3$}]
          ]
          [{2.F Emissions from Substitutes for Ozone Depleting Substances}
            [{2.F.1 Refrigeration and Air conditioning HFCs, PFCs}, target]
            [{2.F.2 Foam Blowing Agents HFCs, PFCs}, part]
            [{2.F.3 Fire Protection HFCs, PFCs}, part]
            [{2.F.4 Aerosols HFCs, PFCs}, part]
            [{2.F.5 Solvents HFCs, PFCs}, part]
          ]
          [{2.G}, part
            [{2.G Electrical Equipment PFC, SF$_6$}]
            [{2.G Other Product Manufacture and Use}
              [{N$_2$O}]
              [{PFC, HFC, SF$_6$}]
            ]
          ]
        ]
        [{3 Agriculture}
          [{3.A Enteric Fermentation}
            [{3.A.1 Cattle CH$_4$}, part]
            [{3.A.4 Other Livestock CH$_4$}, part]
          ]
          [{3.B Manure Management}, part
            [{3.B.1 Cattle}
              [{CH$_4$}]
              [{N$_2$O}]
            ]
            [{3.B.4 Other Livestock}
              [{CH$_4$}]
              [{N$_2$O}]
            ]
          ]
          [{3.C Rice Cultivation CH$_4$}, part]
          [{3.D}
            [{3.D.1 Direct Agricultural Soil Management N$_2$O}, part]
            [{3.D.2 Indirect Applied Nitrogen N$_2$O}, part]
          ]
          [{3.F Field Burning of Agricultural Residues}, part
            [{CH$_4$}]
            [{N$_2$O}]
          ]
          [{3.G Liming CO$_2$}, part]
          [{3.H Urea Fertilization CO$_2$}, part]
        ]
        [{5 Waste}, part
          [{5.A}
            [{5.A Commercial Landfills CH$_4$}]
            [{5.A Industrial Landfills CH$_4$}]
          ]
          [{5.B}
            [{5.B Composting}
              [{CH$_4$}]
              [{N$_2$O}]
            ]
            [{5.B.2 Anaerobic Digestion at Biogas Facilities CH$_4$}]
          ]
          [{5.C.1 Incineration of Waste}
            [{CH$_4$}]
            [{CO$_2$}]
            [{N$_2$O}]
          ]
          [{5.D}
            [{5.D Domestic Wastewater Treatment}
              [{CH$_4$}]
              [{N$_2$O}]
            ]
            [{5.D Industrial Wastewater Treatment}
              [{CH$_4$}]
              [{N$_2$O}]
            ]
          ]
        ]
      ]
    \end{forest}
    }
    \caption{A partition that maximizes the rank of 2.F.1 Emissions from Substitutes for Ozone Depleting Substances: Refrigeration and Air conditioning HFCs, PFCs, as recorded in Table \ref{tab:crt_category_rankings}. Subsets included in the partition are shaded.}
    \label{fig:tree_max_rank_2f1_full_list}
\end{figure}


\begin{figure}[H] 
    \centering
    \adjustbox{max height=0.95\textheight, max width=\textwidth}{
    \begin{forest}
      forked edges,
      for tree={
        grow'=0, fit=band, draw, rectangle, rounded corners, align=left, 
        edge={thick}, font=\sffamily\tiny, inner sep=2pt, s sep=0.5mm, l sep=4mm
      },
      part/.style={fill=blue!10, draw=blue!80!black, thick},
      target/.style={fill=red!15, draw=red!80!black, thick}
      [{All}
        [{1 Energy}
          [{1.A}
            [{1.A.1 Stationary Combustion}, part
              [{Coal - Electricity Generation}
                [{CH$_4$}]
                [{CO$_2$}]
                [{N$_2$O}]
              ]
              [{Geothermal Energy CO$_2$}]
              [{Natural Gas - Electricity Generation}
                [{CH$_4$}]
                [{CO$_2$}]
                [{N$_2$O}]
              ]
              [{Oil - Electricity Generation}
                [{CH$_4$}]
                [{CO$_2$}]
                [{N$_2$O}]
              ]
              [{Wood - Electricity Generation}
                [{CH$_4$}]
                [{N$_2$O}]
              ]
            ]
            [{1.A.2 Stationary Combustion}, part
              [{Coal - Industrial CO$_2$}]
              [{Industrial}
                [{CH$_4$}]
                [{N$_2$O}]
              ]
              [{Natural Gas - Industrial CO$_2$}]
              [{Oil - Industrial CO$_2$}]
            ]
            [{1.A.3 Transportation}, part
              [{1.A.3.a Aviation}
                [{CH$_4$}]
                [{CO$_2$}]
                [{N$_2$O}]
              ]
              [{1.A.3.b Road}
                [{CH$_4$}]
                [{CO$_2$}]
                [{N$_2$O}]
              ]
              [{1.A.3.c Railways}
                [{CH$_4$}]
                [{CO$_2$}]
                [{N$_2$O}]
              ]
              [{1.A.3.d Domestic Navigation}
                [{CH$_4$}]
                [{CO$_2$}]
                [{N$_2$O}]
              ]
              [{1.A.3.e Other}
                [{CH$_4$}]
                [{CO$_2$}]
                [{N$_2$O}]
              ]
            ]
            [{1.A.4 Stationary Combustion}
              [{1.A.4.a}, part
                [{Coal - Commercial CO$_2$}]
                [{Commercial}
                  [{CH$_4$}]
                  [{N$_2$O}]
                ]
                [{Natural Gas - Commercial CO$_2$}]
                [{Oil - Commercial CO$_2$}]
              ]
              [{1.A.4.b}, part
                [{Coal - Residential CO$_2$}]
                [{Natural Gas - Residential CO$_2$}]
                [{Oil - Residential CO$_2$}]
                [{Residential}
                  [{CH$_4$}]
                  [{N$_2$O}]
                ]
              ]
            ]
            [{1.A.5}, part
              [{Non-Energy Use of Fuels CO$_2$}]
              [{Stationary Combustion - Coal - U.S. Territories CO$_2$}]
              [{Stationary Combustion - Natural Gas - U.S. Territories CO$_2$}]
              [{Stationary Combustion - Oil - U.S. Territories CO$_2$}]
              [{Stationary Combustion - U.S. Territories}
                [{CH$_4$}]
                [{N$_2$O}]
              ]
              [{1.A.5.b Transportation: Military}
                [{CH$_4$}]
                [{CO$_2$}]
                [{N$_2$O}]
              ]
            ]
          ]
          [{1.B}, part
            [{1.B.1}
              [{Coal Mining CO$_2$}]
              [{Fugitive Emissions from Abandoned Underground Coal Mines CH$_4$}]
              [{Fugitive Emissions from Coal Mining CH$_4$}]
            ]
            [{1.B.2}
              [{Abandoned Oil and Natural Gas Wells}
                [{CH$_4$}]
                [{CO$_2$}]
              ]
              [{Natural Gas Systems}
                [{CH$_4$}]
                [{CO$_2$}]
                [{N$_2$O}]
              ]
              [{Petroleum Systems}
                [{CH$_4$}]
                [{CO$_2$}]
                [{N$_2$O}]
              ]
            ]
          ]
        ]
        [{2 Industrial Processes and Product Use}
          [{2.A}, part
            [{2.A.1 Cement Production CO$_2$}]
            [{2.A.2 Lime Production CO$_2$}]
            [{2.A.3 Glass Production CO$_2$}]
            [{2.A.4 Other Process Uses of Carbonates CO$_2$}]
          ]
          [{2.B}, part
            [{2.B.1 Ammonia Production CO$_2$}]
            [{2.B.2 Nitric Acid Production N$_2$O}]
            [{2.B.3 Adipic Acid Production N$_2$O}]
            [{2.B.4 Caprolactam, Glyoxal, and Glyoxylic Acid Production N$_2$O}]
            [{2.B.5 Silicon Carbide Production and Consumption}
              [{CH$_4$}]
              [{CO$_2$}]
            ]
            [{2.B.6 Titanium Dioxide Production CO$_2$}]
            [{2.B.7 Soda Ash Production CO$_2$}]
            [{2.B.8 Petrochemical Production}
              [{CH$_4$}]
              [{CO$_2$}]
            ]
            [{2.B.9 Fluorochemical Production PFC, HFC, SF$_6$, NF$_3$}]
            [{2.B.10}
              [{Carbon Dioxide Consumption CO$_2$}]
              [{Phosphoric Acid Production CO$_2$}]
              [{Urea Consumption for Non-Ag Purposes CO$_2$}]
            ]
          ]
          [{2.C}, part
            [{2.C.1 Iron and Steel Production \& Metallurgical Coke Production}
              [{CH$_4$}]
              [{CO$_2$}]
            ]
            [{2.C.2 Ferroalloy Production}
              [{CH$_4$}]
              [{CO$_2$}]
            ]
            [{2.C.3 Aluminum Production}
              [{CO$_2$}]
              [{PFCs}]
            ]
            [{2.C.4 Magnesium Production and Processing}
              [{CO$_2$}]
              [{HFCs}]
              [{SF$_6$}]
            ]
            [{2.C.5 Lead Production CO$_2$}]
            [{2.C.6 Zinc Production CO$_2$}]
          ]
          [{2.E Electronics Industry}, part
            [{N$_2$O}]
            [{PFC, HFC, SF$_6$, NF$_3$}]
          ]
          [{2.F Emissions from Substitutes for Ozone Depleting Substances}
            [{2.F.1 Refrigeration and Air conditioning HFCs, PFCs}, target]
            [{2.F.2 Foam Blowing Agents HFCs, PFCs}, part]
            [{2.F.3 Fire Protection HFCs, PFCs}, part]
            [{2.F.4 Aerosols HFCs, PFCs}, part]
            [{2.F.5 Solvents HFCs, PFCs}, part]
          ]
          [{2.G}, part
            [{2.G Electrical Equipment PFC, SF$_6$}]
            [{2.G Other Product Manufacture and Use}
              [{N$_2$O}]
              [{PFC, HFC, SF$_6$}]
            ]
          ]
        ]
        [{3 Agriculture}, part
          [{3.A Enteric Fermentation}
            [{3.A.1 Cattle CH$_4$}]
            [{3.A.4 Other Livestock CH$_4$}]
          ]
          [{3.B Manure Management}
            [{3.B.1 Cattle}
              [{CH$_4$}]
              [{N$_2$O}]
            ]
            [{3.B.4 Other Livestock}
              [{CH$_4$}]
              [{N$_2$O}]
            ]
          ]
          [{3.C Rice Cultivation CH$_4$}]
          [{3.D}
            [{3.D.1 Direct Agricultural Soil Management N$_2$O}]
            [{3.D.2 Indirect Applied Nitrogen N$_2$O}]
          ]
          [{3.F Field Burning of Agricultural Residues}
            [{CH$_4$}]
            [{N$_2$O}]
          ]
          [{3.G Liming CO$_2$}]
          [{3.H Urea Fertilization CO$_2$}]
        ]
        [{5 Waste}, part
          [{5.A}
            [{5.A Commercial Landfills CH$_4$}]
            [{5.A Industrial Landfills CH$_4$}]
          ]
          [{5.B}
            [{5.B Composting}
              [{CH$_4$}]
              [{N$_2$O}]
            ]
            [{5.B.2 Anaerobic Digestion at Biogas Facilities CH$_4$}]
          ]
          [{5.C.1 Incineration of Waste}
            [{CH$_4$}]
            [{CO$_2$}]
            [{N$_2$O}]
          ]
          [{5.D}
            [{5.D Domestic Wastewater Treatment}
              [{CH$_4$}]
              [{N$_2$O}]
            ]
            [{5.D Industrial Wastewater Treatment}
              [{CH$_4$}]
              [{N$_2$O}]
            ]
          ]
        ]
      ]
    \end{forest}
    }
    \caption{A partition that maximizes the rank percentile of 2.F.1 Emissions from Substitutes for Ozone Depleting Substances: Refrigeration and Air conditioning HFCs, PFCs, as recorded in Table \ref{tab:crt_category_rankings}. Subsets included in the partition are shaded.}
    \label{fig:tree_max_perc_2f1_full_list}
\end{figure}






\newpage

\section{Future Work} \label{future_work}
The above treatment leaves several opportunities to strengthen the results presented therein. \\

For instance, the approximation guarantees may be able to be improved with the additional settings of the above special cases. The $\frac{k+2}{3}$-approximation algorithm for $k$-{\sc Set Packing} holds for an arbitrary set-system with $n$ subsets; it uses a local search algorithm with swaps of size $O(\log n)$. Assuming the structure of a graph---or in particular a $(k-1)$-regular graph where a vertex and its neighbors always form a $k$-component---may affect the guarantees possible with local search. It is worth noting that the authors responsible for the above guarantee do show that it is essentially sharp for local search algorithms that consider swap sizes that are linear in $n$, but the example is not an instance of the uniform value case of {\sc Rank Maximization} \cite{localsearchapprox}. Preliminary results suggest that, in a $(k-1)$-regular graph, local search may be used to obtain a constant fraction of the total possible $\lfloor \frac{n}{k} \rfloor$ disjoint $k$-components that may be in a graph. It also appears that using the probabilistic method to randomly place vertices in subgraphs of size $ck$ for some $c \in \{2,3,\dots\}$ may be used to show the existence of a constant fraction of $\lfloor \frac{n}{k} \rfloor$ disjoint $k$-components in any $(k-1)$-regular graph. Nonetheless, some additional care is required to handle the conditional probabilities involved. \\

Similarly, exploring a special case of the partition problems where the input graph is planar would enable us to investigate perennial ordinal statements involving geography, e.g., that California has the 5th largest economy in the world \cite{govca2024economy}. This claim is based on treating California as a country (along with the remainder of the United States) yet otherwise maintaining the list of recognized countries by the International Monetary Fund, which already involves counterfactual reasoning that may be taken to its limit via the above framework. \\

Further bounds on the maximum and minimum rank may be pursued in more of the special cases and variants presented above. This is consistent with the overarching goal of the project: to study how robust rank and percentile are to partitioning by (potentially bad-faith) actors. Of particular interest are frequently encountered claims about societal problems made by news media or corporations. This includes analysis of social data in the United States by statistical area, Netflix's data of most-watched series by season, and nutritional labels for processed foods. 

\section*{Acknowledgments}
The author would like to thank Martin Strauss, Mahdi Cheraghchi, and Euiwoong Lee for many helpful conversations and insights. 

\printbibliography

\newpage

\appendix

\section{List of CRT Categories} \label{chpt:appendix_crt_categories}

The following list gives the hierarchical breakdown of CRT Categories, with 2022-level emissions, in million metric tons of $\mathrm{CO_2}$ equivalent, shown in parentheses. Indentation indicates a subcategory of the category immediately to the left. \\

\treeitem{0}{All (6343.21)}
\treeitem{1}{1 Energy (5187.08)}
\treeitem{2}{1.A (4854.82)}
\treeitem{3}{1.A.1 (1554.53)}
\treeitem{4}{1.A.1 Stationary Combustion - Coal - Electricity Generation (869.853)}
\treeitem{5}{1.A.1 Stationary Combustion - Coal - Electricity Generation CH$_4$ (0.228783)}
\treeitem{5}{1.A.1 Stationary Combustion - Coal - Electricity Generation CO$_2$ (851.472)}
\treeitem{5}{1.A.1 Stationary Combustion - Coal - Electricity Generation N$_2$O (18.1526)}
\treeitem{4}{1.A.1 Stationary Combustion - Geothermal Energy CO$_2$ (0.380557)}
\treeitem{4}{1.A.1 Stationary Combustion - Natural Gas - Electricity Generation (663.75)}
\treeitem{5}{1.A.1 Stationary Combustion - Natural Gas - Electricity Generation CH$_4$ (1.03868)}
\treeitem{5}{1.A.1 Stationary Combustion - Natural Gas - Electricity Generation CO$_2$ (659.305)}
\treeitem{5}{1.A.1 Stationary Combustion - Natural Gas - Electricity Generation N$_2$O (3.40572)}
\treeitem{4}{1.A.1 Stationary Combustion - Oil - Electricity Generation (20.5295)}
\treeitem{5}{1.A.1 Stationary Combustion - Oil - Electricity Generation CH$_4$ (0.00149642)}
\treeitem{5}{1.A.1 Stationary Combustion - Oil - Electricity Generation CO$_2$ (20.5222)}
\treeitem{5}{1.A.1 Stationary Combustion - Oil - Electricity Generation N$_2$O (0.00580066)}
\treeitem{4}{1.A.1 Stationary Combustion - Wood - Electricity Generation (0.0191025)}
\treeitem{5}{1.A.1 Stationary Combustion - Wood - Electricity Generation CH$_4$ (0.0018255)}
\treeitem{5}{1.A.1 Stationary Combustion - Wood - Electricity Generation N$_2$O (0.017277)}
\treeitem{3}{1.A.2 (804.678)}
\treeitem{4}{1.A.2 Stationary Combustion - Coal - Industrial CO$_2$ (43.0351)}
\treeitem{4}{1.A.2 Stationary Combustion - Industrial (3.61476)}
\treeitem{5}{1.A.2 Stationary Combustion - Industrial CH$_4$ (1.58172)}
\treeitem{5}{1.A.2 Stationary Combustion - Industrial N$_2$O (2.03305)}
\treeitem{4}{1.A.2 Stationary Combustion - Natural Gas - Industrial CO$_2$ (510.385)}
\treeitem{4}{1.A.2 Stationary Combustion - Oil - Industrial CO$_2$ (247.644)}
\treeitem{3}{1.A.3 (1765.82)}
\treeitem{4}{1.A.3.a Transportation: Aviation (167.002)}
\treeitem{5}{1.A.3.a Transportation: Aviation CH$_4$ (0.0367006)}
\treeitem{5}{1.A.3.a Transportation: Aviation CO$_2$ (165.612)}
\treeitem{5}{1.A.3.a Transportation: Aviation N$_2$O (1.35306)}
\treeitem{4}{1.A.3.b Transportation: Road (1447.94)}
\treeitem{5}{1.A.3.b Transportation: Road CH$_4$ (0.910629)}
\treeitem{5}{1.A.3.b Transportation: Road CO$_2$ (1438.14)}
\treeitem{5}{1.A.3.b Transportation: Road N$_2$O (8.89341)}
\treeitem{4}{1.A.3.c Transportation: Railways (32.826)}
\treeitem{5}{1.A.3.c Transportation: Railways CH$_4$ (0.0752349)}
\treeitem{5}{1.A.3.c Transportation: Railways CO$_2$ (32.5229)}
\treeitem{5}{1.A.3.c Transportation: Railways N$_2$O (0.227854)}
\treeitem{4}{1.A.3.d Transportation: Domestic Navigation (41.6155)}
\treeitem{5}{1.A.3.d Transportation: Domestic Navigation CH$_4$ (0.469529)}
\treeitem{5}{1.A.3.d Transportation: Domestic Navigation CO$_2$ (40.8857)}
\treeitem{5}{1.A.3.d Transportation: Domestic Navigation N$_2$O (0.260323)}
\treeitem{4}{1.A.3.e Transportation: Other (76.4292)}
\treeitem{5}{1.A.3.e Transportation: Other CH$_4$ (1.11726)}
\treeitem{5}{1.A.3.e Transportation: Other CO$_2$ (69.3499)}
\treeitem{5}{1.A.3.e Transportation: Other N$_2$O (5.96204)}
\treeitem{3}{1.A.4 (599.547)}
\treeitem{4}{1.A.4.a (260.473)}
\treeitem{5}{1.A.4.a Stationary Combustion - Coal - Commercial CO$_2$ (1.38813)}
\treeitem{5}{1.A.4.a Stationary Combustion - Commercial (1.73966)}
\treeitem{6}{1.A.4.a Stationary Combustion - Commercial CH$_4$ (1.41059)}
\treeitem{6}{1.A.4.a Stationary Combustion - Commercial N$_2$O (0.329069)}
\treeitem{5}{1.A.4.a Stationary Combustion - Natural Gas - Commercial CO$_2$ (192.262)}
\treeitem{5}{1.A.4.a Stationary Combustion - Oil - Commercial CO$_2$ (65.0824)}
\treeitem{4}{1.A.4.b (339.074)}
\treeitem{5}{1.A.4.b Stationary Combustion - Coal - Residential CO$_2$ (0)}
\treeitem{5}{1.A.4.b Stationary Combustion - Natural Gas - Residential CO$_2$ (271.987)}
\treeitem{5}{1.A.4.b Stationary Combustion - Oil - Residential CO$_2$ (62.0785)}
\treeitem{5}{1.A.4.b Stationary Combustion - Residential (5.00875)}
\treeitem{6}{1.A.4.b Stationary Combustion - Residential CH$_4$ (4.30891)}
\treeitem{6}{1.A.4.b Stationary Combustion - Residential N$_2$O (0.699838)}
\treeitem{3}{1.A.5 (130.245)}
\treeitem{4}{1.A.5 Non-Energy Use of Fuels CO$_2$ (102.808)}
\treeitem{4}{1.A.5 Stationary Combustion - Coal - U.S. Territories CO$_2$ (2.88857)}
\treeitem{4}{1.A.5 Stationary Combustion - Natural Gas - U.S. Territories CO$_2$ (2.72723)}
\treeitem{4}{1.A.5 Stationary Combustion - Oil - U.S. Territories CO$_2$ (16.9588)}
\treeitem{4}{1.A.5 Stationary Combustion - U.S. Territories (0.0861103)}
\treeitem{5}{1.A.5 Stationary Combustion - U.S. Territories CH$_4$ (0.0350626)}
\treeitem{5}{1.A.5 Stationary Combustion - U.S. Territories N$_2$O (0.0510477)}
\treeitem{4}{1.A.5.b Transportation: Military (4.77658)}
\treeitem{5}{1.A.5.b Transportation: Military CH$_4$ (0.000279364)}
\treeitem{5}{1.A.5.b Transportation: Military CO$_2$ (4.77618)}
\treeitem{5}{1.A.5.b Transportation: Military N$_2$O (0.00012299)}
\treeitem{2}{1.B (332.264)}
\treeitem{3}{1.B.1 (52.3999)}
\treeitem{4}{1.B.1 Coal Mining CO$_2$ (2.47399)}
\treeitem{4}{1.B.1 Fugitive Emissions from Abandoned Underground Coal Mines CH$_4$ (6.29926)}
\treeitem{4}{1.B.1 Fugitive Emissions from Coal Mining CH$_4$ (43.6266)}
\treeitem{3}{1.B.2 (279.864)}
\treeitem{4}{1.B.2 Abandoned Oil and Natural Gas Wells (8.50287)}
\treeitem{5}{1.B.2 Abandoned Oil and Natural Gas Wells CH$_4$ (8.49511)}
\treeitem{5}{1.B.2 Abandoned Oil and Natural Gas Wells CO$_2$ (0.00775717)}
\treeitem{4}{1.B.2 Natural Gas Systems (209.733)}
\treeitem{5}{1.B.2 Natural Gas Systems CH$_4$ (173.111)}
\treeitem{5}{1.B.2 Natural Gas Systems CO$_2$ (36.47)}
\treeitem{5}{1.B.2 Natural Gas Systems N$_2$O (0.151986)}
\treeitem{4}{1.B.2 Petroleum Systems (61.6286)}
\treeitem{5}{1.B.2 Petroleum Systems CH$_4$ (39.6145)}
\treeitem{5}{1.B.2 Petroleum Systems CO$_2$ (21.9666)}
\treeitem{5}{1.B.2 Petroleum Systems N$_2$O (0.0475097)}
\treeitem{1}{2 Industrial Processes and Product Use (383.183)}
\treeitem{2}{2.A (66.4314)}
\treeitem{3}{2.A.1 Cement Production CO$_2$ (41.8844)}
\treeitem{3}{2.A.2 Lime Production CO$_2$ (12.2075)}
\treeitem{3}{2.A.3 Glass Production CO$_2$ (1.9558)}
\treeitem{3}{2.A.4 Other Process Uses of Carbonates CO$_2$ (10.3836)}
\treeitem{2}{2.B (77.5157)}
\treeitem{3}{2.B.1 Ammonia Production CO$_2$ (12.6098)}
\treeitem{3}{2.B.2 Nitric Acid Production N$_2$O (8.6125)}
\treeitem{3}{2.B.3 Adipic Acid Production N$_2$O (2.0888)}
\treeitem{3}{2.B.4 Caprolactam, Glyoxal, and Glyoxylic Acid Production N$_2$O (1.3356)}
\treeitem{3}{2.B.5 Silicon Carbide Production and Consumption (0.222757)}
\treeitem{4}{2.B.5 Silicon Carbide Production and Consumption CH$_4$ (0.012992)}
\treeitem{4}{2.B.5 Silicon Carbide Production and Consumption CO$_2$ (0.209765)}
\treeitem{3}{2.B.6 Titanium Dioxide Production CO$_2$ (1.474)}
\treeitem{3}{2.B.7 Soda Ash Production CO$_2$ (1.70399)}
\treeitem{3}{2.B.8 Petrochemical Production (28.7928)}
\treeitem{4}{2.B.8 Petrochemical Production CH$_4$ (0.004788)}
\treeitem{4}{2.B.8 Petrochemical Production CO$_2$ (28.788)}
\treeitem{3}{2.B.9 Fluorochemical Production PFC, HFC, SF$_6$, NF$_3$ (7.7828)}
\treeitem{3}{2.B.10 (12.8927)}
\treeitem{4}{2.B.10 Carbon Dioxide Consumption CO$_2$ (5)}
\treeitem{4}{2.B.10 Phosphoric Acid Production CO$_2$ (0.84009)}
\treeitem{4}{2.B.10 Urea Consumption for Non-Ag Purposes CO$_2$ (7.05256)}
\treeitem{2}{2.C (46.7469)}
\treeitem{3}{2.C.1 Iron and Steel Production \& Metallurgical Coke Production (40.6796)}
\treeitem{4}{2.C.1 Iron and Steel Production \& Metallurgical Coke Production CH$_4$ (0.00771267)}
\treeitem{4}{2.C.1 Iron and Steel Production \& Metallurgical Coke Production CO$_2$ (40.6719)}
\treeitem{3}{2.C.2 Ferroalloy Production (1.33736)}
\treeitem{4}{2.C.2 Ferroalloy Production CH$_4$ (0.0104055)}
\treeitem{4}{2.C.2 Ferroalloy Production CO$_2$ (1.32695)}
\treeitem{3}{2.C.3 Aluminum Production (2.20261)}
\treeitem{4}{2.C.3 Aluminum Production CO$_2$ (1.44634)}
\treeitem{4}{2.C.3 Aluminum Production PFCs (0.756268)}
\treeitem{3}{2.C.4 Magnesium Production and Processing (1.15229)}
\treeitem{4}{2.C.4 Magnesium Production and Processing CO$_2$ (0.00294143)}
\treeitem{4}{2.C.4 Magnesium Production and Processing HFCs (0.0288021)}
\treeitem{4}{2.C.4 Magnesium Production and Processing SF$_6$ (1.12054)}
\treeitem{3}{2.C.5 Lead Production CO$_2$ (0.4275)}
\treeitem{3}{2.C.6 Zinc Production CO$_2$ (0.947466)}
\treeitem{2}{2.E Electronics Industry (4.73432)}
\treeitem{3}{2.E Electronics Industry N$_2$O (0.295248)}
\treeitem{3}{2.E Electronics Industry PFC, HFC, SF$_6$, NF$_3$ (4.43907)}
\treeitem{2}{2.F (178.134)}
\treeitem{3}{2.F.1 Emissions from Substitutes for Ozone Depleting Substances: Refrigeration and Air conditioning HFCs, PFCs (144.637)}
\treeitem{3}{2.F.2 Emissions from Substitutes for Ozone Depleting Substances: Foam Blowing Agents HFCs, PFCs (11.6871)}
\treeitem{3}{2.F.3 Emissions from Substitutes for Ozone Depleting Substances: Fire Protection HFCs, PFCs (2.64203)}
\treeitem{3}{2.F.4 Emissions from Substitutes for Ozone Depleting Substances: Aerosols HFCs, PFCs (17.0371)}
\treeitem{3}{2.F.5 Emissions from Substitutes for Ozone Depleting Substances: Solvents HFCs, PFCs (2.13094)}
\treeitem{2}{2.G (9.62075)}
\treeitem{3}{2.G Electrical Equipment PFC, SF$_6$ (5.07839)}
\treeitem{3}{2.G Other Product Manufacture and Use (4.54236)}
\treeitem{4}{2.G Other Product Manufacture and Use N$_2$O (3.7503)}
\treeitem{4}{2.G Other Product Manufacture and Use PFC, HFC, SF$_6$ (0.792061)}
\treeitem{1}{3 Agriculture (593.383)}
\treeitem{2}{3.A (192.578)}
\treeitem{3}{3.A.1 Enteric Fermentation: Cattle CH$_4$ (185.9)}
\treeitem{3}{3.A.4 Enteric Fermentation: Other Livestock CH$_4$ (6.67784)}
\treeitem{2}{3.B (81.7182)}
\treeitem{3}{3.B.1 Manure Management: Cattle (50.3113)}
\treeitem{4}{3.B.1 Manure Management: Cattle CH$_4$ (37.7265)}
\treeitem{4}{3.B.1 Manure Management: Cattle N$_2$O (12.5848)}
\treeitem{3}{3.B.4 Manure Management: Other Livestock (31.4069)}
\treeitem{4}{3.B.4 Manure Management: Other Livestock CH$_4$ (26.998)}
\treeitem{4}{3.B.4 Manure Management: Other Livestock N$_2$O (4.40883)}
\treeitem{2}{3.C Rice Cultivation CH$_4$ (18.8673)}
\treeitem{2}{3.D (290.802)}
\treeitem{3}{3.D.1 Direct Agricultural Soil Management N$_2$O (262.477)}
\treeitem{3}{3.D.2 Indirect Applied Nitrogen N$_2$O (28.3245)}
\treeitem{2}{3.F Field Burning of Agricultural Residues (0.822987)}
\treeitem{3}{3.F Field Burning of Agricultural Residues CH$_4$ (0.620643)}
\treeitem{3}{3.F Field Burning of Agricultural Residues N$_2$O (0.202344)}
\treeitem{2}{3.G Liming CO$_2$ (3.268)}
\treeitem{2}{3.H Urea Fertilization CO$_2$ (5.32742)}
\treeitem{1}{5 Waste (179.56)}
\treeitem{2}{5.A (119.767)}
\treeitem{3}{5.A Commercial Landfills CH$_4$ (100.856)}
\treeitem{3}{5.A Industrial Landfills CH$_4$ (18.9112)}
\treeitem{2}{5.B (4.42602)}
\treeitem{3}{5.B Composting (4.41264)}
\treeitem{4}{5.B Composting CH$_4$ (2.58076)}
\treeitem{4}{5.B Composting N$_2$O (1.83188)}
\treeitem{3}{5.B.2 Anaerobic Digestion at Biogas Facilities CH$_4$ (0.0133801)}
\treeitem{2}{5.C.1 Incineration of Waste (12.691)}
\treeitem{3}{5.C.1 Incineration of Waste CH$_4$ (0.000140888)}
\treeitem{3}{5.C.1 Incineration of Waste CO$_2$ (12.3575)}
\treeitem{3}{5.C.1 Incineration of Waste N$_2$O (0.333344)}
\treeitem{2}{5.D (42.6765)}
\treeitem{3}{5.D Domestic Wastewater Treatment (34.9967)}
\treeitem{4}{5.D Domestic Wastewater Treatment CH$_4$ (13.5907)}
\treeitem{4}{5.D Domestic Wastewater Treatment N$_2$O (21.406)}
\treeitem{3}{5.D Industrial Wastewater Treatment (7.67978)}
\treeitem{4}{5.D Industrial Wastewater Treatment CH$_4$ (7.21399)}
\treeitem{4}{5.D Industrial Wastewater Treatment N$_2$O (0.465789)}

\end{document}